\newcommand{\apx}[1][]{\mathcal K_{#1}}
\def\dom{\mathfrak D}
\def\oSpdl{\mathsf{o}\Spdl}
\def\lengthof#1{\ell({#1})}
\def\mysup{\bigsqcup}
\def\strat{\mathfrak s}
\def\mces{strategy\xspace}
\def\cf#1{\mathsf{cf}\left(#1\right)}
\def\dK{\mathcal K}
\def\hbh{\mathfrak{h}}
\newcommand{\resp}[1]{(resp. #1)\xspace}
\mathchardef\mhyphen="2D
\def\N{\mathbb N}
\def\sizeof#1{\left|#1\right|}
\renewcommand{\emptyset}{\varnothing}
\def\set#1{\{#1\}}
\def\Set#1{\left\{#1\right\}}
\def\intset#1#2{\set{#1, \ldots, #2}}
\newcommand{\tuple}[1]{\ensuremath{\langle #1 \rangle}}
\def\IH{\mathbf{IH}} 
\def\CCS{\ensuremath{\mathsf{CCS}}\xspace}
\def\CCSs{\ensuremath{\mathsf{CCS^\ast}}}
\def\ppar{\,|\,}
\def\Act{\tcol{\mathit{Act}}}
\def\peq{\equiv}
\def\prer{\rname{Pre}}
\def\comr{\rname{Com}}
\def\parr{\rname{Par}}
\def\sumr{\rname{Sum}}
\def\resr{\rname{Res}}
\def\recr{\rname{Rec}}
\def\defeq{\stackrel{\mathsf{def}}{=}}
\def\cneg#1{\overline{#1}}
\def\FL#1{\mathsf{FL}(#1)}
\def\PDLf{\mathcal F_{\PDL}}
\def\PDLform{\PDL-formula\xspace}
\def\DL{\ensuremath{\mathsf{DL}}\xspace}
\def\PDL{\ensuremath{\mathsf{PDL}}\xspace}
\newcommand{\POL}[1][O]{\mathsf{#1}\PDL}
\def\OPDL{\ensuremath{\POL}\xspace}
\def\model{\mathfrak M}
\def\mean{\mathfrak m}
\def\meanof#1{\mean\left(#1\right)}
\def\typepol#1{{\color{blue}#1}}
\def\imp{\Rightarrow}
\def\fequiv{\Leftrightarrow}
\def\atomSet{\typepol{\mathcal A}}
\def\fP{\typepol{p}}
\def\nfP{\cneg \fP}
\def\fA{\typepol{\phi}}
\def\fB{\typepol{\psi}}
\def\fC{\typepol{\chi}}
\def\nfA{\typepol{\cneg\phi}}
\def\nfB{\typepol{\cneg\psi}}
\def\fGam{\typepol{\Gamma}}
\def\fDel{\typepol{\Delta}}
\def\fSig{\typepol{\Sigma}}
\def\ftrue{\typepol{\top}}
\def\ffalse{\typepol{\bot}}
\newcommand{\tbox}[1][~]{\left[#1\right]}
\newcommand{\tdia}[1][~]{\left\langle#1\right\rangle}
\def\kstarsymb{\ast}
\def\kstar#1{#1^\kstarsymb}
\def\tcol#1{{\color{pzgreen} #1}}
\def\labSet{\tcol{\mathbb L}}
\newcommand{\sosto}[2][\qquad]{
	\vhid1 
	#1
	\vhid2
	\OSledges{hid1/hid2/#2}
}
\def\tm{\tcol \mu}
\def\tl{\tcol \lambda}
\def\ttest#1{\tcol{#1 ?}}
\def\tfA{\ttest\phi}
\def\tfB{\ttest\psi}
\def\tat{\tcol a}
\def\ntat{\tcol {\cneg a}}
\def\ta{\tcol \alpha}
\def\tb{\tcol \beta}
\def\tc{\tcol \gamma}
\def\tI{\tcol I}
\def\ttau{\tcol \tau}
\def\tepsi{\tcol \epsilon}
\def\tempt{\tcol \emptyset}
\def\pcol#1{{\color{pzred} #1}}
\def\progSet{\pcol{\mathbb P}}
\def\atomProg{\pcol{\mathbb I}}
\def\atomTests{\pcol{\mathbb T}}
\def\pl{\pcol \lambda}
\def\ppi{\pcol \pi}
\def\pnil{\pcol {\boldsymbol 0}}
\def\pt{\pcol t}
\def\pat{\pcol a}
\def\npat{\pcol {\cneg a}}
\def\pP{\pcol P}
\def\pQ{\pcol Q}
\def\pR{\pcol R}
\def\pa{\pcol \alpha}
\def\pb{\pcol \beta}
\def\pc{\pcol \gamma}
\def\pepsi{\pcol \epsilon}
\def\pres#1#2{\pcol {#2{\setminus}#1}}
\def\pX{\pcol X}
\newcommand{\OS}[1][]{\mathsf{OS}_{#1}}
\def\Spdl{\mathsf{LPD}}
\def\Spdlcut{\Spdl^{\cutr}}
\def\pSpdl{\mathsf{p}\Spdl}
\def\pSpdlcut{\mathsf{p}\Spdlcut}
\def\Sdol{\mathsf{LOPD}}
\def\pSdol{\mathsf{p}\Sdol}
\def\mprule{\rname{MP}}
\def\necrule{\rname{NEC}}
\def\looprule{\rname{LI}}
\def\axA#1{\mathbf{A_{#1}}}
\newcommand*{\rname}[1]{\ensuremath{\textsc{\MakeLowercase{#1}}}}
\def\AXrule{\rname{AX}}
\def\Wrule{\rname{W}}
\def\Krule{\rname{K}_\alpha}
\def\Krulep#1{\rname{K}_{#1}}
\def\Trule{\ftrue}
\def\cutr{\mathsf{cut}}
\def\rrule{\mathsf r}
\def\tree{\mathcal T}
\def\branch{\mathcal B}
\def\Sys{\mathsf{S}}
\def\X{\mathsf{X}}
\def\dD{\mathcal D}
\def\dK{\mathcal K}
\def\btestr{\tbox[?]}
\def\bseqr{\tbox[;]}
\def\bplusr{\tbox[\oplus]}
\def\bstarr{\tbox[\kstarsymb]}
\def\dtestr{\tdia[?]}
\def\dseqr{\tdia[;]}
\def\dplusr{\tdia[\oplus]}
\def\dstarr{\tdia[\kstarsymb]}
\def\bosr{\tbox[\OS]}
\def\dosr{\tdia[\OS]}
\newcommand*{\proves}[1][]{\vdash_{#1}}
\def\cutelim{\rightsquigarrow}
\def\chorpol#1{\pcol{#1}}
\def\iI{\chorpol I}
\def\cC{\chorpol C}
\def\cX{\chorpol X}
\newcommand{\pn}{\mathop{\mathsf{pn}}}
\def\cnil{\chorpol{\boldsymbol 0}}
\newcommand{\pid}[1]{\mathsf{#1}}
\newcommand{\tto}{\mathbin{\boldsymbol{\rightarrow}}}
\def\pp{\pid p}
\def\pq{\pid q}
\newcommand{\lbl}[1]{\textsc{#1}}
\newcommand{\albl}{\lbl{l}}
\newcommand{\condif}{\mathop{\mathsf{if}}}
\newcommand{\condthen}{\mathop{\mathsf{then}}}
\newcommand{\condelse}{\mathop{\mathsf{else}}}
\newcommand{\com}[4]{\chorpol{\pid{#1}.{#2} \tto\pid{#3}.{#4}}}
\newcommand{\sel}[3]{\chorpol{\pid{#1}\tto\pid{#2}[#3]}}
\newcommand{\cont}[2]{\chorpol{#2\colon #1}}
\newcommand{\assign}[3]{\chorpol{\pid{#1}.{#2}\mathbin{\coloneqq}{#3}}}
\newcommand{\ccond}[4]{\chorpol{\condif \pid{#1}.{#2} \condthen #3 \condelse #4}}
\newcommand{\tsel}[3]{\tcol{\pid{#1}\tto\pid{#2}[#3]}}
\newcommand{\tgensel}{\tsel pq\albl}
\newcommand{\tcont}[2]{\tcol{#2\colon #1}}
\newcommand{\tgencont}{\tcont X{\pid q}}
\newcommand{\tassign}[3]{\tcol{\pid{#1}.{#2}\mathbin{\coloneqq}{#3}}}
\newcommand{\tgenassign}{\tassign pxe}
\newcommand{\tcom}[4]{\tcol{\pid{#1}.{#2} \tto\pid{#3}.{#4}}}
\newcommand{\tgencom}{\tcom peqx}
\newcommand{\gensel}{\sel pq\albl}
\newcommand{\gencom}{\com peqx}
\newcommand{\gencall}{\chorpol{X}}
\newcommand{\genassign}{\assign pxe}
\newcommand{\gencond}{\ccond pb{C_1}{C_2}}
\newcommand{\gentest}{\ttest{\pp.b}}
\newcommand{\ngentest}{\ttest{\cneg{\pp.b}}}
\def\seqEval#1#2{e \downarrow_\ldia v}
\def\ratm{\rname{Atomic}}
\def\rthen{\rname{Cond-Then}}
\def\relse{\rname{Cond-Else}}
\def\rcall{\rname{Call}}
\def\ri{\rname{I}}
\def\rdeli{\rname{Delay-I}}
\def\rdelc{\rname{Delay-Cond}}
\def\qquand{\quad\mbox{and}\qquad}
\def\defn#1{\textbf{#1}}
\newtheorem{theorem}{Theorem}
\newtheorem{lemma}[theorem]{Lemma}
\newtheorem{proposition}[theorem]{Proposition}
\newtheorem{corollary}[theorem]{Corollary}
\theoremstyle{definition}
\newtheorem{definition}[theorem]{Definition}
\newtheorem{example}[theorem]{Example}
\newtheorem{nota}[theorem]{Notation}
\newtheorem{remark}[theorem]{Remark}
\newcommand{\mydot}{\, .}
\begin{document}
\title{On Propositional Dynamic Logic and Concurrency}


\author{Matteo Acclavio}
\affiliation{%
	\institution{University of Southern Denmark}
	\city{Odense}
	\country{Denmark}
}

\author{Fabrizio Montesi}
\affiliation{%
	\institution{University of Southern Denmark}
	\city{Odense}
	\country{Denmark}
}

\author{Marco Peressotti}
\affiliation{%
	\institution{University of Southern Denmark}
	\city{Odense}
	\country{Denmark}
}

\renewcommand{\shortauthors}{Acclavio et al.}

\begin{abstract}
	Dynamic logic is a powerful approach to reasoning about programs and their executions, obtained by extending classical logic with modalities that can express program executions as formulas.
	However, the use of dynamic logic in the setting of concurrency has proved problematic because of the challenge of capturing interleaving.
	This challenge stems from the fact that, traditionally, programs are represented by their sets of traces. These sets are then expressed as elements of a Kleene algebra, for which it is not possible to decide equality in the presence of the commutations required to model interleaving.
	
	In this work, we generalise propositional dynamic logic (PDL) to a logic framework we call operational propositional dynamic logic (OPDL), which departs from tradition by distinguishing programs from their traces. Traces are generated by an arbitrary operational semantics that we take as a parameter, making our approach applicable to different program syntaxes and semantics.
	To develop our framework, we provide the first proof of cut-elimination for a finitely-branching non-wellfounded sequent calculus for PDL.
	Thanks to this result we can effortlessly prove adequacy for PDL, and extend these results to OPDL. 
	We conclude by discussing OPDL for two representative cases of concurrency: the Calculus of Communicating Systems (CCS), where interleaving is obtained by parallel composition, and Choreographic Programming, where interleaving is obtained by out-of-order execution.
\end{abstract}

\begin{CCSXML}
<ccs2012>
 <concept>
  <concept_id>00000000.0000000.0000000</concept_id>
  <concept_desc>Do Not Use This Code, Generate the Correct Terms for Your Paper</concept_desc>
  <concept_significance>500</concept_significance>
 </concept>
 <concept>
  <concept_id>00000000.00000000.00000000</concept_id>
  <concept_desc>Do Not Use This Code, Generate the Correct Terms for Your Paper</concept_desc>
  <concept_significance>300</concept_significance>
 </concept>
 <concept>
  <concept_id>00000000.00000000.00000000</concept_id>
  <concept_desc>Do Not Use This Code, Generate the Correct Terms for Your Paper</concept_desc>
  <concept_significance>100</concept_significance>
 </concept>
 <concept>
  <concept_id>00000000.00000000.00000000</concept_id>
  <concept_desc>Do Not Use This Code, Generate the Correct Terms for Your Paper</concept_desc>
  <concept_significance>100</concept_significance>
 </concept>
</ccs2012>
\end{CCSXML}

\ccsdesc[500]{Do Not Use This Code~Generate the Correct Terms for Your Paper}
\ccsdesc[300]{Do Not Use This Code~Generate the Correct Terms for Your Paper}
\ccsdesc{Do Not Use This Code~Generate the Correct Terms for Your Paper}
\ccsdesc[100]{Do Not Use This Code~Generate the Correct Terms for Your Paper}

\keywords{Propositional Dynamic Logic, Concurrency, Cut-elimination, Process Calculi, Choreographies}



\maketitle

\section{Introduction}
\label{sec:intro}

Logic, and in particular proof theory, offers several approaches to reason about different computational properties of programs.
In the Curry-Howard correspondence, programs are represented by proofs, thus providing a strong foundation for the development of type systems~\cite{W15,CP10,gir:proot}.
In logic programming, a program is an inference system, which allows for using proof search as the means of execution~\cite{lloyd2012foundations,MILLER1991125}.
In \emph{dynamic logic} (\DL), programs are part of the language of formulas itself, which enables the direct use of the logic to reason about the semantics of programs~\cite{DLbook}. Under the latter view, the purpose of programs is to change the truth value of a formula.
At the syntactic level, each program $\ta$
defines the modalities $\tbox[\ta]$ and $\tdia[\ta]$
and a formula $\tbox[\ta] \fA$ is interpreted as 
`every state reached after executing $\ta$ satisfies the formula $\fA$' while a formula $\tdia[\ta] \fA$ is interpreted as `there is a state reached after executing $\ta$ satisfying the formula $\fA$'.
This idea has been of profound inspiration in the field of formal verification~\cite{stirling1991local,CGKSVWW13}.
In this work, we are interested in the propositional fragment of dynamic logic (Propositional Dynamic Logic, or \PDL)~\cite{PDLcompleteness}.

\paragraph{\PDL and the concurrency problem}
While \PDL has been successfully applied to the study of sequential programs, extending this approach to concurrent programs remains challenging.
In standard \PDL, a program is represented by a regular expression that describes its set of possible traces.
In other words, programs are elements of a free Kleene algebra.
This works well for sequential programs, because one obtains that the theory of equational reasoning for Kleene algebras is a complete system for reasoning about \emph{trace equivalence}~\cite{hopcroft2001introduction,K97,KBRSWZ19,SKS23}.
Trace equivalence is therefore captured by logical equivalence in \PDL:
\begin{equation}
	\mbox{$\ta$ and $\tb$ have the same traces}
	\quad
	\mbox{iff}
	\quad
	\proves[\PDL] \tbox[\ta]\fA \Leftrightarrow \tbox[\tb]\fA
	\qquad
	\mbox{for all } \fA
	\mydot
\end{equation}

However, the case of concurrent programs with an interleaving semantics is more problematic. In the presence of interleaving, one expects traces differing by interleaving to be equivalent modulo equations of the form $\ta;\tb=\tb;\ta$ (called \emph{commutations}).
Unfortunately, the word problem in a Kleene algebra enriched with an equational theory containing such commutations is known to be undecidable%
\footnote{
	This is proven for star-continuous Kleene algebras in ~\cite{kozen:und} by reducing the Post correspondence problem to the word problem by combining sequential composition, iteration, and commutations.
	The result has recently been generalized to the general case in \cite{aze:zhang:gabo:kleene}.
}, 
which makes undecidable checking whether two modalities in \PDL are the same.
For the same reason, a general treatment of concurrency is still elusive also for simpler approaches than \PDL, like equational reasoning on programs represented as terms in Kleene algebras~\cite{HMSW11,HSMSZ16,BPS17,KB0WZ20}.\footnote{%
	While the pure algebraic setting based on Kleene algebras is strictly less expressive than \PDL, the missing expressivity (propositional reasoning on states reached after performing actions in a given trace) can be recovered by considering the (much) more complex structure of Kleene modules \cite{kleene_modules}.
}

As a consequence of this problem, applications of \PDL to concurrency fall short of the expected level of expressivity from established theories, like \CCS~\cite{M80} and the $\pi$-calculus~\cite{mil:par:wal:pi}.
For example, previous works lack nested parallel composition, synchronisation, or recursion~\cite{PDL:interleaving,PDL:pi,pel:CDL,pel:concurrentschemes,pel:concurrentCom,BENEVIDES201723}.
In general, adding any new concurrency feature (e.g., a construct in the language of programs or a law defining its semantics) requires great care and effort in establishing the meta-theoretical properties of the logic.
The result: a literature of various propositional dynamic logics, all independently useful, but with different limitations and dedicated technical developments.
\footnote{
	Another line of work, started by Peleg in \cite{peleg:CPDL}, defines \emph{concurrent $\PDL$} ($\mathsf{CPDL}$) by considering an additional constructor for programs (denoted $\cap$) that, allegedly, allows for modelling parallelism.
	However, this construct is clearly not sound with respect to the general semantics of parallel, since the axiom for the $\tdia[\cap]$ is defined as follows: $\tdia[{\ta\cap \tb}]\fA \fequiv \tdia[\ta] \fA \land \tdia[\tb]\fA$. 
	This does not model a parallel execution of $\ta$ and $\tb$, but rather another for of non-deterministic choice between the two, other than the $\oplus$ (which is denoted by $\cup$ in this work), requiring that the choice satifies some sort of coherence in its continuation.
	For an example of why such construct should not as an incarnation of the parallelism, consider a formula $\fA$ which is true whenever two memory cells $c_\alpha$ and $c_\beta$ both store the value $0$, and two independent programs $\ta$ (setting $c_\alpha$ to $0$) and $\tb$ (setting $c_\beta$ to $0$).
	In this case, if $\cap$ is interpreted as the parallel composition, we clearly have that $\tdia[\ta\cap\tb]\fA$ should be valid in any possible model. However, the axiom states that $\tdia[\ta]\fA \land \tdia[\tb]\fA$ should also be true, which is clearly not the case.
}

\paragraph{Main contributions and structure of the paper}

In this work, we significantly advance the line of work on \PDL by developing \emph{operational propositional dynamic logic} ($\POL$).
The key innovation of $\POL$ is to distinguish and separate reasoning on programs from reasoning on their traces.
Thanks to this distinction, we circumvent previous limitations and finally obtain a \PDL that can be applied to established concurrency models, such as \CCS~\cite{M80} and choreographic programming~\cite{M13:phd}.
Crucially, $\POL$ is a general framework: it is parameterised on the operational semantics used to generate traces from programs, yielding a simple yet reusable approach to characterise trace reasoning.

We proceed as described next.
After recalling the axiomatization and semantics of $\PDL$ in \Cref{sec:PDL}, 
in \Cref{sec:seqPDL} we provide a proof of its soundness and completeness with respect to the sequent calculus introduced in \cite{das:gir:TCL}.
For this purpose, we provide the first cut-elimination result for this non-wellfounded calculus, by adapting the technique developed in \cite{acc:cur:gue:CSL24}.%
\footnote{
	A cut-elimination result for another sequent calculus for $\PDL$ is provided in~\cite{hill:pog:PDL}, but that calculus is fundamentally different: it employs nested sequents and contains rules with an infinite number of premises.
}
This allows us to prove our results by reasoning on the axiomatisation and the sequent system, without directly relying on semantic arguments.

Then, in \Cref{sec:DOL}, we extend $\PDL$ with an additional axiom allowing us to 
encapsulate an operational semantics for a set of programs into the trace reasoning. We call the resulting logic \emph{operational propositional dynamic logic} (or $\POL$), 
providing a general framework encompassing various previous works~\cite{PDL:interleaving,PDL:pi,PDL:flow}.

We show that the expressive power of our framework goes beyond the state of the art in \Cref{sec:conc}, by instantiating it for two use cases of archetypes of concurrent programming languages: the Calculus of Communicating Systems (\CCS)~\cite{M80}, representative of the process algebra approach, and the textbook presentation of choreographic programming~\cite{montesi:book}, representative of languages inspired by the \emph{Alice-and-Bob} notation that originates from security protocols.
These two cases are interesting because they model concurrency in completely different ways: in
$\CCS$, concurrency is obtained through an explicit parallel operator equipped with an interleaving semantics, while in choreographic programming concurrency is obtained implicitly by executing instructions out of order whenever they involve different processes.
Thus, $\POL$ advances the study of $\PDL$ with leaps in both expressivity and versatility.

We conclude and discuss future work in \Cref{sec:perp}.

Related work is discussed, where relevant, as part of our development (in addition to the works mentioned in this introduction).

\section{Preliminary notions on propositional dynamic logic}\label{sec:PDL}

\begin{figure}
	$$
	\begin{array}{c|c}
		\begin{array}{rlll}
			\fA,\fB	\coloneqq
				&\ftrue				&& \mbox{true}							\\
			\mid&\ffalse			&& \mbox{false}							\\
			\mid&\fP\in\atomSet		&\mbox{(with $\fP\in\atomSet$)}			&
			\mbox{atom (literal)} 											\\
			\mid&\nfP		  		&\mbox{(with $\fP\in\atomSet$)}			& 
			\mbox{negated atom (literal)} 									\\
			\mid&\fA\lor\fB			&& \mbox{disjunction}					\\
			\mid&\fA\land\fB		&& \mbox{conjunction}					\\
			\mid&\tbox[\ta]\fA		&&
			\mbox{box}														\\
			\mid&\tdia[\ta] \fA		&&
			\mbox{diamond}
		\end{array}
	&
		\begin{array}{rlll}
			\ta, \tb\coloneqq
				&\tepsi				&& \mbox{terminated program}			\\
			\mid&\tempt				&& \mbox{stuck program}				\\
			\mid&\pat\in\atomProg	&& \mbox{instruction} 					\\
			\mid&\pt\in\atomTests	&& \mbox{test} 							\\
			\mid&\ta;\tb			&& \mbox{sequential composition}		\\
			\mid&\kstar{\ta}		&& \mbox{iteration}						\\
			\mid&\ta\oplus\tb		&& \mbox{(non-deterministic) choice}
		\end{array} 
	\end{array}
	$$
	\caption{Grammar generating formulas}
	\label{fig:PDLform}
\end{figure}

In this section we recall standard definitions and results for $\PDL$ as presented in \cite{DLbook}.
However, we present a slightly different syntax for formulas (limiting the tests to modalities-free formulas), which is more convenient for our purposes.

	We consider the set $\PDLf$ of \defn{formulas}
	generated from 
	a countable set $\atomSet$ of \defn{propositional atoms},
	a set of \defn{atomic programs} $\atomProg$ and 
	a set of \defn{propositional tests} $\atomTests=\set{\tfA \mid \fA \text{ modalities-free formula in } \PDLf}$
	by the grammars in \Cref{fig:PDLform}.

	\begin{remark}\label{rem:tests}
		The standard language for \PDL considers tests to be defined for any formula, not only for modalities-free ones. 
		The reason we consider this restriction is that the presence of modalities in tests would make the proof of cut-elimination for the sequent calculus in \Cref{sec:seqPDL} much more complex, and it is not needed for our purposes: in the operational semantics of the programs we are interested, tests can be limited to `static' ones, requiring to check a value without executing any program (represented by the modalities in the formula) prior to perform the test itself.
	\end{remark}

	The \defn{(logical) implication} $A\imp B\coloneqq(\cneg A) \lor B$
	is defined by extending the negation from atoms to formulas via the \defn{De Morgan laws}:
	\begin{equation}\label{eq:deMorgan}
		\def\myskip{\hskip12pt}
		\begin{array}{r@{\;=\;}l@{\myskip}r@{\;=\;}l@{\myskip}r@{\;=\;}l@{\myskip}r@{\;=\;}l@{\mydot}}
			\cneg{\ftrue} 			& \ffalse
			&
			\cneg{(\cneg{\fA})} 	& \fA
			&
			\cneg{{\fA \land \fB}}	& (\cneg\fA) \lor (\cneg\fB)
			&
			\cneg{\tbox[\ta] \fA} 	& \tdia[\ta] \cneg\fA
		\end{array}
	\end{equation}

	We write $\proves[\PDL] \fA$ (respectively $\proves[\PDL]\fA_1,\ldots \fA_n$)
	if 
	the formula $\fA$ (respectively $\fA_1\lor\cdots \lor\fA_n$) is derivable from the axioms in \Cref{fig:axPDL}
	using the rules 
	\defn{modus ponens} ($\mprule$), 
	\defn{necessitation} ($\necrule$), 
	\defn{loop invariance} ($\looprule$), 
	from the same figure.
	The \defn{propositional dynamic logic} (or \PDL) is defined as the logic
	of formulas satisfying $\proves[\PDL] \fA$.

\begin{remark}\label{rem:AXind}
	The axiomatization of $\PDL$ is often presented by 
	replacing the loop invariance rule
	with 
	an additional axiom (scheme)  
	$\axA{Ind} \;\colon\; (\fA\land\tbox[\kstar\ta](\fA\imp \tbox[\ta] \fA)) \imp \tbox[\kstar\ta]\fA$
	reminding the induction axiom (scheme) in Peano arithmetic.
	Proving that the two formulations are equivalent is an exercise which can be found in \cite{DLbook}.
\end{remark}

\begin{figure}[t]
	\adjustbox{max width=\textwidth}{$\begin{array}{c|c}
		\begin{array}{l@{\;:\;}l}
			\mathbf{PL}	& \mbox{Axiomatization of propositional classical logic}
			\\
			\mathbf{Neg}	& \tbox[\ta]\fA \fequiv \cneg{\left(\tdia[\ta] \cneg\fA\right)}						
			\\
			\mathbf{K}	& \left(\tbox[\ta](\fA\imp \fB)\right)\imp \left(\tbox[\ta]\fA\imp \tbox[\ta]\fB\right)	
			\\
			\axA\emptyset	&\tbox[\tempt]\fA
			\\
			\axA\epsilon	&\tbox[\tepsi]\fA \fequiv\fA
			\\
			\axA{?}			&\tbox[\tfB]\fA\fequiv (\cneg{\fB}\lor \fA ) 
			\\
			\axA{\oplus}	&\tbox[\ta\oplus \tb]\fA\fequiv (\tbox[\ta]\fA \land \tbox[\tb]\fA) 
			\\
			\axA{;}			&\tbox[\ta;\tb]\fA\fequiv \tbox[\ta]\tbox[\tb]\fA 
			\\
			\axA{\kstarsymb}&\tbox[\kstar\ta]\fA\fequiv (\fA\land\tbox[\ta][\kstar\ta]\fA) 
		\end{array}
	&
		\begin{array}{c}
			\vliinf{\mprule}{}{\vdash \fB}{\vdash \fA}{\vdash \fA\imp\fB}  
		\\\\
			\vlinf{\necrule}{}{\vdash \tbox[\ta] \fA}{\vdash \fA} 
		\\\\
			\vlinf{\looprule}{}{\vdash \fA \imp \tbox[\kstar\ta] \fA}{\vdash \fA\imp \tbox[\ta] \fA} 
		\end{array}
	\end{array}
	$}
	\caption{Axioms and rules for Propositional Dynamic Logic.}
	\label{fig:axPDL}
\end{figure}

Semantically speaking, while a \emph{model} of propositional classical logic is simply an evaluation function assigning a truth value to each formula, models for \PDL are given by \emph{Kripke frames}.
A Kripke frame for classical modal logic is given by a set of \emph{worlds}, 
an \emph{accessibility relation} between worlds, and an \emph{evaluation function} assigning to each formula the set of worlds in which it is true.
Intuitively, a classical model can be seen as a single-world Kripke frame in which the evaluation function assigns to each formula a set containing the unique world of the frame only if it the formula is evaluated as true.
We recall here the formal definition of model for \PDL.

\begin{definition}\label{def:krip}
	A \defn{Kripke frame} (or \defn{model})  
	$\model=\tuple{W,\mean}$ is given by
	a set of \defn{worlds} $W$, 
	a \defn{meaning function} associating 
	to each atom $\fP\in \atomSet$ a set of worlds $\meanof\fP\subseteq W$ (in which $\fP$ holds), and
	to each instruction $\pat\in\atomProg$ an accessibility relation $\meanof\pat\subseteq W\times W$.
	The meaning of compound formulas and programs is defined as shown in the left of \Cref{fig:modPDL}.
	The meaning of a sequent $\fGam=A_1,\ldots, A_n$ is defined as the meaning of the conjunction of the formulas in $\fGam$, that is, $\meanof\fGam= \meanof{\fA_1 \lor \cdots \lor \fA_n}$.

	We write
	$\model, w\vDash \fGam$ if  $w\in\meanof\fGam$ (or simply $w\vDash\fGam$ if $\model$ is clear from the context),
	and
	$\model \vDash \fGam$ if $\model,w \vDash \fGam$ holds for any world $w$ of $\model$.
	Finally we write $\vDash \fGam$ if $\model \vDash \fGam$ holds for any possible model $\model$.
\end{definition}

\begin{figure}[t]
	\adjustbox{max width=\textwidth}{$\begin{array}{l@{\;=\;}l|l@{\;=\;}l}
			\meanof{\ftrue}			& W									
			&
			\meanof{\tepsi}			& \set{(v,v)\mid v\in W}			
			\\
			\meanof{\ffalse}		& \emptyset							
			&
			\meanof{\tempt}			& \emptyset							
			\\
			\meanof{\cneg \fA}  	& W\setminus \meanof\fA				
			&
			\meanof{\tfA}			& \set{(v,v)\mid v\in \meanof\fA}	
			\\
			\meanof{\fA\lor\fB}		& \meanof\fA\cup\meanof\fB			
			&
			\meanof{\ta;\tb}		& \set{(u,w)\!\mid\! \mbox{exists } v \mbox{ s.t. } (u,v)\!\in\!\meanof\ta \mbox{ and } (v,w)\!\in\!\meanof\tb} 
			\\
			\meanof{\fA\land\fB}	& \meanof\fA\cap\meanof\fB			
			&
			\meanof{\ta\oplus\tb}	& \meanof\ta\cup\meanof\tb
			\\
			\meanof{\tbox[\ta]\fA}	& \set{v\mid w\in\meanof\fA  \mbox{ for all }w \mbox{ s.t. }(v,w)\in\meanof\ta}										
			&
			\meanof{\kstar\ta}		& \bigcup_{n\geq0} \meanof{\ta^n} 	\hfill\mbox{(where $\ta^0=\tepsi$)}\hfill
			\\
			\meanof{\tdia[\ta] \fA}	& \set{v\mid w\in\meanof\fA  \mbox{ for a }w \mbox{ s.t. }(v,w)\in\meanof\ta}
		\end{array}$}
	\caption{
		Inductive definition of the meaning of compound formulas and programs in the Kripke frames.
	}
	\label{fig:modPDL}
\end{figure}

The proof of soundness and completeness result of the axioms of \PDL with respect to the semantics can be found in \cite{DLbook}.
In particular, completeness is shown by constructing a model $\model_\fA$ for each consistent formula $\fA$ such that the formula $\fA$ holds in at least a world (i.e. $\meanof{\fA}\neq\emptyset$ in $\model_\fA$).
\begin{theorem}\label{thm:PDLax:sc}
	Let $\fA$ be a \PDLform. 
	Then 
	$\proves[\PDL] \fA$  iff $\vDash_\PDL \fA$.
\end{theorem}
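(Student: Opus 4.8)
The plan is to prove soundness and completeness separately, following the classical route for \PDL as in \cite{DLbook}.

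For \textbf{soundness} ($\proves[\PDL]\fA$ implies $\vDash_\PDL\fA$), I would proceed by induction on the length of the derivation. First I would check that each axiom in \Cref{fig:axPDL} is valid in every Kripke frame: the propositional axioms $\mathbf{PL}$ are valid because $\meanof{\cdot}$ commutes with the Boolean connectives (by the clauses in \Cref{fig:modPDL}); $\mathbf{Neg}$ follows from $W\setminus\set{v\mid \forall w\,(v,w)\in\meanof\ta \imp w\in\meanof{\cneg\fA}} = \set{v\mid \exists w\,(v,w)\in\meanof\ta \land w\in\meanof\fA}$; $\mathbf{K}$ is the standard normality computation; $\axA\emptyset$ holds since $\meanof\tempt=\emptyset$; $\axA\epsilon$, $\axA{?}$, $\axA{\oplus}$, $\axA{;}$ are immediate unfoldings of the corresponding semantic clauses; and $\axA{\kstarsymb}$ follows from $\meanof{\kstar\ta}=\bigcup_{n\ge 0}\meanof{\ta^n}=\meanof\tepsi\cup\meanof{\ta;\kstar\ta}$. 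Then I would check the rules preserve validity: $\mprule$ because if $\meanof\fA=W$ and $\meanof{\fA\imp\fB}=W$ then $\meanof\fB=W$; $\necrule$ because $\meanof\fA=W$ forces $\meanof{\tbox[\ta]\fA}=W$; and $\looprule$ by a side induction on $n$ showing that $\meanof{\fA}\subseteq\meanof{\tbox[\ta^n]\fA}$ for every $n$ whenever $\meanof{\fA\imp\tbox[\ta]\fA}=W$, hence $\meanof\fA\subseteq\meanof{\tbox[\kstar\ta]\fA}$.

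For \textbf{completeness} ($\vDash_\PDL\fA$ implies $\proves[\PDL]\fA$) I would use the contrapositive and the Fischer--Ladner canonical model construction. Given a consistent formula $\fA$ (i.e.\ $\not\proves[\PDL]\cneg\fA$), let $\mathsf{FL}(\fA)$ be its finite Fischer--Ladner closure, take worlds to be the maximal consistent subsets of $\mathsf{FL}(\fA)$, set $\meanof\fP=\set{w\mid \fP\in w}$, and define $\meanof\pat=\set{(v,w)\mid \text{for all } \tbox[\pat]\fB\in v,\ \fB\in w}$; the meaning of compound programs is then forced by \Cref{fig:modPDL}. The crux is the \emph{Truth Lemma}: for every $\fB\in\mathsf{FL}(\fA)$ and every world $w$, $w\vDash\fB$ iff $\fB\in w$. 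This is proved by induction on $\fB$; the Boolean and test cases are routine, the $\oplus$ and $;$ cases use the axioms $\axA{\oplus}$, $\axA{;}$ to reduce to smaller programs, and the delicate case is $\tbox[\kstar\ta]\fB$, which requires showing that if $\tbox[\kstar\ta]\fB\notin w$ then there is a finite $\meanof\ta$-path from $w$ to some world missing $\fB$ — this uses the fact that $\mathsf{FL}(\fA)$ is finite together with the $\looprule$ rule (equivalently the induction axiom $\axA{Ind}$ from \Cref{rem:AXind}) to derive a suitable loop invariant. Finally, extending $\set\fA$ to a maximal consistent $w_0\subseteq\mathsf{FL}(\fA)$ with $\fA\in w_0$ gives $w_0\vDash\fA$ by the Truth Lemma, so $\not\vDash_\PDL\cneg\fA$.

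The \textbf{main obstacle} is the iteration case of the Truth Lemma: unlike the other connectives, $\tbox[\kstar\ta]$ is not locally reducible, and one must import a genuine fixpoint/least-path argument. Concretely, one shows that the set $X=\set{w \mid \tbox[\kstar\ta]\fB\in w}$ is $\meanof\ta$-closed and contained in $\meanof\fB$, and that consistency forces $X$ to be exactly $\meanof{\tbox[\kstar\ta]\fB}$; the reverse containment needs the derived theorem $\proves[\PDL] \widehat{X}\imp\tbox[\kstar\ta]\widehat{X}$ (where $\widehat{X}$ is the disjunction of the characteristic formulas of worlds in $X$, a formula expressible because $\mathsf{FL}(\fA)$ is finite), obtained via $\looprule$. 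Everything else is bookkeeping on the finite closure $\mathsf{FL}(\fA)$. As the excerpt notes, this entire argument is already carried out in \cite{DLbook}, so in the paper it suffices to cite that development; the later sections instead re-derive adequacy through the cut-eliminating sequent calculus, avoiding a direct appeal to this semantic construction.
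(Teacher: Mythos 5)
Your proposal is correct and matches the paper's intent: the paper does not reprove this result but defers to the standard development in \cite{DLbook}, namely soundness by induction on derivations and completeness by building, for each consistent formula $\fA$, a canonical model $\model_\fA$ from maximal consistent subsets of the (finite) Fischer--Ladner closure, with the star case of the Truth Lemma handled via the loop invariance rule (equivalently $\axA{Ind}$ of \Cref{rem:AXind}) -- exactly the construction you sketch. So your reconstruction is essentially the same argument the paper cites, and no further comparison is needed.
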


We conclude by showing the following result, which allows us to interpreted the modality $\tbox[\kstar\ta]$ as a fixpoint for the modality $\tbox[\ta]$.
\begin{lemma}\label{lem:PDL:starImpN}
	If 
	$\not\proves[\PDL]\fA \imp \tbox[\kstar\ta]\fB$, 
	then there is $n\in\N$ such that $\not\proves[\PDL]\fA \imp \tbox[\ta^n]\fB$.
\end{lemma}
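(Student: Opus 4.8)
The plan is to prove the contrapositive: assuming $\proves[\PDL] \fA \imp \tbox[\ta^n]\fB$ for every $n \in \N$, we derive $\proves[\PDL] \fA \imp \tbox[\kstar\ta]\fB$. The natural route is semantic: by \Cref{thm:PDLax:sc} it suffices to show $\vDash_\PDL \fA \imp \tbox[\kstar\ta]\fB$, i.e.\ that in every model $\model$ and every world $v$ with $v \vDash \fA$, we have $v \vDash \tbox[\kstar\ta]\fB$. Unfolding the semantics in \Cref{fig:modPDL}, $\meanof{\kstar\ta} = \bigcup_{n\geq 0}\meanof{\ta^n}$, so $v \vDash \tbox[\kstar\ta]\fB$ exactly means: for all $n$ and all $w$ with $(v,w)\in\meanof{\ta^n}$, we have $w\vDash\fB$. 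But soundness (the forward direction of \Cref{thm:PDLax:sc}) applied to each hypothesis $\proves[\PDL]\fA\imp\tbox[\ta^n]\fB$ gives exactly $v\vDash\tbox[\ta^n]\fB$ for each $n$, which is precisely this condition. Hence $v\vDash\tbox[\kstar\ta]\fB$, and since $\model,v$ were arbitrary, $\vDash_\PDL\fA\imp\tbox[\kstar\ta]\fB$, so completeness yields $\proves[\PDL]\fA\imp\tbox[\kstar\ta]\fB$.

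The one point needing a little care is the meaning of $\ta^n$: we should confirm that $\meanof{\ta^n}$ coincides with the $n$-fold relational composition of $\meanof\ta$ (with $\meanof{\ta^0}=\meanof\tepsi=\set{(v,v)\mid v\in W}$), so that $\meanof{\kstar\ta}$ really is the union over all finite iterations. This is an immediate induction on $n$ using the clause for $\meanof{\ta;\tb}$, and the grammar in \Cref{fig:PDLform} together with axiom $\axA{;}$ makes $\ta^{n+1} = \ta;\ta^n$ (or $\ta^n;\ta$) a legitimate program with the expected semantics. With this identification in hand the argument above goes through verbatim.

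An alternative, purely syntactic proof would try to use the loop invariance rule $\looprule$ directly, but this seems awkward: from $\proves[\PDL]\fA\imp\tbox[\ta^n]\fB$ for all $n$ one does not obviously obtain a single invariant $\fC$ with $\proves[\PDL]\fC\imp\tbox[\ta]\fC$ and $\proves[\PDL]\fA\imp\fC$ and $\proves[\PDL]\fC\imp\fB$, since no finite hypothesis bounds the shape of such a $\fC$. So I would not pursue the syntactic route. The main (and only real) obstacle is thus bookkeeping rather than mathematical depth: making the $\meanof{\ta^n}$ identification precise and invoking soundness and completeness in the right directions. Everything else is a direct unfolding of definitions.
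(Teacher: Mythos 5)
Your proof is correct and takes essentially the same route as the paper: both argue via \Cref{thm:PDLax:sc} (soundness and completeness) together with the semantic identity $\meanof{\kstar\ta}=\bigcup_{n\geq 0}\meanof{\ta^n}$, the paper merely phrasing it as a contradiction against a fixed countermodel of $\fA\imp\tbox[\kstar\ta]\fB$ while you prove the contrapositive directly.
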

\begin{proof}
	By \Cref{thm:PDLax:sc},
	if $\not\proves[\PDL]\fA \imp \tbox[\kstar\ta]\fB$,
	then $\not\vDash\fA \imp \tbox[\kstar\ta]\fB$.
	By definition, this means that there is a model $\model=\tuple{W,\mean}$ such that 
	$\meanof{\fA \imp \tbox[\kstar\ta]\fB}\neq W$.
	Now assume that $\proves[\PDL]\fA \imp \tbox[\ta^k]\fB$ for any $k\in \N$.
	By \Cref{thm:PDLax:sc} this implies
	$\vDash\fA \imp \tbox[\ta^k]\fB$ for any $k\in\N$;
	therefore that in any model $\model=\tuple{W,\mean}$ 
	we have $W=\meanof{\fA \imp \tbox[\ta^k]\fB}$ for all $k\in\N$.
	This would be absurd since 
	\begin{align*}
	W
	&=
	\bigcap_{k\geq 0}W
	=
	\bigcap_{k\geq 0}\left(\meanof{\fA \imp \tbox[\ta^k]\fB}\right)
	=
	\bigcap_{k\geq 0}\left(\meanof{\cneg\fA}\cup \meanof{\tbox[\ta^k]\fB}\right)
	=
	\\
	&=
	\meanof{\cneg\fA}\cup \left(\bigcap_{k\geq 0}\meanof{\tbox[\ta^k]\fB}\right)
	=
	\meanof{\cneg\fA}\cup \meanof{\tbox[\kstar\ta]\fB}
	=
	\meanof{\fA \imp \tbox[\kstar\ta]\fB}
	\neq W
	\qedhere
	\end{align*}
\end{proof}

\section{Sequent calculus for \PDL}\label{sec:seqPDL}

In this section we recall the definition for (possibly infinite) derivations in a sequent system.
We then consider the sequent system $\Spdl$ given by the rules in \Cref{fig:seqPDL}
introduced in \cite{das:gir:TCL,das:gir:TCLext,das:gir:journal}
(for the fragment of $\PDL$ without the programs $\tepsi$ and $\tempt$), 
as an adaptation of the sequent calculus for the modal $\mu$-calculus given in \cite{sequent:modalMu}.

To prove soundness and completeness of the sequent system $\Spdl$ with respect to the axiomatization of $\PDL$, we rely on the subformula property following from the admissibility of the $\cutr$ rule.
To prove admissibility of $\cutr$, we provide the first cut-elimination result for $\Spdl$ by adapting the technique developed in \cite{acc:cur:gue:CSL24,acc:cur:gue:CSL24ext}.
Note that the system $\Spdl$ from \cite{das:gir:TCL} 
(as well as the system studied in \cite{sequent:modalMu}, and the labeled cyclic proof system from \cite{doc:row:PDL})
does not contain the rule $\cutr$, and its soundness and completeness is not proven with respect to axiomatization but with respect to the semantics. More precisely, these adequacy results are proven by translating the winning conditions of the \emph{provability games} for $\PDL$ and for modal $\mu$-calculus (respectively defined in \cite{games:PDL} and \cite{games:modal-mu}) into correctness criteria for non-wellfounded derivations%
\footnote{
	As explicitly shown in \cite{acc:cat:games} in the case of intuitionistic logic,
	translations of winning games correspond to derivations in \emph{focused} sequent systems, that is, a systems in which the order of rules in proof search is subject to specific restriction.
}%
.

The main novelty in our proof of cut-elimination respect to the technique developed in \cite{acc:cur:gue:CSL24} is way we deal with the non-confluence of cut-elimination: since $\Spdl$ is a conservative extension of classical logic sequent calculus, for which cut-elimination is not confluent, our definitions of reduction strategy require to be adapted to deal with this characteristic -- see \Cref{rem:notCurryHoward}.

\begin{figure}[t]
	\def\myskip{\hskip2em}
	\adjustbox{max width=\textwidth}{$\begin{array}{c}
		\begin{array}{c}		
			\vlinf{\Trule}{}{\vdash\ftrue}{}
			\myskip
			\vlinf{\AXrule}{}{\vdash\fA, \nfA}{}
			\myskip
			\vlinf{\Wrule}{}{\vdash\fGam, \fA}{\vdash\fGam}
			\myskip
			\vlinf{\lor}{}{\vdash\fGam, \fA\lor \fB}{\vdash\fGam, \fA,\fB}
			\myskip
			\vliinf{\land}{}{\vdash\fGam, \fA\land \fB}{\vdash\fGam, \fA}{\vdash\fGam,\fB}
		\\\\
			\begin{array}{c|c}
				\vlinf{\Krule}{
					\text{\scriptsize{$\ta\neq \tempt$}}
				}{\vdash\tdia[\ta]\fGam, \tbox[\ta]\fA}{\vdash\fGam,\fA}
				\myskip&\myskip
				\vliinf{\cutr}{}{\vdash \fGam}{\vdash \fGam, \fA}{\vdash \fGam,\nfA} 
			\end{array}
		\end{array}
		\\\\\hline\\
		\begin{array}{c@{\myskip}c@{\myskip}c}
			\vlinf{\tbox[\tepsi]}{}{\vdash\fGam,\tbox[\tepsi]\fA}{\vdash\fGam,\fA}
			&
			\vlinf{\tbox[\tempt]}{}{\vdash\tbox[\tempt]\fA}{}
			&
			\vlinf{\btestr}{}{\vdash\fGam,\tbox[\tfA]\fB}{\vdash\fGam,\cneg\fA\lor \fB}
			\\\\
			\vliinf{\bplusr}{}{
				\vdash\fGam,\tbox[\ta\oplus\tb]\fA
			}{
				\vdash\fGam,\tbox[\ta]\fA
			}{
				\vdash\fGam,\tbox[\tb]\fA
			}
			&
			\vlinf{\bseqr}{}{\vdash\fGam,\tbox[\ta;\tb]\fA}{\vdash\fGam,\tbox[\ta]\tbox[\tb]\fA}
			&
			\vliinf{\bstarr}{}{
				\vdash\fGam ,\tbox[\kstar\ta]\fA
			}{
				\vdash\fGam,\fA
			}{
				\vdash\fGam,\tbox[\ta;\kstar\ta]\fA
			}
			\\\\
			\vlinf{\tdia[\tepsi]}{}{\vdash\fGam,\tdia[\tepsi]\fA}{\vdash\fGam,\fA}
			&
			\vlinf{\tdia[\tempt]}{}{\vdash\fGam,\fB,\tdia[\tempt]\fA}{\vdash\fGam,\fB}
			&
			\vlinf{\dtestr}{}{\vdash\fGam,\tdia[\tfA]\fB}{\vdash\fGam,\fA\land\fB}
			\\\\
			\vlinf{\dplusr}{}{
				\vdash\fGam,\tdia[\ta\oplus\tb]\fA
			}{
				\vdash\fGam,\tdia[\ta]\fA, \tdia[\tb]\fA
			}
			&
			\vlinf{\dseqr}{}{\vdash\fGam,\tdia[\ta;\tb]\fA}{\vdash\fGam,\tdia[\ta]\tdia[\tb]\fA}
			&
			\vlinf{\dstarr}{}{
				\vdash\fGam ,\tdia[\kstar\ta]\fA
			}{
				\vdash\fGam,\fA,\tdia[\ta;\kstar\ta]\fA
			}
		\end{array}
	\end{array}$}
	\caption{Sequent calculus rules of the sequent system $\Spdlcut=\Spdl\cup\set\cutr$.}
	\label{fig:seqPDL}
\end{figure}

\subsection{Definitions and Notations for Derivations}

We assume the reader to be familiar with the terminology of sequent calculus (see, e.g., \cite{troelstra_schwichtenberg_2000}) and non-wellfounded sequent calculi (see, e.g., \cite{bae:dou:sau:16,acc:cur:gue:CSL24}).
We recall here the formalism we adopt in this paper.

A \defn{sequent} is a set of formulas $\Gamma = A_1,\ldots, A_n$.
A \defn{sequent system} is given by a set of \defn{rules} of the form
$\vlinf{\rrule}{}{\vdash \Gamma}{}$
or
$\vlinf{\rrule}{}{\vdash \Gamma}{\vdash \Gamma_1}$
or
$\vliinf{\rrule}{}{\vdash \Gamma}{\vdash \Gamma_1}{\vdash \Gamma_2}$,
where the sequents $\Gamma_1$ and $\Gamma_2$ are called \defn{premises} and 
the sequent $\Gamma$ is called \defn{conclusion} of the rule $\rrule$.
A formula is \defn{active} (resp. \defn{principal}) for a rule if 
it occurs in a premise but not in the conclusion
(resp. it occurs in the conclusion but in none of its premises).
A rule $\rrule$ is \defn{admissible} in $\Sys$ if its conclusion is derivable in $\Sys$ whenever its premises are.

\begin{definition}\label{def:derivation}
	A \defn{tree} $\tree$ is a prefix-closed set of words over the alphabet $\intset1n\subset \N$ such that
	if $vk\in \tree$, then $vm\in \tree$ for all $m<k$.
	The elements of $\tree$ are called \defn{nodes}, the empty word $\epsilon\in\tree$ is called \defn{root}.
	A node $v\in \tree$ is \defn{below} $w\in \tree$ if $w=vv'$ with $v'\neq\epsilon$.
	The \defn{height} of a node is the number of nodes below it.
	A \defn{child} of $v\in\tree$ is a node of the form $vk\in \tree$ with $k\in\intset1n$.
	A \defn{branch} is a prefix-closed totally ordered (w.r.t to the prefix order) set of nodes.

	A \defn{derivation} (resp. \defn{open derivation}) is a labeling $\dD$ of a tree $\tree_\dD$ with nodes labeled by sequents in such a way for each node $v$  (resp. for each non-leaf node $v$) the sequent $\dD(v)$ is the conclusion sequent of a rule with premises the sequents $\dD(v_1),\ldots, \dD(v_n)$ where $v_1,\ldots, v_n$ are the children of $v$.
	The sequent $\dD(\epsilon)$ is called the \defn{conclusion} of $\dD$ and a leaf $v$ such that $\dD(v)$ is not the conclusion of a rule is called an \defn{open premise}.
	We identify (an occurrence of) a \defn{rule} $\rrule$ in $\dD$ with the nodes corresponding to its conclusion and premises. A node is \defn{below} a rule if it is its conclusion or any node below,
	and we may refer to a node of $\dD$ as a node in the underlying tree $\tree_\dD$.
\end{definition}

\begin{nota}
	We may denote a derivation with conclusion $\Gamma$ 
	(resp. an open derivation with open premise $\Delta$ and conclusion $\Gamma$) 
	by $\vlderivation{\vlpr{\dD}{}{\vdash\Gamma}}$  
	$\left(\mbox{resp. }\vlderivation{\vlde{\dD}{}{\vdash\Gamma}{\vlhy{\vdash\Delta}}}\right)$ .
	
	A regular derivation can be represented as a finite (directed) graph of sequents, 
	by identifying nodes of its tree which are conclusions of two identical sub-derivations. 
	In this case we label the bottom-most rules of identical derivations by the same symbol (see the derivation on the left of \Cref{ex:unsoundDer} for an example).
\end{nota}

\begin{definition}
	A \defn{sub-derivation} of $\dD$ is a derivation $\dD'$ such that $\dD'(w)=\dD(vw)$ for a $v\in\tree_\dD$.
	A derivation is \defn{regular} if it has finitely many distinct sub-derivations.
	An open derivation $\dD'$ is an \defn{approximation}\footnote{
		The name is meant to suggest that infinite derivations can be seen as the limit of their approximations. See \Cref{def:Scott}.
	}
	of $\dD$ 
	(denoted $\dD'\preceq \dD$) 
	if $\tree_{\dD'}\subseteq\tree_{\dD}$ and $\dD'(v)=\dD(v)$ for any $v\in\tree_{\dD'}$.
	If $\dD$ is a derivation, then we denote by $\apx[\dD]$ the set of its approximations.
	
	If $\X$ is a set of derivations, then we say that a sequent $\fGam$ is \defn{provable} in $\X$ (denoted $\proves[\X] \fGam$)
	if there is a derivation of $\fGam$ in $\X$.
	For this purpose, we may identify a sequent system $\Sys$ with the set of derivations over $\Sys$.
\end{definition}

\subsection{A Sequent System for $\PDL$}

The sequent systems $\Spdlcut$ is defined by the set of rules in \Cref{fig:seqPDL}, while $\Spdl$ is the sub-system without the rule $\cutr$.

As standard, by dropping the condition that derivations are finite trees, we can easily define unsound (infinite) derivations, as the one below.
\begin{equation}\label{ex:unsoundDer}
	\vlderivation{
		\vliin{\cutr}{\star}{\vdash\fA}{
			\vlin{\AXrule}{}{\vdash \fA ,\cneg\fA}{\vlhy{}}
		}{
			\vlin{\cutr}{\star}{\vdash\fA}{\vlhy{}}
		}
	}
	\quad\coloneqq\quad
	\vlderivation{
		\vliin{\cutr}{}{\vdash\fA}{
			\vlin{\AXrule}{}{\vdash \fA ,\cneg\fA}{\vlhy{}}
		}{
			\vliin{\cutr}{}{\vdash\fA}{
				\vlin{\AXrule}{}{\vdash \fA ,\cneg\fA}{\vlhy{}}
			}{
				\vlin{\cutr}{}{\vdash\fA}{
					\vlhy{\vdots}
				}
			}
		}
	}
\end{equation}
In the leftmost derivation, we use the standard syntax for representing infinite proofs. In this syntax, repeated applications of the same rule may be annotated with a symbol (in this case, $\star$) to indicate that they are roots of isomorphic sub-derivations.
In other words, when a leaf is annotated with a symbol, it signifies that, to fully unfold the derivation, this leaf should be recursively replaced by the root of the sub-derivation whose conclusion is a rule with the same label. 
Importantly, for each label, there should be a unique non-leaf occurrence of a rule annotated with that symbol.

In order to recover correctness, we introduce the following \emph{progressiveness} criterion.
\begin{definition}\label{def:prog}
	Let $\fA$ be a formula occurring in a sequent of a $\dD\in\Spdl$ conclusion of a rule $\rrule$.
	We say that a formula $\fB$ occurring in a premise of $\rrule$
	is an \defn{immediate ancestor} of $\fA$ 
	whenever
	one of the following holds:
	\begin{itemize}
		\item 
		$\fB$ is an active formula of $\rrule\neq\Krule$
		with principal formula $\fA$;
		
		\item 
		$\fB$ is an active formula of $\Krule$-rule
		and 
		$\fA\in\set{\tbox[\ta]\fB,\tdia[\ta]\fB}$;
		
		\item
		$\fB$ is the unique occurrence of $\fA$ in the sequent.
		
	\end{itemize}
	A \defn{thread} in a derivation $\dD$
	is a maximal sequence of formulas occurring in sequents of $\dD$ 
	totally ordered with respect to the immediate ancestor relation.
	Its first element is called \defn{starting point}.
	A thread is \defn{progressing} if its starting point is a formula $\fA=\tbox[\kstar{\ta}]\fB$ (also called the \defn{principal formula} of the thread) which occurs as active formula of $\Krule$-rules\footnote{
		It is easy to show that the principal formula of a progressing thread which 
		is active for $\Krule$-rules
		infinitely often, 
		is also principal for $\bstarr$-rules infinitely often.
		More precisely, occurrence of these rules are interleaved
		and we can easily show that our 
		progress condition is equivalent to the one in \cite{das:gir:TCL} formulated by means of $\bstarr$-rules.
	}
	infinitely often.
	A derivation is \defn{progressing} if each infinite branch contains a progressing thread.
	We denote by $\pSpdl$ the set of progressing derivations.
\end{definition}

\begin{lemma}\label{lem:pSpdl:decomp}
	Each $\dD\in\pSpdl$ is of the following shape
	$$
	\toks0={0.2}
	\vlderivation{
		\vltrf{\dD_0}{\vdash \fGam}{
			\vlpr{\dD_1}{}{\vdash \fGam_1,\tbox[\kstar{\ta_1}]\fA_1}
		}{\vlhy{\quad\cdots\quad}}{
			\vlpr{\dD_n}{}{\vdash \fGam_n,\tbox[\kstar{\ta_n}]\fA_n}
		}{\the\toks0}
	}
	$$
	where
	$\dD_0$ is a finite open derivation
	with $n$ open premises 
	of the form 
	$\fGam_i,\tbox[\kstar{\ta_i}]\fA_i$
	and
	such that
	$\tbox[\kstar{\ta_i}]\fA_i$ is 
	the starting point of a progressing thread in $\dD$ for all $i\in\intset1n$.
\end{lemma}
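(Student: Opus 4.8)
The plan is to peel off, from the bottom of $\dD$, the largest finite open derivation $\dD_0$ whose open premises are exactly the sequents at which a progressing thread of $\dD$ is ``born'' (i.e.\ where its principal formula $\tbox[\kstar{\ta_i}]\fA_i$ first appears as the starting point of that thread). Formally, I would define a node $v\in\tree_\dD$ to be a \emph{stopping node} if the sequent $\dD(v)$ contains a formula of the shape $\tbox[\kstar{\ta}]\fA$ which is the starting point of a progressing thread of $\dD$, and no node strictly below $v$ has this property along the same branch; then let $\tree_{\dD_0}$ be the set of nodes of $\tree_\dD$ that are not strictly below any stopping node. This $\tree_{\dD_0}$ is prefix-closed, so $\dD_0\coloneqq \restr{\dD}{\tree_{\dD_0}}$ is an open derivation with $\dD_0\preceq\dD$, and its open premises are precisely the stopping nodes; call them $v_1,\dots$ and set $\fGam_i,\tbox[\kstar{\ta_i}]\fA_i \coloneqq \dD(v_i)$, with $\dD_i$ the sub-derivation of $\dD$ rooted at $v_i$.

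The two things that need checking are that $\dD_0$ is \emph{finite} and that it has \emph{finitely many} open premises; the first is the crux. For finiteness I argue by contradiction: if $\tree_{\dD_0}$ is infinite then, being a finitely-branching tree (every rule of $\Spdl$ has at most two premises), by König's lemma it contains an infinite branch $\branch$. This $\branch$ is then an infinite branch of $\dD$, so by progressiveness of $\dD$ it carries a progressing thread $\theta$. Its starting point is some formula $\tbox[\kstar{\ta}]\fB$ occurring at a node $w\in\branch$; but by the definition of a thread and of ``starting point'' (it is the first element with respect to the immediate-ancestor relation), $w$ witnesses the stopping-node condition, so $w$ is a stopping node — contradicting that all of $\branch\subseteq\tree_{\dD_0}$ lies outside (strictly below) every stopping node. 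Hence $\tree_{\dD_0}$ is a finitely-branching tree with no infinite branch, so by König's lemma it is finite; in particular it has finitely many leaves, i.e.\ finitely many open premises, which gives the bound $n$.

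Finally, by construction each open premise $v_i$ of $\dD_0$ is a stopping node, so $\dD(v_i)=\fGam_i,\tbox[\kstar{\ta_i}]\fA_i$ where $\tbox[\kstar{\ta_i}]\fA_i$ is the starting point of a progressing thread of $\dD$, which is exactly the claimed shape. The main obstacle is the finiteness argument for $\dD_0$: one must be careful that a thread's starting point is genuinely \emph{detectable locally at a single node} (so that ``stopping node'' is well-defined and every progressing thread on an infinite branch produces one below the region carved out), and that cutting at stopping nodes does not accidentally sever a branch before its thread's starting point — but this is guaranteed precisely because the starting point is, by \Cref{def:prog}, the \emph{minimal} element of the thread, hence appears at or above any other node of the thread on that branch. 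A small additional remark is that there is no clash between different progressing threads on overlapping branches, since we only need \emph{some} progressing thread per infinite branch, and taking stopping nodes to be minimal along each branch handles the case where several threads are born at different heights.
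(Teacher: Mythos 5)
Your proposal is correct and takes essentially the same route as the paper: the paper's proof simply takes $\dD_0$ to be the approximation of $\dD$ whose open premises are the starting points of progressing threads on infinite branches, leaving implicit the finiteness of $\dD_0$, which you make explicit via the K\"onig's-lemma argument on the finitely-branching tree. The only slip is terminological: with the paper's convention that $v$ is \emph{below} $w$ when $v$ is a proper prefix of $w$, your $\tree_{\dD_0}$ should be described as the set of nodes having no stopping node strictly below them (i.e.\ not lying strictly above any stopping node), which is clearly what you intend since you require prefix-closure and that the stopping nodes be the open premises.
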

\begin{proof}
	By definition, each infinite branch of $\dD$ contains a progressing thread, 
	which must have a starting point.
	We conclude by letting $\dD_0$ be the approximation of $\dD$ with open premises all such nodes.
\end{proof}

We can easily prove that if a formula is valid in $\PDL$, then it is derivable in $\pSpdlcut$.

\begin{lemma}\label{lem:pSpdlcutSound}
	The set of derivations $\pSpdl\cup\cutr$ is complete for $\PDL$.
\end{lemma}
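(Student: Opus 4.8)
The plan is to simulate the Hilbert-style calculus of \Cref{fig:axPDL} inside $\pSpdlcut=\pSpdl\cup\cutr$, by induction on the derivation witnessing $\proves[\PDL]\fA$. Two routine preliminaries come first. Since sequents are sets and $\proves[\pSpdlcut]\fGam,\fA\lor\fB$ iff $\proves[\pSpdlcut]\fGam,\fA,\fB$ — the nontrivial direction by cutting the given derivation (weakened) against the finite derivation of $\vdash\fGam,\fA,\fB,\cneg{(\fA\lor\fB)}$ built from $\land$, $\AXrule$ and $\Wrule$ — it suffices to build a $\pSpdlcut$-derivation of the single formula $\fA$. I will also use the obvious fact that a derivation obtained by grafting progressing (closed) derivations onto the open premises of a regular open derivation is progressing, provided the open part carries no ill-founded branch without a progressing thread: every infinite branch of the result either stays in the open part or is eventually a branch of one of the grafted subderivations.

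For the base case, each axiom scheme of \Cref{fig:axPDL} admits a \emph{finite} derivation in $\Spdlcut$, hence (vacuously) a progressing one. Scheme $\mathbf{PL}$ is covered because the propositional fragment of $\Spdl$ ($\Trule$, $\AXrule$, $\Wrule$, $\lor$, $\land$, with implicit contraction on set-sequents) is complete for classical propositional logic; $\mathbf{Neg}$ is trivial since negation is \emph{defined} by \eqref{eq:deMorgan}, so $\cneg{(\tdia[\ta]\cneg\fA)}$ is literally $\tbox[\ta]\fA$; $\axA\emptyset$ is the $\tbox[\tempt]$-rule; and each of $\mathbf{K}$, $\axA\epsilon$, $\axA{?}$, $\axA{\oplus}$, $\axA{;}$, $\axA{\kstarsymb}$ follows by one application of the matching box/diamond rule of \Cref{fig:seqPDL} (for $\mathbf{K}$ with program $\neq\tempt$, by applying $\Krule$ to $\vdash\fA\land\cneg\fB,\cneg\fA,\fB$), completed with $\land$, $\lor$ and $\AXrule$. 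For the inductive step I show $\mprule$, $\necrule$ and $\looprule$ admissible in $\pSpdlcut$. For $\mprule$: from the IH-derivations of $\vdash\fA$ and $\vdash\cneg\fA\lor\fB$, cutting the latter (weakened to $\vdash\fB,\cneg\fA,\cneg\fA\lor\fB$) against the finite $\vdash\fB,\cneg\fA,\fA\land\cneg\fB$ gives $\vdash\fB,\cneg\fA$, and a further cut on $\fA$ against $\vdash\fB,\fA$ (the derivation of $\vdash\fA$ weakened) gives $\vdash\fB$. For $\necrule$: if the program is $\neq\tempt$, one $\Krule$ with empty context turns $\vdash\fA$ into $\vdash\tbox[\ta]\fA$; if it is $\tempt$, the $\tbox[\tempt]$-rule gives $\vdash\tbox[\tempt]\fA$ outright. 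In all these cases the newly introduced part is finite, so the grafting remark yields progressiveness.

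The delicate case is $\looprule$. When $\ta=\tempt$, $\vdash\cneg\fA,\tbox[\kstar\ta]\fA$ has a finite derivation ($\bstarr$, $\bseqr$, the $\tbox[\tempt]$-rule, $\Wrule$), so assume $\ta\neq\tempt$. From the IH-derivation of $\fA\imp\tbox[\ta]\fA$ extract (by $\lor$-inversion) a derivation $H$ of $\vdash\cneg\fA,\tbox[\ta]\fA$, and build the regular cyclic derivation
\[\vlderivation{\vliin{\bstarr}{\star}{\vdash\cneg\fA,\tbox[\kstar\ta]\fA}{\vlin{\AXrule}{}{\vdash\fA,\cneg\fA}{\vlhy{}}}{\vlin{\bseqr}{}{\vdash\cneg\fA,\tbox[\ta;\kstar\ta]\fA}{\vliin{\cutr}{}{\vdash\cneg\fA,\tbox[\ta]\tbox[\kstar\ta]\fA}{\vlin{\Wrule}{}{\vdash\cneg\fA,\tbox[\ta]\tbox[\kstar\ta]\fA,\tbox[\ta]\fA}{\vlpr{H}{}{\vdash\cneg\fA,\tbox[\ta]\fA}}}{\vlin{\Wrule}{}{\vdash\cneg\fA,\tbox[\ta]\tbox[\kstar\ta]\fA,\tdia[\ta]\cneg\fA}{\vlin{\Krule}{}{\vdash\tdia[\ta]\cneg\fA,\tbox[\ta]\tbox[\kstar\ta]\fA}{\vlin{\bstarr}{\star}{\vdash\cneg\fA,\tbox[\kstar\ta]\fA}{\vlhy{}}}}}}}}}}\]
where the $\cutr$ is on $\tbox[\ta]\fA$ (using $\cneg{\tbox[\ta]\fA}=\tdia[\ta]\cneg\fA$) and the $\star$-marked leaf is identified with the conclusion; a final $\lor$ yields $\vdash\cneg\fA\lor\tbox[\kstar\ta]\fA=\fA\imp\tbox[\kstar\ta]\fA$. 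The only infinite branch is the cycle, along which $\tbox[\kstar\ta]\fA$ spawns a thread: principal for $\bstarr$ at the loop entry, carried as context through $\bseqr$, $\cutr$ and the $\Wrule$, then the active formula of the $\Krule$ just above the back-pointer — hence active for $\Krule$-rules (equivalently, principal for $\bstarr$-rules, cf. the footnote of \Cref{def:prog}) infinitely often, so the thread is progressing. With the progressiveness of $H$ and the grafting remark, the whole derivation lies in $\pSpdlcut$.

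The step I expect to be the main obstacle is exactly making this thread progressing. The naive simulation — prove $\tbox[\ta](\fA\imp\tbox[\kstar\ta]\fA)$ from the back-pointer by necessitation and recombine it with $\tbox[\ta]\fA$ via the $\mathbf{K}$ axiom and two cuts — \emph{fails}: there the occurrence of $\tbox[\ta]\tbox[\kstar\ta]\fA$ needed for the thread is \emph{weakened in} rather than traced to the back-pointer, leaving the infinite branch without a progressing thread. Applying $\Krule$ directly to the back-pointer sequent $\vdash\cneg\fA,\tbox[\kstar\ta]\fA$ — which is why the residual $\tdia[\ta]\cneg\fA$ appears, and is then discharged by the cut against $H$ — is what repairs this. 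It is worth noting that the side condition $\ta\neq\tempt$ of $\Krule$ is the only reason $\ta=\tempt$ needs a separate (trivial) treatment, since $\Krule$ remains available when $\ta=\tepsi$.
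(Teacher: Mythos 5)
Your proof is correct and follows essentially the same strategy as the paper: simulate the Hilbert system in $\pSpdlcut$, deriving each axiom by a finite derivation and each rule by a progressing (possibly cyclic) one, with the only genuinely infinitary case being $\looprule$. The single point of divergence is cosmetic rather than structural: for $\looprule$ the paper cuts a necessitation of the hypothesis against a self-contained cyclic derivation of the induction axiom $\axA{Ind}$, whereas you cut the hypothesis into the cycle at every iteration; both produce the same progressing thread on $\tbox[\kstar\ta]\fA$ passing through a $\Krule$ once per loop.
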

\begin{proof}
	Each axiom in  \Cref{fig:axPDL} is derivable in $\pSpdl$, that is, there is a derivation with conclusion the axiom formula defined as follows (see also \Cref{fig:axDerivations}):
	\begin{itemize}
		\item $\mathbf{PL}$: the sub-system 
		$\set{\Trule,\AXrule,\Wrule,\land,\lor}$ is a well-known sound and complete sequent system for classical logic (see \cite{troelstra_schwichtenberg_2000}, where the system is called the system $\mathsf G_{3}p$);
		
		\item $\mathbf{Neg}$ is immediate by definition of the negation;
		
		\item $\axA\emptyset$ is proven using a single instance of $\tbox[\tempt]$;
		
		\item $\axA\epsilon$, $\axA{;}$ and $\mathbf{K}$ are straightforward using rule $\AXrule$, $\land$, $\lor$ and $\Krule$;

		\item $\axA{?}$ is also straightforward using rule $\AXrule$, $\land$, $\lor$ and $\btestr$ and $\dtestr$;
		
		\item $\axA{\oplus}$ (resp. $\axA{\star}$) require the use of rules
		rule $\AXrule$, $\land$, $\lor$, and both $\bplusr$ and $\dplusr$ (resp. $\bstarr$ and $\dstarr$)
		plus the rule $\Wrule$.
	\end{itemize}
	
	Moreover, each rule in  \Cref{fig:axPDL} is derivable in $\pSpdl$, that is, there is an open derivation in $\pSpdl$ with the same conclusion of the rule and with a single open premise which is the same of the premise of the rule.
	Rules $\mprule$ and $\necrule$ are derivable as shown in \Cref{eq:rulesDer} below, while the loop-invariance rule ($\looprule$) can be simulated by the progressive infinite derivation shown in \Cref{fig:loopInv}.
	Note that right premise of the $\cutr$-rule at the bottom of the derivation in \Cref{fig:loopInv} is  the axiom $\axA{Ind}$ mentioned in \Cref{rem:AXind}.
	
	\begin{equation}\label{eq:rulesDer}
		\begin{array}{@{\hskip4em}c@{\hskip4em}|@{\hskip4em}c@{\hskip4em}}
			\mprule&\necrule
			\\\hline
			\vlderivation{
				\vliin{\cutr}{}{\vdash \fB}{
					\vlpr{}{}{\vdash \fA}
				}{
					\vliin{\cutr}{}{
						\vdash \nfA,\fB
					}{
						\vlpr{}{}{\vdash \nfA \lor \fB}
					}{
						\vliin{\land}{}{
							\vdash \fA \land \nfB, \nfA,\fB
						}{
							\vlin{\AXrule}{}{\vdash \fA ,\nfA}{\vlhy{}}
						}{
							\vlin{\AXrule}{}{\vdash \nfB,\fB}{\vlhy{}}
						}
					}
				}
			}
		&
			\vlinf{\Krule}{}{\vdash \tbox[\ta] \fA}{\vdash \fA} 
		\end{array}
		\qquad\qquad
	\end{equation}

	This allows us to conclude because $\proves[\PDL]\fA$ iff there is a derivation in the Hilbert system made of the axioms and rules in \Cref{fig:axPDL}, and each of such a derivation can be translated into a derivation in $\pSpdlcut$ by replacing each axiom and rule with the ones provided.
\end{proof}

\begin{figure}
	\adjustbox{max width=\textwidth}{$\begin{array}{c}
		\mathsf K:\;
		\vlderivation{
			\vliq{2\times\lor}{}{
				\vdash\tdia[\ta](\fA\land \nfB)\lor \left(\tdia[\ta]\nfA\lor \tbox[\ta]\fB\right)	
			}{
				\vlin{\Krule}{}{
					\vdash\tdia[\ta](\fA\land \nfB), \tdia[\ta]\nfA, \tbox[\ta]\fB
				}{
					\vliin{\land}{}{
						\vdash\fA\land \nfB,\nfA,\fB
					}{
						\vlin{\AXrule}{}{\vdash\fA,\nfA}{\vlhy{}}
					}{
						\vlin{\AXrule}{}{\vdash\fB,\nfB}{\vlhy{}}
					}
				}
			}
		}
	\qquad\qquad
		\axA\epsilon:\;
		\vlderivation{
			\vliin{\land}{}{
				\vdash
				(\tdia[\tepsi]\nfA \lor \fA)
				\land
				(\tbox[\tepsi]\fA \lor \nfA)
			}{
				\vlin{\lor}{}{
					\vdash\tdia[\tepsi]\nfA \lor \fA
				}{
					\vlin{\tdia[\tepsi]}{}{
						\vdash\tdia[\tepsi]\nfA , \fA
					}{
						\vlin{\AXrule}{}{\vdash\nfA,\fA}{\vlhy{}}
					}
				}
			}{
				\vlin{\lor}{}{
					\vdash\tbox[\tepsi]\fA \lor \nfA
				}{
					\vlin{\tbox[\tepsi]}{}{
						\vdash\tbox[\tepsi]\fA , \nfA
					}{
						\vlin{\AXrule}{}{\vdash\nfA,\fA}{\vlhy{}}
					}
				}
			}
		}
	\\
		\axA{?}:\;
		\vlderivation{
			\vliin{\land}{}{
				\vdash
				(\tdia[\tfB]\nfA\lor (\cneg{\fB}\lor \fA ))
				\land
				(\tbox[\tfB]\fA\lor ({\fB}\land \nfA ))
			}{
				\vlin{\lor}{}{
					\vdash\tdia[\tfB]\nfA\lor (\cneg{\fB}\lor \fA )
				}{
					\vlin{\dtestr}{}{
						\vdash\tdia[\tfB]\nfA, (\cneg{\fB}\lor \fA )
					}{
						\vlin{\AXrule}{}{\vdash \fB \land \nfA, \cneg{\fB}\lor \fA }{\vlhy{}}
					}
				}
			}{
				\vlin{\lor}{}{
					\vdash\tbox[\tfB]\fA\lor ({\fB}\land \nfA )
				}{
					\vlin{\btestr}{}{
						\vdash\tbox[\tfB]\fA, (\fB\land \nfA )
					}{
						\vlin{\AXrule}{}{\vdash\nfB\lor \fA, (\fB\land \nfA )}{\vlhy{}}
					}
				}
			}
		}
	\qquad
		\axA{;}:\;
		\vlderivation{
			\vliin{\land}{}{
				\vdash
				(\tdia[\ta;\tb]\nfA\lor \tbox[\ta]\tbox[\tb]\fA)
				\land
				(\tbox[\ta;\tb]\fA\lor \tdia[\ta]\tdia[\tb]\nfA)
			}{
				\vlin{\lor}{}{
					\vdash\tdia[\ta;\tb]\nfA\lor \tbox[\ta]\tbox[\tb]\fA
				}{
					\vlin{\dtestr}{}{
						\vdash\tdia[\ta;\tb]\nfA, \tbox[\ta]\tbox[\tb]\fA
					}{
						\vlin{\AXrule}{}{\vdash\tdia[\ta]\tdia[\tb]\nfA, \tbox[\ta]\tbox[\tb]\fA }{\vlhy{}}
					}
				}
			}{
				\vlin{\lor}{}{
					\vdash\tbox[\ta;\tb]\fA\lor \tdia[\ta]\tdia[\tb]\nfA
				}{
					\vlin{\btestr}{}{
						\vdash \tbox[\ta;\tb]\fA, \tdia[\ta]\tdia[\tb]\nfA
					}{
						\vlin{\AXrule}{}{\vdash\tbox[\ta]\tbox[\tb]\fA, \tdia[\ta]\tdia[\tb]\nfA}{\vlhy{}}
					}
				}
			}
		}
	\\
		\axA{\oplus}:\;
		\vlderivation{
			\vliin{\land}{}{
				\left(\tdia[\ta\oplus \tb]\nfA\lor (\tbox[\ta]\fA \land \tbox[\tb]\fA) \right)
				\land
				\left(\tbox[\ta\oplus \tb]\fA\lor (\tdia[\ta]\nfA \lor \tdia[\tb]\nfA)\right)
			}{
				\vlin{\lor}{}{\vdash
					\tdia[\ta\oplus \tb]\nfA\lor (\tbox[\ta]\fA \land \tbox[\tb]\fA) 
				}{
					\vlin{\dplusr}{}{\vdash
						\tdia[\ta\oplus \tb]\nfA , \tbox[\ta]\fA \land \tbox[\tb]\fA 
					}{
						\vliin{\land}{}{\vdash
							\tdia[\ta]\nfA,\tdia[\tb]\nfA , \tbox[\ta]\fA \land \tbox[\tb]\fA 
						}{
							\vlin{\Wrule}{}{\vdash
								\tdia[\ta]\nfA,\tdia[\tb]\nfA, \tbox[\ta]\fA 
							}{\vlin{\AXrule}{}{\vdash \tdia[\ta]\nfA \tbox[\ta]\fA }{\vlhy{}}} 
						}{
							\vlin{\Wrule}{}{\vdash
								\tdia[\ta]\nfA,\tdia[\tb]\nfA, \tbox[\ta]\fA 
							}{\vlin{\AXrule}{}{\vdash\tdia[\tb]\nfA, \tbox[\ta]\fA }{\vlhy{}}} 
						}
					}
				}
			}{\vliq{2\times\lor}{}{\vdash
					\tbox[\ta\oplus \tb]\fA\lor (\tdia[\ta]\nfA \lor \tdia[\tb]\nfA)
				}{
					\vliin{\bplusr}{}{\vdash
						\tbox[\ta\oplus \tb]\fA ,\tdia[\ta]\nfA , \tdia[\tb]\nfA
					}{
						\vlin{\Wrule}{}{\vdash
							\tbox[\ta]\fA ,\tdia[\ta]\nfA , \tdia[\tb]\nfA
						}{\vlin{\AXrule}{}{\vdash \tbox[\ta]\fA ,\tdia[\ta]\nfA}{\vlhy{}}}
					}{
						\vlin{\Wrule}{}{\vdash
							\tbox[\tb]\fA ,\tdia[\ta]\nfA , \tdia[\tb]\nfA
						}{\vlin{\AXrule}{}{\vdash \tbox[\tb]\fA, \tdia[\tb]\nfA}{\vlhy{}}}
					}
				}
			}
		}
	\\
		\axA{\kstarsymb}:\;
		\vlderivation{
			\vliin{\land}{}{
				\left(\tdia[\kstar\ta]\nfA \lor (\fA\land\tbox[\ta][\kstar\ta]\fA)\right)
				\land
				\left(\tbox[\kstar\ta]\fA\lor  (\nfA\lor\tdia[\ta]\tdia[\kstar\ta]\nfA)\right)
			}{
				\vlin{\lor}{}{\vdash
					\tdia[\kstar\ta]\nfA \lor (\fA\land\tbox[\ta][\kstar\ta]\fA)
				}{
					\vlin{\dstarr}{}{\vdash
						\tdia[\kstar\ta]\nfA ,  (\fA\land\tbox[\ta][\kstar\ta]\fA) 
					}{
						\vliin{}{}{\vdash
							\nfA, \tdia[\ta]\tdia[\kstar\ta]\nfA,  (\fA\land\tbox[\ta][\kstar\ta]\fA) 
						}{
							\vlin{\Wrule}{}{\vdash
								\nfA, \tdia[\ta]\tdia[\kstar\ta]\nfA,  \fA
							}{\vlin{\AXrule}{}{\vdash \nfA,\fA}{\vlhy{}}}
						}{
							\vlin{\Wrule}{}{\vdash
								\nfA, \tdia[\ta]\tdia[\kstar\ta]\nfA, \tbox[\ta][\kstar\ta]\fA 
							}{\vlin{\AXrule}{}{\vdash \tdia[\ta]\tdia[\kstar\ta]\nfA, \tbox[\ta][\kstar\ta]\fA}{\vlhy{}}} 
						}
					}
				}
			}{
				\vlin{2\times\lor}{}{\vdash
					\tbox[\kstar\ta]\fA\lor  (\nfA\lor\tdia[\ta]\tdia[\kstar\ta]\nfA)
				}{
					\vliin{\bstarr}{}{\vdash
						\tbox[\kstar\ta]\fA , \nfA , \tdia[\ta]\tdia[\kstar\ta]\nfA 
					}{
						\vlin{\Wrule}{}{\vdash
							\fA, \nfA , \tdia[\ta]\tdia[\kstar\ta]\nfA
						}{\vlin{\AXrule}{}{\vdash \fA ,\nfA}{\vlhy{}}}
					}{
						\vlin{\Wrule}{}{\vdash
							\tbox[\ta]\tbox[\kstar\ta]\fA, \nfA , \tdia[\ta]\tdia[\kstar\ta]\nfA 
						}{\vlin{\AXrule}{}{\vdash \tbox[\ta]\tbox[\kstar\ta]\fA, \tdia[\ta]\tdia[\kstar\ta]\nfA}{\vlhy{}}} 
					}
				}
			}
		}
	\end{array}$}
	\caption{Non-trivial derivations of the axioms of $\PDL$ in $\Spdl$.}
	\label{fig:axDerivations}
\end{figure}

\begin{figure*}
	\adjustbox{max width=\textwidth}{$
		\vlderivation{
			\vliin{\cutr}{}{
				\vdash \nfA,\tbox[\kstar\ta]\fA
			}{
				\vlin{\Krule}{}{
					\vdash \tbox[\kstar\ta] (\nfA\lor\tbox[\ta]\fA)
				}{
					\vlin{\lor}{}{
						\vdash \nfA\lor\tbox[\ta]\fA
					}{
						\vlhy{\vdash \nfA ,\tbox[\ta]\fA}
					}
				}
			}{
				\vliin{\bseqr}{\star}{
					\vdash \nfA,\tdia[\kstar\ta](\fA \land \tdia[\ta]\nfA),\tbox[\kstar\ta]\fA
				}{
					\vlin{\Wrule}{}{\vdash \nfA,\tdia[\kstar\ta](\fA \land \tdia[\ta]\nfA),\fA }{
						\vlin{\AXrule}{}{\vdash \nfA,\fA}{\vlhy{}}
					}
				}{
					\vlin{\dstarr}{}{
						\vdash \nfA,\tdia[\kstar\ta](\fA \land \tdia[\ta]\nfA),\tbox[\ta]\tbox[\kstar\ta]\fA
					}{
						\vliin{\land}{}{
							\vdash 
							\nfA,\fA \land \tdia[\ta]\nfA,
							\tdia[\ta]\tdia[\kstar\ta](\fA \land \tdia[\ta]\nfA),
							\tbox[\ta]\tbox[\kstar\ta]\fA
						}{
							\vliq{2\times\Wrule}{}{
								\vdash \nfA,\fA, \tdia[\ta]\tdia[\kstar\ta](\fA \land \tdia[\ta]\nfA),
								\tbox[\ta]\tbox[\kstar\ta]\fA
							}{
								\vlin{\AXrule}{}{\vdash \nfA,\fA}{\vlhy{}}
							}
						}{
							\vlin{\Wrule}{}{
								\vdash 
								\nfA,\tdia[\ta]\nfA,
								\tdia[\ta]\tdia[\kstar\ta](\fA \land \tdia[\ta]\nfA),
								\tbox[\ta]\tbox[\kstar\ta]\fA
							}{
								\vlin{\Krule}{}{
									\vdash 
									\tdia[\ta]\nfA,
									\tdia[\ta]\tdia[\kstar\ta](\fA \land \tdia[\ta]\nfA),
									\tbox[\ta]\tbox[\kstar\ta]\fA
								}{
									\vlin{\bstarr}{\star}{
										\vdash 
										\nfA,
										\tdia[\kstar\ta](\fA \land \tdia[\ta]\nfA),
										\nfA,\tbox[\kstar\ta]\fA
									}{
										\vlhy{}
									}
								}
							}
						}
					}
				}
			}
		}
		$}
	\caption{Derivability of the loop invariance rule in $\Spdlcut$.}
	\label{fig:loopInv}
\end{figure*}

\subsection{Cut-Elimination in $\Spdlcut$}\label{subse:PDLcutElim}

\begin{figure}[t]
	\adjustbox{max width=\textwidth}{$\begin{array}{c}
		\begin{array}{cc}
			\vlderivation{
				\vliin{\cutr}{}{\vdash \fGam}{
					\vlhy{\vdash \fGam,\fA}
				}{
					\vlin{\Wrule}{}{\vdash \fGam,\nfA}{\vlhy{\vdash \fGam}}
				}
			}
			\quad\cutelim\quad
			\vdash \fGam
		&
			\vlderivation{
				\vliin{\cutr}{}{\vdash \fGam,\fA}{\vlhy{\vdash \fGam,\fA}}{\vlin{\AXrule}{}{\vdash \nfA,\fA}{\vlhy{}}}
			}
		\quad\cutelim\quad
			\vdash \fGam,\fA
		\\
		\multicolumn{2}{c}{
			\vlderivation{
				\vliin{\cutr}{}{\vdash \fGam}{
					\vliin{\land}{}{
						\vdash \fGam, \fA\land\fB
					}{
						\vlhy{\vdash \fGam,\fA}
					}{
						\vlhy{\vdash \fGam,\fB}
					}
				}{
					\vlin{\lor}{}{\vdash\fGam, \nfA\lor\nfB}{
						\vlhy{\vdash\fGam, \nfA,\nfB}
					}
				}
			}
			\quad\cutelim\quad
			\vlderivation{
				\vliin{\cutr}{}{\vdash \fGam}{
					\vlhy{\vdash \fGam,\fA}
				}{
					\vliin{\cutr}{}{
						\vdash \fGam, \nfA
					}{
						\vlhy{\vdash \fGam,\fB}
					}{
						\vlhy{\vdash \fGam,\nfA,\nfB}
					}
				}
			}
		}
		\end{array}
		\\
		\begin{array}{r@{\quad\cutelim\quad}l}
			\vlderivation{
				\vliin{\cutr}{}{\vdash \fGam}{
					\vlhy{\vdash \fGam, \tbox[\tempt]\fA}
				}{
					\vlin{\tbox[\tempt]}{}{\vdash \fGam, \tdia[\tempt]\nfA}{\vlhy{\vdash \fGam}}
				}
			}
		&
			\vdash\fGam
		\\
			\vlderivation{
				\vliin{\cutr}{}{\vdash \fGam}{
					\vlin{\tbox[\tepsi]}{}{\vdash \fGam, \tbox[\tepsi]\fA}{\vlhy{\vdash \fGam, \fA}}
				}{
					\vlin{\tdia[\tepsi]}{}{\vdash \fGam, \tdia[\tepsi]\nfA}{\vlhy{\vdash \fGam, \nfA}}
				}
			}
		&
			\vlderivation{
				\vliin{\cutr}{}{\vdash \fGam}{
					\vlhy{\vdash \fGam, \fA}
				}{
					\vlhy{\vdash \fGam, \nfA}
				}
			}
		\\
			\vlderivation{
				\vliin{\cutr}{}{\vdash \tdia[\ta]\fGam,\tbox[\ta]\fB}{
					\vlin{\Krule}{}{
						\vdash \tdia[\ta]\fGam, \tbox[\ta]\fA
					}{
						\vlhy{\vdash \fGam,\fA}
					}
				}{
					\vlin{\Krule}{}{
						\vdash \tdia[\ta]\nfA,\tdia[\ta]\fGam, \tbox[\ta]\fB
					}{
						\vlhy{\vdash\nfA,\fGam, \fB}
					}
				}
			}
		&
			\vlderivation{
				\vlin{\Krule}{}{\vdash \tdia[\ta]\fGam,\tbox[\ta]\fB}{
					\vliin{\cutr}{}{\vdash \fGam, \fB}{
						\vlhy{\vdash \fGam,\fA}
					}{
						\vlhy{\vdash\nfA,\fGam, \fB}
					}
				}
			}
		\\
			\vlderivation{
				\vliin{\cutr}{}{\vdash \fGam}{
					\vlin{\btestr}{}{\vdash \fGam, \tbox[\tfB]\fA}{\vlhy{\vdash \fGam, \fA\lor\cneg\fB}}
				}{
					\vlin{\dtestr}{}{\vdash \fGam, \tdia[\tfB]\nfA}{\vlhy{\vdash \fGam, \nfA\land\fB}}
				}
			}
			&
			\vlderivation{
				\vliin{\cutr}{}{\vdash \fGam}{
					\vlhy{\vdash \fGam, \fA\lor\cneg\fB}
				}{
					\vlhy{\vdash \fGam, \nfA\land\fB}
				}
			}
		\\
			\vlderivation{
				\vliin{\cutr}{}{\vdash \fGam}{
					\vliin{\bplusr}{}{
						\vdash \fGam, \tbox[\ta\oplus\tb]\fA
					}{
						\vlhy{\vdash \fGam, \tbox[\ta]\fA}
					}{
						\vlhy{\vdash \fGam, \tbox[\tb]\fA}
					}
				}{
					\vlin{\dplusr}{}{\vdash\fGam, \tdia[\ta\oplus\tb]\nfA}{
						\vlhy{\vdash\fGam, \tdia[\ta]\nfA,\tdia[\tb]\nfA}
					}
				}
			}
		&
			\vlderivation{
				\vliin{\cutr}{}{\vdash \fGam}{
					\vlhy{\vdash \fGam, \tbox[\ta]\fA}
				}{
					\vliin{\cutr}{}{
						\vdash \fGam, \tdia[\ta]\nfA
					}{
						\vlhy{\vdash \fGam, \tbox[\tb]\fA}
					}{
						\vlhy{\vdash\fGam, \tdia[\ta]\nfA,\tdia[\tb]\nfA}
					}
				}
			}
		\\
			\vlderivation{
				\vliin{\cutr}{}{\vdash \fGam}{
					\vlin{\bseqr}{}{\vdash \fGam, \tbox[\ta;\tb]\fA}{\vlhy{\vdash \fGam, \tbox[\ta]\tbox[\tb]\fA}}
				}{
					\vlin{\dseqr}{}{\vdash \fGam, \tdia[\ta;\tb]\nfA}{\vlhy{\vdash \fGam, \tdia[\ta]\tdia[\tb]\nfA}}
				}
			}
		&
			\vlderivation{
				\vliin{\cutr}{}{\vdash \fGam}{
					\vlhy{\vdash \fGam, \tbox[\ta]\tbox[\tb]\fA}
				}{
					\vlhy{\vdash \fGam, \tdia[\ta]\tdia[\tb]\nfA}
				}
			}
		\\
			\vlderivation{
				\vliin{\cutr}{}{\vdash \fGam}{
					\vliin{\bstarr}{}{\vdash \fGam, \tbox[\kstar\ta]\fA}{
						\vlhy{\vdash \fGam, \fA}
					}{
						\vlhy{\vdash \fGam, \tbox[\ta;\kstar\ta]\fA}
					}
				}{
					\vlin{\dstarr}{}{\vdash\fGam, \tdia[\kstar\ta]\nfA}{
						\vlhy{\vdash\fGam, \nfA,\tdia[\ta;\kstar\ta]\nfA}
					}
				}
			}
		&
			\vlderivation{
				\vliin{\cutr}{}{\vdash \fGam}{
					\vlhy{\vdash \fGam, \fA}
				}{
					\vliin{\cutr}{}{
						\vdash \fGam, \nfA
					}{
						\vlhy{\vdash \fGam, \tbox[\ta;\kstar\ta]\fA}
					}{
						\vlhy{\vdash\fGam, \nfA,\tdia[\ta;\kstar\ta]\nfA}
					}
				}
			}
		\end{array}
		\\\hline\\
		\begin{array}{cc}
			\vlderivation{
				\vliin{\cutr}{}{\vdash\fGam, \fDel}{
					\vlin{\rrule_1}{}{\vdash\fGam, \fA}{
						\vlhy{\vdash  \fGam_1, \fA}
					}
				}{
					\vlhy{\vdash\nfA, \fDel}
				}
			}
		\quad\cutelim\quad
			\vlderivation{
				\vlin{\rrule_1}{}{\vdash\fGam, \fDel}{
					\vliin{\cutr}{}{\vdash\fGam_1, \fDel
					}{
						\vlhy{\vdash\fGam_1, \fA }
					}{
						\vlhy{\vdash \nfA, \fDel}
					}
				}
			}
			&
			\vlderivation{
				\vliin{\cutr}{}{\vdash\fGam, \fDel}{
					\vliin{\rrule_2}{}{\vdash\fGam, \fA}{\vlhy{\vdash\fGam_1,\fA}}{\vlhy{\vdash\fGam_2}}
				}{
					\vlhy{\vdash\fDel, \nfA}
				}
			}
		\quad\cutelim\quad
			\vlderivation{
				\vliin{\rrule_2}{}{\vdash\fGam,\fDel} {
					\vliin{\cutr}{}{\vdash\fGam_1,\fDel}{\vlhy{\vdash\fGam_1, \fA}}{\vlhy{\vdash\nfA,\fDel}}
				}{
					\vlhy{\vdash\fGam_2}
				}
			}
		\\
			\mbox{with }\rrule_1 \mbox{ unary rule}
			&
			\mbox{with }\rrule_2\in\Set{\land,\bplusr,\bstarr}
		\end{array}
	\end{array}$}
	\caption{
		Cut-elimination steps in $\Spdl$ (with $\Krule$ restricted on $\ta\in\atomProg$).
		The steps in the bottom-most row are called \emph{commutative}.
	}
	\label{fig:cutElim}
\end{figure}

In order to prove cut-elimination in $\Spdl$, we adapt the proof in \cite{acc:cur:gue:CSL24,acc:cur:gue:CSL24ext} to define an infinitary rewriting 
defined from the cut-elimination steps in \Cref{fig:cutElim}
able to remove all $\cutr$-rules from progressing derivations in $\Spdlcut$.

\begin{remark}\label{rem:atomiK}
	To reduce the cases taken into account in \Cref{fig:cutElim}, 
	we restrain the rule $\Krule$ to atomic programs $\pa\in\atomProg$.
	The general instance of this rule is derivable 
	reasoning by induction on the structure of $\ta\in\progSet$ using this atomic version of the rule $\Krule$ and rules $\bplusr,\dplusr,\bseqr,\dseqr,\bstarr,\dstarr$.
\end{remark}

The proof of cut-elimination can be summarized as follows:
\begin{itemize}
	\item 
	we prove that the set of approximations of derivations of a same sequent $\fGam$ is a Scott domain;
	
	\item 
	we then define a strategy of cut-elimination over $\Spdlcut$ as a set of maximal sequences of open derivations obtained by applying cut-elimination steps to open derivations over $\Spdlcut$.
	In such strategy enforce determinism, and we require a \emph{coherence} condition ensuring that each sequence in the strategy starting from an open derivation $\dD$ can be projected \resp{lifted} to a view in the strategy over an open derivation $\dD'$ such that $\dD$ is an approximation of $\dD'$ \resp{$\dD'$ is an approximation of $\dD$}.

	\item 
	we prove that each sequence $\sigma$ in a cut-elimination strategy $\hbh$, where cut-elimination steps are applied \emph{bottom-up},
	defines a Scott-continuous function $f_\sigma$, which associates to each derivation $\dD$ the derivation which is the limit of succession of the greatest $\cutr$-free approximations the derivations in view $\sigma$;
	
	\item 
	we conclude by showing that each $f_\sigma(\dD)$ is a well-defined and progressing derivations.
	
\end{itemize}

\begin{remark}\label{rem:notCurryHoward}
	Note that in \cite{acc:cur:gue:CSL24} the authors rely on the confluence of cut-elimination over finite approximations.
	This property is due to the fact that the system is inspired by the parsimonious linear logic \cite{mazza:15,maz:ter:15},
	a variant of linear logic \cite{gir:ll} following the tradition of light and soft linear logic \cite{girard:98,laf:soft,maz:LLandP}.

	However, such a desirable feature is not possible in $\Spdl$ since this system is an extension of a sequent calculus for classical logic.
	For a simple example of non-confluence, consider the following critical pair of cut-elimination steps, known as a \emph{Lafont critical pair}:
	$$	
	\vlderivation{
		\vliin{\cutr}{}{\vdash\fGam}{
			\vlin{\Wrule}{}{\vdash\fGam,\fA}{\vlpr{\dD'}{}{\vdash\fGam}}
		}{
			\vlin{\Wrule}{}{\vdash\fGam,\nfA}{\vlpr{\dD''}{}{\vdash\fGam}}
		}
	}
	\qquad
	\mbox{with $\dD'$ and $\dD''$ distinct cut-free derivations}
	$$

	Therefore, we define cut-elimination strategies to avoid this issue, by forcing the strategy to be deterministic, fixing an arbitrary choice of cut-elimination steps whenever there are more than one possible step to apply.
%
\end{remark}

We first recall standard definitions on Scott domains and Scott-continuous functions.
\begin{definition}\label{def:Scott}
	Let $S$ be a set, $S'$ be a subset of $S$, and let $\leq$ be a partial order over $S$.
	We say that $S$ is a \defn{directed set} if for all $x,y\in S$ there is $z\in S$ such that $x\leq z$ and 
	$y\leq z$.
	An \defn{upper bound} of $S'$ is an element $x\in S$ such that $y\leq x$ for all $y\in S'$;
	a  \defn{supremum} of $S'$ (also denoted $\mysup S'$) is an upper bound $x$ such that $x\leq y$ for any $y$ upper bound of $S'$.
	A $c\in S$ is \defn{compact} if for all direct subset $S'\subseteq S$ such that 
	if $\mysup S'$ is defined and $c\leq \mysup S'$, then $c\leq x$ for a $x\in S$.
	
	A \defn{Scott domain} is a pair $\dom=\tuple{S,\leq}$ such that:
	\begin{itemize}
		\item $\dom$ is \defn{directed complete}: every directed subset of $S$ has a supremum;

		\item $\dom$ is \defn{bounded complete}: every subset which has an upper bound has a supremum;
		
		\item $\dom$ is \defn{algebraic}: every element in $\dom$ can be seen as the supremum of a directed set of compact elements of $\dD$.
	\end{itemize}
	
	A function $f$ over a Scott domain $\dom$ is \defn{Scott-continuous} if it is continuous, and it preserves suprema, that is, if $f(\mysup S')=\mysup (f(S'))$.
\end{definition}

\begin{proposition}
	The set of open derivations in $\oSpdl$ with conclusion $\fGam$ is a Scott domain (w.r.t. $\prec$)
	with compact elements the open derivations with conclusion $\fGam$.
\end{proposition}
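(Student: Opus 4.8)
The statement asserts that $\langle \oSpdl_\fGam, \preceq\rangle$ — the set of open derivations over $\Spdlcut$ with fixed conclusion $\fGam$, ordered by the approximation relation — is a Scott domain, with the compact elements being exactly the finite such derivations. I would proceed by verifying the three defining axioms (directed-completeness, bounded-completeness, algebraicity) and then identifying the compact elements, after establishing one structural fact that makes all three verifications routine: namely, that $\preceq$ is essentially the componentwise/set-theoretic inclusion of labelled trees that happens to land inside $\oSpdl_\fGam$. Concretely, since a derivation $\dD$ is a labelling of a tree $\tree_\dD$ and $\dD'\preceq\dD$ means $\tree_{\dD'}\subseteq\tree_{\dD}$ with $\dD'(v)=\dD(v)$ on the common domain, a family $\{\dD_i\}_{i\in I}$ has an obvious candidate supremum: the labelling whose underlying tree is $\bigcup_i \tree_{\dD_i}$ and whose value at each node is the (necessarily unique, by compatibility) common label. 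The whole argument reduces to checking that this candidate is again a legal open derivation over $\Spdlcut$ with conclusion $\fGam$ under the right hypotheses.

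\textbf{Directed-completeness.} Given a directed family $\{\dD_i\}_{i\in I}$, form $\dD := \bigsqcup_i \dD_i$ as above. The union of the trees is prefix-closed and satisfies the sibling condition of \Cref{def:derivation} because each $\tree_{\dD_i}$ does and because directedness forces any two nodes to already lie together in some $\tree_{\dD_j}$. The labelling is well-defined because on overlaps $\dD_i$ and $\dD_j$ agree (again using a common upper bound $\dD_k$). Finally $\dD$ is a genuine open derivation: at any non-leaf node $v$ of $\dD$, the node $v$ together with all its children lies in some $\tree_{\dD_k}$ (directedness), so the rule instance witnessed there in $\dD_k$ is inherited by $\dD$; and $\dD(\epsilon)=\fGam$ since every $\dD_i$ has conclusion $\fGam$. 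That $\dD$ is the least upper bound is immediate from the construction. \textbf{Bounded-completeness} is the same computation but without assuming directedness: if $\{\dD_i\}$ has \emph{some} upper bound $\mathcal E$, then all the $\tree_{\dD_i}$ are subsets of the single tree $\tree_{\mathcal E}$ and all labellings agree with $\mathcal E$, so again $\bigcup_i\tree_{\dD_i}$ with the common labelling is a well-formed sub-derivation of $\mathcal E$, hence in $\oSpdl_\fGam$, and it is the supremum.

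\textbf{Compact elements and algebraicity.} I claim the compact elements are exactly the \emph{finite} open derivations (finite underlying tree). If $\dD$ is finite and $\dD\preceq\bigsqcup S'$ for a directed $S'$, then each of the finitely many nodes of $\tree_\dD$ appears in some member of $S'$, and by directedness a single $\dD'\in S'$ contains all of them and agrees with $\dD$ on them, so $\dD\preceq\dD'$; hence $\dD$ is compact. Conversely, an infinite derivation $\dD$ is the directed supremum of its finite approximations (truncate $\tree_\dD$ at finitely many nodes — these form a directed set under $\preceq$ whose union is $\dD$, using König-style cofinality of finite truncations), and $\dD$ does not approximate any proper member, so it is not compact. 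The same observation — every $\dD\in\oSpdl_\fGam$ equals the directed supremum of its finite approximations — is precisely algebraicity, once one notes finite approximations are compact. \textbf{The main obstacle} I anticipate is purely bookkeeping rather than conceptual: making sure the notion of "finite approximation obtained by truncation" always yields a legal \emph{open} derivation (one must cut precisely at nodes, turning them into open premises, and check no rule instance is left half-present), and, symmetrically, that the supremum construction never creates a node that is a leaf in the union but a non-leaf — i.e. a node with a rule but missing premises — in some $\dD_i$; directedness is exactly what rules this out, so the care is in invoking it at each step rather than in any deep difficulty.
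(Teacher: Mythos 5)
Your proof is correct and follows essentially the same route as the paper, which disposes of directed and bounded completeness "by definition of $\preceq$" (i.e.\ the pointwise-union construction you spell out) and of algebraicity by noting every open derivation is the supremum of its approximations. Your explicit identification of the compact elements as the \emph{finite} open derivations, and your care that truncations and unions never leave a rule instance half-present, just make precise what the paper's two-line argument leaves implicit.
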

\begin{proof}
	Directed and bounded completeness follows by definition of $\preceq$.
	Algebricity follows by the remark that each open derivation $\dD$ can be seen as the supremum of the set $\apx[\dD]$.
\end{proof}

We define maximal cut-elimination strategies as sequences of open derivations obtained by applying the cut-elimination steps to open derivations.
The name \emph{strategy} is intended to evoke the idea of both a reduction strategy for a rewriting system.

\begin{definition}
	A \defn{cut-elimination view} for $\dD\in\oSpdl$ is a 
	countable sequence $\sigma=\sigma_0\cdot\sigma_1\cdots\sigma_n\cdots$
	(with length $\lengthof{\sigma}\in\N\cup\set{\infty}$)
	of open derivations in $\oSpdl$
	with $\sigma_0=\dD$ 
	and
	such that
	$\sigma_{i+1}$ is obtained by applying a cut-elimination step to $\sigma_i$.
	It is \defn{bottom-up} if each derivation $\sigma_{i+1}$ is obtained applying a cut-elimination step to a bottom-most reducible $\cutr$-rule in $\sigma_i$.
	
	A family of views $\strat$ is a \defn{deterministic maximal cut-elimination strategy} (or simply \defn{\mces}) if:
	\begin{itemize}
		\item 
		$\strat$ is \defn{fully deterministic}: 
		$\strat^\dD\coloneqq\set{\sigma\in\strat\mid \sigma_0=\dD}$ contains a unique view $\sigma^\dD$ for each $\dD\in\oSpdl$;
		
		\item 
		$\strat$ contains \defn{maximal views} only:
		no cut-elimination step can be applied to $\sigma_{\lengthof{\sigma}}$
		for any $\sigma\in\strat$;
		

		
		\item
		$\strat$ is \defn{coherent} over approximations:
		if $\dD\preceq \dD'$ and 
		$\rrule$ is a reducible $\cutr$-rule both in $\dD$ and in $\dD'$, and $\dD_1$ \resp{$\dD_1'$} is the derivation obtained from $\dD$ \resp{$\dD'$} by applying a cut-elimination step to $\rrule$ , 
		then 
		$\dD\cdot \dD_1\cdot\sigma_1\in\strat^\dD$
		iff 
		$\dD'\cdot \dD'_1\cdot\sigma'_1\in\strat^{\dD'}$
		
	\end{itemize}
	It is \defn{bottom-up} if all views in $\strat$ are.

	Given a bottom-up \mces $\strat$, we define a function $f_\strat$  associating to each $\dD\in\oSpdl$ the greatest (w.r.t. $\preceq$)  $\cutr$-free approximation of $\dD$ obtained by applying the cut-elimination steps to $\dD$ according to the (unique) view in $\strat_\dD$ 
	$$
	f_\strat(\dD)= 
	\mysup_{\dK\in\apx[\dD]}
	\cf{\sigma^\dK_{\lengthof{\sigma^\dK}}}
	$$
\end{definition}

\def\appi#1#2{\mathsf{k}_{#2}({#1})}
\begin{proposition}
	If $\hbh$ is a bottom-up \mces, 
	then $f_\hbh$  is Scott-continuous.
\end{proposition}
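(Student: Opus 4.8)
\emph{The plan} is to verify the two ingredients of Scott-continuity in \Cref{def:Scott} separately: monotonicity of $f_\hbh$ with respect to $\prec$, and preservation of suprema of directed sets. We first record the facts we use. Each cut-elimination step of \Cref{fig:cutElim} modifies only a bounded neighbourhood of the reduced $\cutr$-rule (by inspection), and reducing a bottom-most $\cutr$-rule never destroys $\cutr$-free material already present; hence for every cut-elimination view $\sigma$ the sequence $\cf{\sigma_0}\preceq\cf{\sigma_1}\preceq\cdots$ is a $\preceq$-chain, so $f_\sigma(\dD)=\mysup_i\cf{\sigma_i}$ is a well-defined $\cutr$-free open derivation, and by coherence the family $\set{f_\sigma(\dD)\mid\sigma\in\hbh_\dD}$ is directed, so $f_\hbh(\dD)=\mysup_{\sigma\in\hbh_\dD}f_\sigma(\dD)$ is well-defined. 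Finally $\cf{\cdot}$ is monotone: if $\dD'\preceq\dD$ then every $\cutr$-free approximation of $\dD'$ is also one of $\dD$, so $\cf{\dD'}\preceq\cf{\dD}$.

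\emph{Monotonicity.} Let $\dD'\preceq\dD$. The crucial step is to \emph{lift} views along the approximation: given $\sigma'\in\hbh_{\dD'}$ we construct $\sigma\in\hbh_\dD$ with $\sigma'_i\preceq\sigma_i$ for all $i\le\lengthof{\sigma'}$, by induction on $i$. At step $i$, $\sigma'_{i+1}$ is obtained by reducing a bottom-most reducible $\cutr$-rule $\rrule$ of $\sigma'_i$; since $\sigma'_i\preceq\sigma_i$ and reducibility of $\rrule$ depends only on a bounded neighbourhood, $\rrule$ is also present and reducible (by the same step) in $\sigma_i$, and by the coherence-over-approximations clause there is a view in $\hbh$ realising exactly that step on $\rrule$; memory-lessness and totality then let us paste such a continuation after $\sigma_0\cdots\sigma_i$, yielding $\sigma_{i+1}$ with $\sigma'_{i+1}\preceq\sigma_{i+1}$ (if $\sigma'$ is already maximal at $\sigma'_i$, take any maximal continuation of $\sigma$). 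By monotonicity of $\cf{\cdot}$ we get $\cf{\sigma'_i}\preceq\cf{\sigma_i}$ for all $i$, hence $f_{\sigma'}(\dD')\preceq f_\sigma(\dD)\preceq f_\hbh(\dD)$; taking the supremum over $\sigma'\in\hbh_{\dD'}$ gives $f_\hbh(\dD')\preceq f_\hbh(\dD)$.

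\emph{Preservation of directed suprema.} Let $\set{\dD_j}_{j\in J}\subseteq\oSpdl$ be directed with supremum $\dD=\mysup_j\dD_j$. The inequality $\mysup_j f_\hbh(\dD_j)\preceq f_\hbh(\dD)$ is immediate from monotonicity, so it suffices to prove $f_\hbh(\dD)\preceq\mysup_j f_\hbh(\dD_j)$; by algebraicity of the domain it is enough to show that every compact $\mathcal E\preceq f_\hbh(\dD)$ lies below $f_\hbh(\dD_j)$ for some $j$. Such an $\mathcal E$ is a finite open derivation, hence $\mathcal E\preceq f_\sigma(\dD)$ for some $\sigma\in\hbh_\dD$ (directedness of the family), and since $f_\sigma(\dD)=\mysup_i\cf{\sigma_i}$ is a directed supremum, $\mathcal E\preceq\cf{\sigma_n}$ for some $n\in\N$. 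By the locality of the cut-elimination steps, the first $n$ steps of $\sigma$ together with the finitely many nodes of $\mathcal E$ involve only a finite approximation $\dD_0\preceq\dD$; concretely, there is a view $\sigma^0\in\hbh_{\dD_0}$ agreeing with $\sigma$ on its first $n$ steps (up to the part of $\dD$ present in $\dD_0$) with $\mathcal E\preceq\cf{\sigma^0_n}$. Being compact and below $\mysup_j\dD_j$ with $\set{\dD_j}$ directed, $\dD_0\preceq\dD_j$ for some $j$; lifting $\sigma^0$ from $\dD_0$ to $\dD_j$ as in the monotonicity argument yields $\sigma^j\in\hbh_{\dD_j}$ with $\mathcal E\preceq\cf{\sigma^0_n}\preceq\cf{\sigma^j_n}\preceq f_{\sigma^j}(\dD_j)\preceq f_\hbh(\dD_j)$, as required. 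Hence $f_\hbh$ preserves directed suprema and is therefore Scott-continuous.

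\emph{Main obstacle.} The delicate point throughout is the transport of views between a derivation and its approximations: one must check that a bottom-up view on the smaller derivation can be mirrored step by step, in the same bottom-up regime, on the larger one, and conversely that a view on the larger derivation restricts to a consistent view on a sufficiently large finite approximation, with the relation $\preceq$ preserved at every intermediate stage. This is precisely what the coherence-over-approximations clause of a \mces is designed to supply (cf.\ \Cref{rem:strat}), but combining it cleanly with bottom-upness, memory-lessness, totality, and the locality of the cut-elimination steps of \Cref{fig:cutElim} requires care; crucially, the non-determinism of cut-elimination (several bottom-most cuts, Lafont pairs) causes no problem here because we only ever need \emph{some} matching view to exist, not a canonical one.
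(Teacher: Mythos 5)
Your proof is correct and follows essentially the same route as the paper's: both arguments hinge on the locality of finitely many bottom-up cut-elimination steps (so that each finite stage $\sigma_i$ of a view is already determined by a suitable approximation of $\dD$) together with the coherence-over-approximations clause to transport views between $\dD$ and its approximations. You merely package this as monotonicity plus preservation of directed suprema via compact elements, where the paper directly exhibits $f_\sigma(\dD)$ as the supremum of the values of $f$ on the approximations $\appi{\dD}{i}$ extracted from the view itself.
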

\begin{proof}
	Let $\dD\in\oSpdl$ and $\sigma\in\hbh$ such that $\sigma_0=\dD$.
	%
	We let $\appi\dD i$ be defined as the greatest approximation of $\dD$ containing all nodes of $\dD$ which will end being below all $\cutr$-rules in $\sigma_i$, that is:
	\begin{itemize}
		\item all nodes which are in the greatest approximation of $\dD$ containing the greatest cut-free approximation of $\dD$, \item all nodes in $\dD$ which will interact with some cut-elimination step applied to reach $\sigma_i$.
	\end{itemize} 

	For any finite approximation $\dD'\in\apx[\dD]$, we can find a $i\in\N$ such that $\dD' \preceq \appi\dD i$ because $\hbh$ is bottom-up and it will eventually push all cuts in $\dD'$ away from the bottom of $\dD$.
	Then, if $\sigma^{\appi\dD i}$ is the view  defined by coherently restricting $\sigma$ on the finite approximation $\appi\dD i$ of $\dD=\sigma_0$ (i.e., a compact in $\apx[\dD]$), 
	we must have $\cf{\sigma_i}\preceq\cf{\sigma^{\appi\dD i}_i} = f_{\hbh}(\appi\dD i)$.
	Thus, 
	$
	f_\strat(\dD)=
	\mysup_{\dK\in\apx[\dD]}
	\cf{\sigma^\dK_{\lengthof{\sigma^\dK}}} 
	= 
	\mysup_{i\geq 0} f_{\hbh}(\appi\dD i)$
	and conclude thanks to the coherence condition.
\end{proof}

\begin{restatable}{theorem}{cutElimPDL}\label{thm:pSpdl:cutElim}
	The rule $\cutr$ is admissible in $\pSpdl$.
\end{restatable}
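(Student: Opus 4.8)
The goal is to show that every sequent provable in $\pSpdlcut$ is already provable in $\pSpdl$, i.e.\ that $\cutr$ can be eliminated. By the infrastructure set up above, I would fix a bottom-up \mces\ $\hbh$ (which exists: one builds it by a greedy, memory-less choice of bottom-most reducible cuts, closing under the coherence requirement over approximations) and, given any progressing derivation $\dD\in\pSpdlcut$ of a sequent $\fGam$, set $\dD^\star := f_\hbh(\dD)$. I then have to verify three things: (i) $\dD^\star$ is $\cutr$-free; (ii) $\dD^\star$ is a genuine (well-defined, locally correct) derivation of $\fGam$; and (iii) $\dD^\star$ is progressing. Admissibility of $\cutr$ in $\pSpdl$ then follows by applying this to each topmost $\cutr$ in an arbitrary $\pSpdlcut$-derivation, or directly to the whole derivation.

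**Steps in order.** First I would establish (i): each $\cf{\sigma_i}$ is $\cutr$-free by definition, and since $f_\hbh(\dD)$ is a supremum (in the Scott domain of open derivations over $\fGam$) of such $\cutr$-free approximations, and the $\cutr$-free open derivations form a sub-domain closed under directed suprema, $\dD^\star$ contains no $\cutr$. Second, (ii): well-definedness of the supremum uses that $\{\cf{\sigma_i}\}_i$ is directed — here I rely on the bottom-up discipline, which guarantees that applying a cut-elimination step only ever \emph{increases} the $\cutr$-free part, so $\cf{\sigma_i}\preceq\cf{\sigma_{i+1}}$, whence the family is a chain and its supremum is again a locally correct open derivation with conclusion $\fGam$; maximality of the view $\sigma$ ensures no reducible cut survives in the limit. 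Third, (iii) the progress condition: this is where the real work lies. I would argue that a progressing thread in $\dD$ is preserved along each cut-elimination step (the steps in \Cref{fig:cutElim} either leave threads through $\tbox[\kstar\ta]$-formulas intact or split/merge them in a controlled way, and crucially the commutative steps never destroy the $\Krule$/$\bstarr$ alternation on a thread), so that every infinite branch of $\dD^\star$ is the image of an infinite branch of $\dD$ carrying a progressing thread, and that thread maps to a progressing thread of $\dD^\star$. The Scott-continuity of $f_\hbh$ (proved just above) lets me reduce the infinitary bookkeeping to the finite approximations: a would-be non-progressing infinite branch of $\dD^\star$ would already appear, as a non-progressing finite-then-infinite pattern, in some $\cf{\sigma_i}$ and hence (by coherence) be traceable back to $\dD$, contradicting that $\dD\in\pSpdl$.

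**Main obstacle.** The delicate point is exactly the preservation of the progress condition through the \emph{non-deterministic, non-confluent} rewriting — specifically through the commutative cut-elimination steps (bottom row of \Cref{fig:cutElim}), where a $\cutr$ is permuted above a rule $\rrule_1$ or $\rrule_2$ and thereby duplicated into the premises. One must show that the ``fairness'' of the bottom-up strategy prevents a progressing thread from being indefinitely postponed: a $\cutr$ sitting below a $\Krule$ on a progressing thread must eventually be reduced, so that the thread's $\Krule$-steps are not starved. I would make this precise by a productivity/fairness argument on the bottom-up \mces, combined with a König-lemma style analysis of the branches of $\dD^\star$, using the coherence over approximations to transfer progress witnesses back and forth between $\dD^\star$ and its finite $\cutr$-free stages. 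Assembling this transfer cleanly — rather than the routine case analysis of the individual reduction steps — is the technical heart of the proof, and is the analogue of the corresponding (confluent) argument in \cite{acc:cur:gue:CSL24}, here complicated by having to work relative to a fixed non-deterministic strategy.
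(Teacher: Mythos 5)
Your overall route coincides with the paper's: fix a bottom-up \mces, take $\dD^\star=f_\sigma(\dD)$ for a view $\sigma$ on $\dD$, and verify $\cutr$-freeness, well-definedness, and preservation of the progressing condition; your points (i) and (ii) match the paper's (largely implicit) argument. The genuine gap is in your point (iii), and precisely in the only delicate case: infinite branches of $\dD^\star$ that occur in no finite stage $\sigma_k$. Your contradiction step --- that a non-progressing infinite branch of $\dD^\star$ ``would already appear, as a non-progressing finite-then-infinite pattern, in some $\cf{\sigma_i}$'' and could then be traced back to $\dD$ --- cannot work: such a limit branch by definition appears in no $\cf{\sigma_i}$, and failure of progress is a statement about what happens infinitely often along the branch, so it has no finite witness that coherence could transfer back to $\dD$. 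The fallback you offer (an unspecified ``productivity/fairness'' argument plus a K\H{o}nig-style analysis) names the right obstacle but is exactly the part that has to be supplied; preservation of threads under finitely many steps, which you do argue, only settles the branches already present at some finite stage.

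What the paper does instead is give a concrete quantitative mechanism for the limit branches. For a branch $\branch$ of $f_\sigma(\dD)$ not occurring at any finite stage, it defines the open branches $\branch_i$ as the nodes of $\branch$ lying strictly below the $\cutr$-rules of $\sigma_i$; these are finite, increasing, and have supremum $\branch$. It then observes that each cut-elimination step either does not interact with $\Krule$-rules or is the $\Krule$-versus-$\Krule$ step, and that in the latter case $\branch_{i+1}$ strictly extends $\branch_i$ and contains one more $\Krule$-rule whose principal formula is that of the progressing thread inherited from $\dD$. Progress points therefore accumulate along the $\branch_i$, and $\branch$, being their supremum, carries a progressing thread. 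This counting of progress points added at each $\Krule$-vs-$\Krule$ step (adapted from the technique of the cited CSL'24 work) is the concrete replacement for your fairness appeal; without it, or an equivalent quantitative argument showing that the $\Krule$-steps of the thread are produced infinitely often in the limit, your proof of the progressing condition is incomplete.
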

\begin{proof}
	We can always define a bottom-up \mces, e.g., by always applying a cut-elimination step to the right-most bottom-most $\cutr$-rule in a derivation, and fixing a reductum for each possible critical pair of cut-elimination steps.

	Once we fix a bottom-up \mces $\hbh$, to conclude it suffices to prove that if $\dD\in\pSpdl$, then $f_{\sigma}(\dD)$ is a well-defined progressing derivation for any $\sigma\in\hbh_{\dD}$.
	For this purpose, we prove that each branch $\branch$ in $f_{\sigma}(\dD)$ does not end with an open premise and, if infinite, it contains a progressing thread.
	
	If there is a $k\in\N$ such that $\branch$ occurs in $\sigma_k$ and in all $\sigma_i$ with $i\geq k$,
	then,
	either $\branch$ is finite, ending with a $\AXrule$-rule (since $\dD$ has no open branches), or infinite, in which case $\branch$ contains a progressing thread since progressing threads are preserved by finitely many cut-elimination steps.
	
	Otherwise, we define the \emph{open branch} $\branch_i$ as the set of nodes in $\sigma_i$ containing the nodes in $\branch$ (seen as set of nodes) strictly below any $\cutr$-rule in $\sigma_i$. 
	Note that each $\branch_i$ can be seen as a finite subset of nodes in $\branch$,
	and that the sequence $(\branch_i)_{i\geq 0}$ is well-ordered with supremum $\branch$ by definition.
	The existence of a progressing thread $\rho$ in the infinite (therefore not ending with an open premise) branch $\branch$ is proven by remarking that each cut-elimination steps either do not interact with $\Krule$-rules, or it is a cut-elimination step of the form $\Krule$-vs-$\Krule$.
	In the latter case, the sequence of nodes in $\branch_{i+1}$ strictly longer than $\branch_i$ and 
	it contains an additional `progressing point' of the progressive thread of $\branch$ with respect to $\branch_i$, that is, 
	the number of $\Krule$-rules in $\branch_{i+1}$ with principal formula the one of the progressive thread of $\branch$ is one more than the one in $\branch_i$. This ensures progressiveness of $\branch$, which is the supremum of $(\branch_i)_{i\geq 0}$.
	
	Details can easily obtained by adapting the technique developed in \cite{acc:cur:gue:CSL24ext}, where the modality $\oc$ (resp.~$\wn$) can be considered as a box (resp.~a diamond), and the rule $\mathsf{c!p}$ can plays the same role of $\Krule$ (and $\bstarr$) to define the progressing condition.
\end{proof}

\subsection{Soundness and Completeness of $\pSpdl$}

We conclude by proving soundness and completeness of $\pSpdl$ with respect to $\PDL$ relying on the cut-elimination result.

\begin{lemma}\label{lem:pSpdl:starImpN}
	If $\proves[\pSpdl]\fGam,\tbox[\kstar\ta]\fA$, then $\proves[\pSpdl]\fGam,\tbox[\ta^n]\fA$ for any $n\in\N$.
\end{lemma}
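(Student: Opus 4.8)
The plan is to reduce the statement to the cut-elimination theorem just established, \Cref{thm:pSpdl:cutElim}. Recall that, by De Morgan, $\cneg{(\tbox[\kstar\ta]\fA)} = \tdia[\kstar\ta]\cneg\fA$, so $\tdia[\kstar\ta]\cneg\fA, \tbox[\ta^n]\fA$ is exactly the sequent-form of the (valid) implication $\tbox[\kstar\ta]\fA \imp \tbox[\ta^n]\fA$. Assuming we can derive this sequent in $\pSpdl$ for every $n$, the lemma follows at once: take the given derivation of $\fGam, \tbox[\kstar\ta]\fA$, weaken both it and the auxiliary derivation (by a few applications of $\Wrule$) so that they have conclusions $\fGam, \tbox[\ta^n]\fA, \tbox[\kstar\ta]\fA$ and $\fGam, \tbox[\ta^n]\fA, \tdia[\kstar\ta]\cneg\fA$ respectively, apply one instance of $\cutr$ on $\tbox[\kstar\ta]\fA$ to obtain $\fGam, \tbox[\ta^n]\fA$ in $\pSpdlcut$, and finally invoke \Cref{thm:pSpdl:cutElim} to get a derivation of the same sequent in $\pSpdl$.

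It then remains to prove $\proves[\pSpdl] \tdia[\kstar\ta]\cneg\fA, \tbox[\ta^n]\fA$ for every $n \in \N$, which I would do by induction on $n$ directly inside $\pSpdl$. For $n = 0$ we have $\tbox[\ta^0]\fA = \tbox[\tepsi]\fA$: apply $\tbox[\tepsi]$, then $\dstarr$ to the $\kstar\ta$-diamond, then $\Wrule$ to discard the residual $\tdia[\ta;\kstar\ta]\cneg\fA$, and close with $\AXrule$ on $\fA, \cneg\fA$. For the inductive step, write $\tbox[\ta^{n+1}]\fA = \tbox[\ta;\ta^n]\fA$ and apply in order: $\bseqr$ (reaching $\tbox[\ta]\tbox[\ta^n]\fA$), then $\dstarr$ and $\dseqr$ on the $\kstar\ta$-diamond (reaching the context $\cneg\fA, \tdia[\ta]\tdia[\kstar\ta]\cneg\fA$ beside $\tbox[\ta]\tbox[\ta^n]\fA$), then $\Wrule$ to drop $\cneg\fA$, and finally the general $\Krule$-rule, which is derivable by \Cref{rem:atomiK} and leaves precisely the premise $\tdia[\kstar\ta]\cneg\fA, \tbox[\ta^n]\fA$ provided by the induction hypothesis. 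Each of these derivations is finite, hence vacuously progressing, so it is a genuine derivation of $\pSpdl$.

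A slightly shorter alternative for the auxiliary sequent avoids the explicit induction: $\tbox[\kstar\ta]\fA \imp \tbox[\ta^n]\fA$ is a theorem of $\PDL$ (immediate by induction on $n$ from $\axA{\kstarsymb}$, $\axA{;}$, $\axA{\epsilon}$ and the rules $\mprule, \necrule$, or semantically via \Cref{thm:PDLax:sc}), so by \Cref{lem:pSpdlcutSound} it is derivable in $\pSpdlcut$, and hence so is the corresponding sequent; one then cuts against the hypothesis and eliminates cuts exactly as above.

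I do not expect a genuine obstacle: the substantive work has already been done in \Cref{thm:pSpdl:cutElim}, and what is left is the routine construction of the finite auxiliary derivations together with weakening bookkeeping for the cut. The only points needing a moment's care are (i) using the \emph{general} form of $\Krule$ (legitimate by \Cref{rem:atomiK}) rather than its atomic restriction, and (ii) tracking contexts when assembling the $\cutr$ instance; degenerate programs such as $\ta = \tempt$ cause no trouble, since $\tbox[\tempt]\fB$ is provable in $\pSpdl$ by the rule $\tbox[\tempt]$ followed by $\Wrule$.
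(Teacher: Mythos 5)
Your proposal is correct and follows essentially the same route as the paper: the paper also builds, by induction on $n$, a finite auxiliary derivation of $\vdash\tdia[\kstar\ta]\nfA,\tbox[\ta^n]\fA$ (using $\dstarr$, $\bseqr$/$\dseqr$, $\Wrule$, $\AXrule$ and $\Krule$), cuts it against the given derivation of $\fGam,\tbox[\kstar\ta]\fA$, and then invokes \Cref{thm:pSpdl:cutElim}. The only difference is cosmetic bookkeeping for the context (the paper packages $\fGam$ into a disjunction before cutting, you handle it by weakening), which does not change the argument.
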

\begin{proof}
	It follows from \Cref{thm:pSpdl:cutElim} since we have a derivation $\dD_n$ 
	defined as in \Cref{fig:starImpN}.
\end{proof}
\begin{figure*}[t]
	\adjustbox{max width=\textwidth}{$\begin{array}{c}
		\vlderivation{
			\vliin{\cutr}{}{\vdash \fGam,\tbox[\ta^n]\fA }{
				\vliq{\lor}{}{\vdash \left(\bigvee\fGam\right) \lor \tbox[\kstar\ta]\fA}{
					\vlpr{}{\IH}{\vdash \fGam,\tbox[\kstar\ta]\fA}
				}
			}{
				\vliin{\land}{}{
					\vdash 
					(\bigvee \cneg\fGam) \land \tdia[\kstar\ta]\nfA,
					\fGam,\tbox[\ta^n]\fA
				}{
					\vlin{\Wrule}{}{\vdash \bigwedge \cneg\fGam,\fGam,\tbox[\ta^n]\fA}{
						\vliiin{\land}{}{\vdash \bigwedge \cneg\fGam,\fGam}{
							\vlin{\AXrule}{}{\vdash \cneg\fGam_1,\fGam_1}{\vlhy{}}
						}{\vlhy{\cdots}}{
							\vlin{\AXrule}{}{\vdash \cneg\fGam_{\sizeof{\fGam}},\fGam_{\sizeof{\fGam}}}{\vlhy{}}
						}
					}
				}{
					\vliq{\Wrule}{
					}{
						\vdash \tdia[\kstar\ta]\nfA,\fGam,\tbox[\ta^n]\fA
					}{
						\vlpr{\dD_n'}{
						}{
							\vdash \tdia[\kstar\ta]\nfA,\tbox[\ta^n]\fA
						}
					}
				}
			}
		}
		\\\\\mbox{where}\\
		\begin{array}{c@{\qquad\vrule\qquad}c@{\qquad\vrule\qquad}c}
			\dD_0'
			&
			\dD_1'
			&
			\dD_{n+1}'
		\\\hline
			\vlderivation{
				\vlin{\dstarr}{}{
					\vdash \tdia[\kstar\ta]\nfA, \tbox[\tepsi]\fA
				}{
					\vlin{\Wrule}{}{
						\vdash \nfA,\tdia[\ta;\kstar\ta]\nfA, \tbox[\tepsi]\fA
					}{
						\vlin{\tbox[\tepsi]}{}{
							\vdash \nfA, \tbox[\tepsi]\fA
						}{
							\vlin{\AXrule}{}{\vdash \nfA, \fA}{\vlhy{}}
						}
					}
				}
			}
			&
			\vlderivation{
				\vlin{\dstarr}{}{
					\vdash \tdia[\kstar\ta]\nfA, \tbox[\ta]\fA
				}{
					\vlin{\Wrule}{}{
						\vdash \tdia[\ta]\nfA,\tdia[\ta;\kstar\ta]\nfA, \tbox[\ta]\fA 
					}{
						\vlin{\AXrule}{}{
							\vdash \tdia[\ta]\nfA, \tbox[\ta]\fA
						}{\vlhy{}}
					}
				}
			}
			&
			\vlderivation{
				\vlin{\Wrule+\dstarr}{}{
					\vdash  \tdia[\kstar\ta]\nfA, \tbox[\ta^n]\fA
				}{
					\vlin{\dseqr+\bseqr}{}{
						\vdash  
						\tdia[\ta;\kstar\ta]\nfA, \tbox[\ta^n]\fA
					}{
						\vlin{\Krule}{}{
							\vdash  
							\tdia[\ta]\tdia[\kstar\ta]\nfA, \tbox[\ta]\tbox[\ta^{n-1}]\fA
						}{
							\vlpr{\dD'_{n-1}}{}{
								\vdash 
								\tdia[\kstar\ta]\nfA, \tbox[\ta^{n-1}]\fA 
							}
						}
					}
				}
			}
		\end{array}
	\end{array}$}
	\caption{Derivations proving that if $\proves[\pSpdl]\fGam,\tbox[\kstar\ta]\fA$, then $\proves[\pSpdl]\fGam,\tbox[\ta^n]\fA$ for any $n\in\N$.}
	\label{fig:starImpN}
\end{figure*}

To prove soundness and completeness of $\pSpdl$ with respect to $\pSpdl$, we use the notion of \defn{Fischer-Ladner closure} of a formula $\fA$.
This is  defined as the smallest set of formulas $\FL\fA$ containing $\fA$ and such that the conditions in \Cref{eq:FL} hold.
\begin{equation}\label{eq:FL}\adjustbox{max width=\textwidth}{$
	\begin{array}{ll}
		\FL{\fP}=\set{\fP}
		&
		\FL{\nfP}=\set{\nfP}
		\\
		\FL{\fA\land\fB}\supset (\FL{\fA}\cup\FL{\fB})
		&
		\FL{\fA\lor\fB}\supset (\FL{\fA}\cup\FL{\fB})
		\\
		\FL{\tbox[\ta]\fA}\supset \FL{\fA}
		&
		\FL{\tdia[\ta]\fA}\supset \FL{\fA}
		\\
		\FL{\tbox[\tfA]\fB}\supset (\FL{\fA}\cup\FL{\fB})
		&
		\FL{\tdia[\tfA]\fB}\supset (\FL{\fA}\cup\FL{\fB})
		\\
		\FL{\tbox[\ta\oplus\tb]\fA}\supset (\FL{\tbox[\ta]\fA}\cup \FL{\tbox[\tb]\fA})
		&
		\FL{\tdia[\ta\oplus\tb]\fA}\supset (\FL{\tdia[\ta]\fA}\cup \FL{\tdia[\tb]\fA})
		\\
		\FL{\tbox[\ta;\tb]\fA}\supset (\FL{\tbox[\ta]\tbox[\tb]\fA})
		&
		\FL{\tdia[\ta;\tb]\fA}\supset (\FL{\tdia[\ta]\tdia[\tb]\fA})
		\\
		\FL{\tbox[\kstar\ta]\fA}\supset (\FL{\tbox[\ta]\tbox[\kstar\ta]\fA})
		&
		\FL{\tdia[\kstar\ta]\fA}\supset (\FL{\tdia[\ta]\tdia[\kstar\ta]\fA})
	\end{array}
$}\end{equation}
If $\fGam$ is a  sequent, then $\FL\fGam\coloneqq\bigcup_{\fA\in\fGam}\FL\fA$.

\begin{remark}[Fisher-Ladner Analyticity]\label{rem:FLanal}
	By rules inspection, 
	each sequent occurring in a derivation $\dD\in\Spdl$ with conclusion $\fGam$
	is a subset of $\FL\fGam=\bigcup_{\fA\in\fGam}\FL\fA$.
	More precisely, if $\fDel$ is a premise of a rule with conclusion $\fGam$, then $\FL\fDel\subseteq\FL\fGam$.
\end{remark}

\begin{theorem}\label{thm:PDL:SC}
	Let $\fGam$ be a sequent.
	Then $\proves[\pSpdl]\fGam$ iff $\proves[\PDL]\fGam$.
\end{theorem}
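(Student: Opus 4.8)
The statement is an equivalence whose two directions are of very different difficulty. The implication $\proves[\PDL]\fGam \Rightarrow \proves[\pSpdl]\fGam$ (completeness of the sequent calculus) requires nothing new: by \Cref{lem:pSpdlcutSound} a $\PDL$-provable sequent has a derivation in $\pSpdl\cup\set{\cutr}$, and by the cut-elimination theorem \Cref{thm:pSpdl:cutElim} the rule $\cutr$ is admissible in $\pSpdl$, so that derivation becomes a cut-free one.

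The implication $\proves[\pSpdl]\fGam \Rightarrow \proves[\PDL]\fGam$ (soundness of the sequent calculus) is the substantial part. The plan is first to observe that every rule of $\Spdl$ other than $\cutr$ preserves $\PDL$-provability — if all its premises are $\PDL$-provable, so is its conclusion — which is a routine rule-by-rule inspection against the axioms of \Cref{fig:axPDL} (for instance $\bstarr$ is justified by $\axA{;}$ and $\axA{\kstarsymb}$, and $\Krule$ by $\mathbf K$, $\necrule$ and the distribution of $\tbox[\ta]$ over $\land$), while the conclusions of the zero-premise rules — the one for $\ftrue$, $\AXrule$ and $\tbox[\tempt]$ — are outright $\PDL$-provable. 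For a \emph{finite} derivation this already gives $\proves[\PDL]\fGam$ by induction on its height, so the real content is the infinite (progressing) case.

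To handle that case I would argue by contradiction: take a cut-free progressing $\dD$ deriving $\fGam$ and assume $\not\proves[\PDL]\fGam$. By \Cref{rem:FLanal} every sequent occurring in $\dD$ belongs to the finite set of subsets of $\FL\fGam$; using this, build a branch $\branch$ of $\dD$ along which no sequent is $\PDL$-provable — at each node the soundness of its rule, read contrapositively, supplies a $\PDL$-unprovable premise to continue with, and the zero-premise rules are never reached since their conclusions are $\PDL$-provable — so $\branch$ is infinite. Progressiveness then yields a progressing thread $\rho$ on $\branch$ whose principal formula $\psi=\tbox[\kstar\tb]\fC$ is active for $\Krule$-rules, hence principal for $\bstarr$-rules, infinitely often; at each such $\bstarr$-step $\rho$ must take the premise containing $\tbox[\tb;\kstar\tb]\fC$, since the other premise replaces $\psi$ by its proper subformula $\fC$, out of whose Fischer--Ladner closure $\psi$ can never return. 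The contradiction is extracted from this loop by equipping, via \Cref{thm:PDLax:sc}, each unprovable sequent along $\rho$ with a falsifying world and assigning to $\psi$ the rank equal to the least $n$ such that the current world has a $\tb^n$-successor refuting $\fC$ — finite because $\meanof{\tbox[\kstar\tb]\fC}=\bigcap_{n}\meanof{\tbox[\tb^n]\fC}$, the semantic fact underlying \Cref{lem:PDL:starImpN} — and then checking that this rank strictly decreases at every turn of the loop, which is impossible in $\N$. A more syntactic variant first reduces $\dD$ to a regular derivation, which is legitimate since only finitely many sequents occur, and turns the loop on $\psi$ into an application of the loop-invariance rule $\looprule$ for an invariant assembled from the finitely many contexts appearing in the loop, with \Cref{lem:pSpdl:starImpN} linking $\psi$ to its unfoldings $\tbox[\tb^n]\fC$.

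I expect the main obstacle to be exactly this last step of the $\pSpdl\Rightarrow\PDL$ direction: distilling from the progressing thread a genuinely well-founded measure, or equivalently a correct loop invariant. The delicate point is the bookkeeping when the iterated program $\tb$ or the body $\fC$ of $\psi$ themselves contain iterations (nested fixpoints): there a single natural-number rank along one thread no longer suffices, and one must pass to a finer ordinal measure or treat several interleaved threads at once.
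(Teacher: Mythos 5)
Your completeness direction coincides with the paper's (it is exactly \Cref{lem:pSpdlcutSound} plus \Cref{thm:pSpdl:cutElim}). For soundness, however, you take a genuinely different route. The paper stays at the level of provability: after observing local soundness it uses \Cref{lem:pSpdl:decomp} to reduce to conclusions of the form $\fGam,\tbox[\kstar\ta]\fA$, then runs a minimal-counterexample argument with respect to the order on sequents induced by inclusion of Fischer--Ladner closures (\Cref{rem:FLanal}), playing \Cref{lem:PDL:starImpN} (some finite unfolding $\fGam,\tbox[\ta^n]\fA$ is $\PDL$-unprovable) against \Cref{lem:pSpdl:starImpN} (every finite unfolding is $\pSpdl$-provable) to produce a smaller counterexample and conclude by well-foundedness. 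You instead give the classical semantic argument for non-wellfounded calculi: propagate a falsifying situation up the derivation to build an infinite branch, invoke the progress condition of \Cref{def:prog}, and refute the progressing thread by an infinite descent on a rank. Your route is more self-contained and more robust (it uses the progress condition directly and does not need \Cref{lem:pSpdl:decomp} or \Cref{lem:pSpdl:starImpN}), while the paper's reuses machinery it has already built and avoids any explicit thread/rank analysis.

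Two remarks on your argument. First, the descent must live inside a \emph{single} model: invoking \Cref{thm:PDLax:sc} to attach a falsifying world to each unprovable sequent separately gives worlds in unrelated models, and then ``the current world has a $\tb^n$-successor refuting $\fC$'' has no stable meaning across a $\Krule$-step. The standard fix is to apply \Cref{thm:PDLax:sc} once to $\fGam$, fix a countermodel and a world falsifying it, and build the branch by semantic propagation, choosing at each $\Krule$-step a successor lying on a minimal falsifying path for the boxed formula and preferring, at each $\bstarr$-unfolding, the premise containing the body whenever it is falsified; with these choices the rank of the thread's principal formula is $\geq 1$ whenever the thread takes the right premise and strictly decreases at each $\Krule$-step it crosses. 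Second, your closing worry about nested iterations is unfounded for $\PDL$: the rank of $\tbox[\kstar\tb]\fC$ at a world of a fixed model is just the minimal length of a $\tb$-path to a world falsifying $\fC$, a natural number irrespective of any stars occurring inside $\tb$ or $\fC$, and only the single principal formula of the single progressing thread of the branch needs to be measured, so no ordinal refinement or multi-thread bookkeeping is required.
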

\begin{proof}
	Completeness of $\pSpdl$ with respect to $\PDL$ is a consequence of \Cref{lem:pSpdlcutSound} and $\cutr$-admissibility (\Cref{thm:pSpdl:cutElim}).

	To prove that $\pSpdl$ is sound for $\PDL$, 
	we first observe that each rule in $\pSpdl$ is locally sound, 
	that is, 
	if each premise of a rule is valid in $\PDL$, then its conclusion is.
	As a consequence, if a sequent $\fGam$ is a conclusion of a derivation $\dD$ in $\pSpdl$ is not valid in $\PDL$, we deduce that $\dD$ must be infinite.
	Then, by \Cref{lem:pSpdl:decomp}, $\dD$ can be written as a finite open derivation with open premises of the form $\fGam,\tbox[\kstar\ta]\fA$ which are derivable in $\pSpdl$.
	We deduce that if the conclusion of $\dD$ is not valid in $\PDL$, then
	there must exist a sequent of the form $\fGam,\tbox[\kstar\ta]\fA$ which is derivable in $\pSpdl$ (via an infinite derivation) but whose conclusion is not valid in $\PDL$.
	Therefore to prove soundness it suffices to prove the the statement for infinite derivations in $\pSpdl$ with conclusion a sequent of the form $\fGam,\tbox[\kstar\ta]\fA$.

	Let $\fGam,\tbox[\kstar\ta]\fA$ 
	and such that
	$\proves[\pSpdl]\fGam,\tbox[\kstar\ta]\fA$ but $\not\proves[\PDL]\fGam,\tbox[\kstar\ta]\fA$.
	We can assume $\fGam,\tbox[\kstar\ta]\fA$ to be minimal with respect to the well-founded partial order over sequents defined by the inclusion of the Fisher-Lander closures of the sequents (see \Cref{rem:FLanal}).
	By \Cref{lem:PDL:starImpN} there is a $n\in \N$ minimal
	such that $\not\proves[\PDL] \fGam, \tbox[\ta^n]\fA$,
	while by 
	\Cref{lem:pSpdl:starImpN}
	we have that 
	if $\proves[\pSpdl]\fGam,\tbox[\kstar\ta]\fA$;
	then we must have that 
	$\proves[\pSpdl]\fGam,\tbox[\kstar\ta]\fA$
	but $\not\proves[\PDL]\fGam,\tbox[\kstar\ta]\fA$.
	This would only be possible if 
	$\fGam,\tbox[\kstar\ta]\fA$
	is not minimal since all rules in $\pSpdl$ are analytic as intended in \Cref{rem:FLanal}. Absurd.
\end{proof}

\section{Embedding Operational Semantics in Propositional Dynamic Logic}\label{sec:DOL}
\def\OS{\mathcal O}
\def\axO{\mathbf{O}_{\mathsf{Atom}}}
\def\axOind{\mathbf{O}_{\mathsf{Ind}}}
\def\osKlee{\OS_{\mathbb K}}
\def\Pf{\mathcal F_\progSet}

We consider a new set of formulas defined as \PDL-formulas
where the programs in $\progSet$ are provided with an \emph{operational semantics}.

\def\treq{\sim_{\mathsf{Tr}}}
\def\tracesof#1{\mathsf{Tr}(#1)}
\begin{definition}
	Let $\progSet$ be a set of programs possibly containing a set of \defn{tests} $\atomTests$.
	An \defn{operational semantics} for a set of programs $\progSet$
	with labels in $\labSet$\footnote{
		Unless specified otherwise, we can assume $\labSet\subseteq\progSet$.
	}
	is
	a labeled binary relation between programs 
	$\OS\subset\progSet\times\labSet\times\progSet$ 
	whose elements are called (labeled) transitions and may be written as $\pa\sosto{\tb}\pc$ instead of $(\pa,\tb,\pc)$.
	We assume $\progSet$ contains a distinguished program $\pepsi$ (called \defn{terminated program}) such that 
	$(\pepsi,\tb,\pc)\notin\OS$ for any 
	$\tb\in\labSet$ and $\pc\in\progSet$.
	An operational semantics is \defn{finitely branching}
	if 
	the set of $\{(\tb,\pc) \mid \pa\sosto{\tb}\pc\}$
	is finite 
	for all 
	$\pa\in\progSet$~\cite[Def.~2.2]{AFV01}.
	
	A \defn{trace} is a finite sequence of labels
	\footnote{
		We may use the color \tcol{green} for traces whenever we want to distinguish them from general programs (in \pcol{red}).
	}%
	.
	A trace $\ta$ is \defn{valid} for a program $\pb$ if 
	there is 
	a trace $\ta'$ 
	such that 
	$\ta=\tat;\ta'$
	which is 
	valid for a $\pb'$
	such that $\pb\sosto{\tat}\pb'$.
	We denote by $\tracesof\pa$ the set of traces valid for $\pa$.
	Two programs are \defn{trace equivalent} (denoted $\pa\treq\pb$)
	if $\tracesof\pa=\tracesof\pb$.
	
\end{definition}
\begin{definition}
	The \defn{(operational) Fisher-Ladner closure} of a formula $\fA$ is defined as the smallest set of formulas closed with respect to conditions given for the Fisher-Ladner closure in \Cref{eq:FL} plus the following:
	\begin{equation*}\adjustbox{max width=.48\textwidth}{$
		\begin{array}{ll}
			\FL{\tbox[\pa]\fA}\supseteq (\FL{\tbox[\tb]\tbox[\pc]\fA})
			&
			\mbox{for all $\tb$ such that $\pa\sosto\tb\pc$}
		\\
			\FL{\tdia[\pa]\fA}\supseteq (\FL{\tdia[\tb]\tdia[\pc]\fA})
		&
			\mbox{for all $\tb$ such that $\pa\sosto\tb\pc$}
		\end{array}
	$}\end{equation*}
\end{definition}

\begin{definition}[Dynamic Operational  Logic]
	\label{def:OPDL}
	Let $\OS$ be an finitely branching operational semantics for a set of programs $\progSet$.
	The set $\Pf$ of \defn{$\progSet$-formulas} (or simply \defn{formulas} when clear)
	is defined by the same grammar in \Cref{fig:PDLform} 
	by letting 
	$\atomProg=\labSet\cup \left(\progSet\setminus\atomTests\right)$
	and
	by assuming 
	that the set of propositional atoms $\atomSet$ is such that
	$\atomTests\subseteq\set{\tfA \mid \fA \in \Pf}$.%
	\footnote{
		The condition on the set of propositional atoms ensures us that any evaluation of a conditional or a guard required in the operational semantics 
		can be evaluated in the logic itself without the need of an external language.
		See \Cref{sec:chor}.
	}
	
	We write $\proves[\POL] \fA$ 
	if $\fA$ is derivable 
	using rules and axioms of $\PDL$ (\Cref{fig:axPDL})
	plus
	the 
	following axiom
	\begin{equation}\label{eq:OSax}
		\axA{\OS}
		\;:\;
		\tbox[\pa]\fA \fequiv 
		\left(
		\bigwedge\limits_{\pa\sosto\tb\pc}
		\tbox[\tb]\tbox[\pc]\fA \right)
	\end{equation}
	The \defn{operational propositional dynamic logic} of $\OS$ 
	(denoted $\POL[\OS]$, or simply $\POL$ if $\OS$ is clear) 
	is the set of formulas such that $\proves[\POL] \fA$.
\end{definition}

\begin{remark}\label{rem:finiteBranching}
	In this paper, we only consider finitely-branching operational semantics. At the syntactic level, this guarantees that the axiom $\axA{\OS}$ is a finite formula.
	Note that this condition does not guarantee the so-called \emph{small world property} for models of $\POL$, nor that the Fisher-Ladner closure of a sequent is finite.
\end{remark}

\begin{example}\label{ex:DOL:PDL}
	The standard $\PDL$ can be recovered as the $\POL$ where the set of programs $\progSet$ is the set of regular programs generated from a set on instructions $\atomProg$ and a set of tests $\atomTests$ (i.e. a Kleene algebra with tests)
	provided with the following operational semantics $\mathcal{K}$:
	\begin{equation}\label{eq:osKlee}
		\begin{array}{r@{\;}c@{\;}l@{\qquad}r@{\;}c@{\;}l@{\qquad}r@{\;}c@{\;}l}
			\pat;\pb 		&\sosto{\tat}	& \pb		&
			\pa\oplus\pb 	&\sosto{\tepsi}	& \pa 		&
			\pa\oplus\pb 	&\sosto{\tepsi}	& \pb		
			\\
			\tfA;\pb 		&\sosto{\tfA}	& \pb		&
			\kstar\pa 		&\sosto{\tepsi}	& \pepsi	&
			\kstar\pa	 	&\sosto{\tepsi}	& \pa;\kstar\pa
		\end{array}
	\end{equation}
	Note that
	if we identify the 
	sequential composition, choice and iteration in $\progSet$ with the analogous operations we use to generate regular languages, 
	then 
	each instance of the axiom $\axA\OS$ in $\POL[\mathcal K]$
	is derivable using the axioms 
	$\axA{;}$, $\axA{\kstarsymb}$ and $\axA{\oplus}$.
\end{example}

\paragraph{Recovering previous approaches}
We give a few examples of how previous approaches can be recovered in our framework, for the reader familiar with the literature on $\PDL$ and its applications to concurrency.

\begin{example}\label{ex:DOL:CCSstar}
	In \cite{PDL:interleaving,BENEVIDES201723} the authors study different versions of \PDL for fragments of $\CCS^\kstarsymb$ \cite{busi:gab:zav:RRI}, the restriction of Milner's \CCS~\cite{M80} that replaces recursion with iteration \`a la Kleene star. These versions can be recovered in $\POL$ by instantiating it with the operational semantics considered in those papers.
	In particular, in \cite{PDL:interleaving} the parallel constructor (see \Cref{fig:CCS}) is restricted to be at the root of the syntactic tree of the terms representing processes,
	while in \cite{BENEVIDES201723} the parallel is entirely removed.
\end{example}
\begin{example}\label{ex:DOL:pi}
	The logic $\pi\DL$ in \cite{PDL:pi} is the $\POL$ of the unconventional version of the $\pi$-calculus where the \emph{replication} constructor (usually denoted by $!$) is replaced by the \emph{iteration} (denoted by $\kstarsymb$).
	That is, 
	the version of the $\pi$-calculus they consider 
	stays at
	the standard $\pi$-calculus 
	as 
	$\CCS^\kstarsymb$ 
	stays at 
	the standard $\CCS$.
\end{example}
\begin{example}\label{ex:DOL:A}
	The logic $\mathsf{APDL}$ in \cite{PDL:flow} (based on an idea proposed in \cite{pratt:PDL})
	reminds a test-free fragment of the $\POL$ where a program $\pa^i_j$ is a path over a finite state automaton from a state $i$ to a state $j$.
	However, the induction axiom in $\mathsf{APDL}$,
	which can be reformulated in the following way
	$$
	\axA{5}\colon\quad
	\left(\bigwedge_{\pa^i_j\sosto{\pa^i_k}\ta^k_l}^{\ta^k_l \text{ transition}}
	\tbox[\pa^i_k]\left(\fA_k\imp \tbox[\ta^k_l] \fA_l\right)\right) \imp \left(\fA_i \imp \tbox[\pa^i_j]\fA_j\right)
	$$
	would be derivable in 
	a $\POL$ with programs defined as paths over a finite state automaton only if $\fA_i=\fA_j=\fA_k=\fA_l$.
	Note that $\axA{5}$ require that $\ta^k_l$ is a transition in the automaton
	but there is no such condition on $\pa^i_k$ (nor that $\pa^i_k$ is an elementary path).
	This may be problematic in an automaton containing loops.
\end{example}

\subsection{Soundness and Completeness of $\POL$}

We conclude this section by proving soundness and completeness of the axiomaxization of $\POL$ with respect to its semantics.
We then consider a sequent system extending $\Spdl$, and we prove its soundness and completeness with respect to $\POL$ by lifting the method used in the previous section.

\begin{definition}\label{def:kripOS}
	A \defn{model (for $\POL$)} is a Kripke frame defined similarly to \Cref{def:krip}
	with the following differences:
	\begin{itemize}
		\item the meaning of each label $\labSet$ is defined by an accessibility relation $\meanof\pat\subseteq W\times W$;
		
		\item the meaning of a (non-atomic) program $\pa\in\progSet$ is defined as
		\begin{equation}\label{eq:meaning}
			\meanof{\pa}= \bigcup\limits_{\pa\sosto\tb\pc}
			\meanof{\tb;\pc}
			\mydot
		\end{equation}
	\end{itemize}
	The satisfability relation $\vDash$ is defined analogously to \Cref{def:krip} by considering $\POL$ models.
\end{definition}

In order to prove soundness and completeness, 
we recall that a formula $\fA$ is said \defn{refutable} if $\vdash \cneg\fA$,
and \defn{consistent} if not refutable, that is, $\not\vdash\cneg\fA$.
Similarly, a (possibly infinite) set of formulas $A$ is \emph{consistent} 
if there is no refutable finite subset $\set{\fA_1,\ldots,\fA_n}$ of $A$,
that is, if $\not\vdash\cneg{\fA_1\land\cdots\land\fA_n}$.
\begin{theorem}\label{thm:DOLax:sc}
	Let $\fA$ be a formula. 
	Then 
	$\proves[\POL] \fA$  iff $\vDash_{\POL} \fA$.
\end{theorem}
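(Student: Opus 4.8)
The plan is to mirror the standard $\PDL$ soundness–completeness argument (Theorem~\ref{thm:PDLax:sc}), which the excerpt cites from \cite{DLbook}, adapting each step to the operational setting. Soundness is the routine direction: I would check that every axiom of $\POL$ is valid in every $\POL$-model. All axioms inherited from $\PDL$ (Figure~\ref{fig:axPDL}) are sound by the same computations as in \Cref{thm:PDLax:sc}, since the clauses in Figure~\ref{fig:modPDL} for the logical connectives, tests, and the program operators $;$, $\oplus$, $\kstarsymb$ are unchanged; one only has to observe that the meaning clause \eqref{eq:meaning} for non-atomic programs is well defined. The genuinely new obligation is the axiom $\axA\OS$ of \eqref{eq:OSax}: here I would simply unfold $\meanof{\tbox[\pa]\fA}$ using $\meanof\pa = \bigcup_{\pa\sosto\tb\pc}\meanof{\tb;\pc}$, push the box over the union (a box over a union of relations is the intersection of the boxes), and recognise $\bigwedge_{\pa\sosto\tb\pc}\tbox[\tb]\tbox[\pc]\fA$ on the nose, using finite branching so the conjunction is a genuine formula. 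The modus ponens, necessitation, and loop-invariance rules preserve validity by the same arguments as for $\PDL$.

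**For completeness** I would run the Fischer–Ladner canonical-model construction exactly as in \cite{DLbook}, but over the \emph{operational} Fischer–Ladner closure introduced just before \Cref{def:OPDL}. Given a consistent formula $\fA$, take $\FLO\fA$ (which is finite by \Cref{rem:finiteBranching}'s finite-branching hypothesis applied locally — note that the closure need not be globally finite, but along any derivation the relevant fragment is, which is all the construction needs), let the worlds be the maximal $\POL$-consistent subsets of $\FLO\fA$, put $\meanof\fP = \{w \mid \fP\in w\}$ for atoms, and for each label $\tb\in\labSet$ define $\meanof\tb = \{(v,w) \mid \text{for all }\tbox[\tb]\fB\in\FLO\fA,\ \text{if }\tbox[\tb]\fB\in v\text{ then }\fB\in w\}$. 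One then proves the usual \emph{truth lemma}: for every formula $\fB\in\FLO\fA$ and every world $w$, $\fB\in w$ iff $w\vDash\fB$. The cases for connectives and tests are standard Lindenbaum-style arguments; the diamond/box cases for labels use the definition of $\meanof\tb$ together with the existence of suitable maximal consistent extensions, exactly as in the $\PDL$ proof. The one new case is $\tbox[\pa]\fB$ for a non-atomic program $\pa$: by the axiom $\axA\OS$ (and its De Morgan dual for diamonds), $\tbox[\pa]\fB\in w$ iff $\tbox[\tb]\tbox[\pc]\fB\in w$ for all $\pa\sosto\tb\pc$; applying the induction hypothesis to each $\tbox[\tb]\tbox[\pc]\fB$ and then the semantic clause \eqref{eq:meaning} gives $w\vDash\tbox[\pa]\fB$, and conversely. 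This is precisely where the choice of $\FLO{(-)}$ over the plain $\FL{(-)}$ pays off, since it guarantees all the $\tbox[\tb]\tbox[\pc]\fB$ are themselves in the closure and hence in the domain of the induction hypothesis.

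**The main obstacle** is the interaction between iteration and the operational closure in the truth lemma: for the $\tbox[\kstar\ta]$ case the standard $\PDL$ argument establishes $w\vDash\tbox[\kstar\ta]\fB$ by an induction on path length inside the canonical model, which relies on $\tbox[\kstar\ta]\fB$ being a (pre-)fixpoint — this is where \Cref{lem:PDL:starImpN} (and its $\pSpdl$-analogue \Cref{lem:pSpdl:starImpN}) does the work in the pure $\PDL$ setting. Here one must make sure that the same fixpoint reasoning still goes through when the underlying transitions are the operational ones rather than the Kleene ones: concretely, one needs that $\not\proves[\POL]\fC\imp\tbox[\pa^n]\fB$ for all $n$ whenever $\not\proves[\POL]\fC\imp\tbox[\pa]\fB$ \emph{and} $\pa$ is an iterative program in $\progSet$, which follows by combining \Cref{lem:PDL:starImpN} with the fact (from \Cref{ex:DOL:PDL}, generalised) that the operational-semantics axiom $\axA\OS$ for the Kleene-style clauses \eqref{eq:osKlee} is derivable from $\axA{;}$, $\axA\kstarsymb$, $\axA\oplus$. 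For a general operational semantics in which the iterative behaviour is encoded differently, one instead argues directly: finite branching plus the shape of $\axA\OS$ let you unfold $\tbox[\pa]\fB$ finitely at each step, and any infinite failure of $\fC\imp\tbox[\pa]\fB$ must already fail at some finite unfolding, because a model refuting the former refutes some finite approximant. I would isolate this as a lemma ``$\POL$-version of \Cref{lem:PDL:starImpN}'' and prove it first, after which the truth lemma and hence the theorem follow by the routine argument: if $\proves[\POL]\fA$ then $\fA$ is valid by soundness, and if $\not\proves[\POL]\fA$ then $\cneg\fA$ is consistent, so the canonical model for $\cneg\fA$ has a world satisfying $\cneg\fA$ by the truth lemma, whence $\not\vDash_\POL\fA$.
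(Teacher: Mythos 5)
Your route is the paper's route: soundness by validating the axioms (with $\axA{\OS}$ checked against the meaning clause \eqref{eq:meaning}), and completeness by adapting the canonical-model construction of \cite{PDLcompleteness}, with the only new case of the truth lemma, $\tbox[\pa]\fB$ for non-atomic $\pa$, discharged through $\axA{\OS}$ and \eqref{eq:meaning}. The first genuine problem is your finiteness claim for the operational Fischer--Ladner closure: it is false, and \Cref{rem:finiteBranching} says so explicitly -- finite branching bounds the width of each unfolding, not the number of reachable programs, so already a recursive \CCS process with infinitely many reachable states makes the closure infinite, and the hedge ``along any derivation the relevant fragment is finite'' is not an argument. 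This matters because the textbook construction you invoke (worlds $=$ maximal consistent \emph{subsets of the closure}) really uses finiteness, notably in the $\kstar{\ta}$ case of the truth lemma where one forms a disjunction over all atoms. The paper sidesteps this by taking worlds to be Lindenbaum-style maximal consistent \emph{sets of formulas} deciding every member of $\FL\fA$, with accessibility for a label $\tat$ given by consistency of $A\cup\tdia[\tat]B$; nothing there hinges on the closure being finite.

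The second problem is that your fallback for the iteration case, the proposed ``$\POL$-version of \Cref{lem:PDL:starImpN}'', is argued circularly: you pass from $\not\proves[\POL]\fC\imp\tbox[\pa]\fB$ to a \emph{model} refuting it and then to a refuted finite approximant, but producing a countermodel from non-derivability is exactly the completeness statement you are in the middle of proving. In the paper, \Cref{lem:PDL:starImpN} may legitimately be proved semantically because completeness of plain $\PDL$ (\Cref{thm:PDLax:sc}) is already available, and it is then used for the sequent calculus (\Cref{thm:PDL:SC}), not inside the Hilbert-style completeness argument for $\POL$. To repair your proof, either drop the filtration-style construction in favour of the paper's Lindenbaum canonical model (carrying out the $\kstar{\ta}$ case with the induction/loop-invariance axiom inside that model), or restrict your argument to formulas whose operational closure happens to be finite -- but as stated, the star case and the finiteness assumption it leans on do not go through.
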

\begin{proof}
	Soundness of $\proves[\POL]$ with respect to $\vDash_{\POL}$ follows by definition of models.
	To prove the completeness, 
	we adapt the proof of completeness for $\PDL$ in \cite{PDLcompleteness}.
	In particular, 
	proving completeness is equivalent to prove that 
	if $\fA$ is consistent, then $\not\vDash \cneg\fA$.
	By definition $\not\vDash \cneg\fA$ holds iff there is a model $\model$ and a world $w$ of $\model$ such that $\model, w \not\vDash\cneg\fA$,
	therefore $\model, w \vDash\fA$.
	For each formula $\fA$ construct such a model $\model_\fA$ as follows:
	\begin{itemize}
		\item 
		the model $\model_\fA$ has a world $w_A$
		for each 
		maximal consistent sets of formulas $A$
		such that,
		for each $\fB\in\FL\fA$,
		either $\fB\in A$ or $\cneg\fB\in A$;
		
		\item 
		for each $\tat\in\labSet$ we have
		$(w_A,w_B)\in\meanof{\tat}$
		iff 
		$A \cup \tdia[\tat]B$ is a consistent set of formulas.
	\end{itemize}
	In $\model_\fA$, the following properties hold:
	\begin{itemize}
		\item 
		if $A \cup \tdia[\pa]B$ is consistent, 
		then $(w_A,w_B)\in\meanof{\pa}$;
		
		\item 
		if $\tdia[\pa]\fB\in\FL\fA$
		and
		$w_A\in\meanof{\tdia[\pa]\fB}$, 
		then
		there is a $w_B$ such that $(w_A,w_B)\in\meanof{\pa}$;
		
		\item 
		for any $\fB\in\FL\fA$, 
		we have that 
		$w_A\in\meanof\fB$ iff $\fB\in A$;
	\end{itemize}

	We conclude that 
	if $\fA$ is consistent, 
	then $\meanof{\fA}\neq\emptyset$;
	therefore there is $w_A$ such that $\model_\fA,w_A\vDash \fA$.
\end{proof}

\begin{definition}
	We define the sequent system $\Sdol= \Spdl\cup \Set{\tbox[\OS],\tdia[\OS]}$
	defined by the rules in \Cref{fig:seqPDL} plus the two following rules capturing the axiom $\axA\OS$,
	
	\begin{equation}\label{fig:seqDOL}
		\begin{array}{c}
			\vliiinf{\bosr}{\dagger}{
				\vdash\fGam,\tbox[\pa]\fA
			}{
				\vdash\fGam,\tbox[\tb_1]\tbox[\pc_1]\fA
			}{\cdots}{
				\vdash\fGam,\tbox[\tb_n]\tbox[\pc_n]\fA
			}
			\qquad
			\vlinf{\dosr}{\dagger\S}{
				\vdash\fGam,\tdia[\pa]\fA
			}{
				\vdash\fGam,\tdia[\tb_1]\tdia[\pc_1]\fA,\ldots, \tdia[\tb_n]\tdia[\pc_n]\fA
			}
		\end{array}
	\end{equation}
	where the side condition $\dagger$ requires that the set $\set{(\tb_i,\pc_i)\mid i\in\intset1n}=\set{(\tb,\pc)\mid\pa\sosto\tb\pc}$ is finite, and the side condition $\S$ requires that $\Gamma$ is nonempty if $n=0$.
	
	The definition of progressive derivation in $\Sdol$ is obtained by replacing any occurrence of $\Spdl$ in \Cref{def:prog} with $\Sdol$.
	We denote by $\pSdol$ the set of progressing derivations in $\Sdol$.
\end{definition}

\begin{remark}
	The restriction on the operational semantics discussed in \Cref{rem:finiteBranching} also guarantees that the sequent rules in \Cref{fig:seqDOL} capturing the axiom $\axA\OS$ have a finite number of premises (in the case of the rule $\tbox[\OS]$)
	and finite-sequent premise (in the case of the rule $\tdia[\OS]$).
\end{remark}

\begin{theorem}\label{thm:cutElimDOL}
	The rule $\cutr$ is admissible in $\Sdol$.
\end{theorem}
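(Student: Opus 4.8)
The plan is to lift the cut-elimination argument for $\pSpdl$ (\Cref{thm:pSpdl:cutElim}) to $\Sdol$: I would keep the entire infrastructure --- the Scott domain of approximations of a derivation, the notion of bottom-up \mces, and the Scott-continuous functions $f_\hbh$ --- and only enlarge the set of cut-elimination steps of \Cref{fig:cutElim} with those handling the two new rules $\bosr$ and $\dosr$. The finite-branching hypothesis is exactly what makes this work at the level of steps: by \Cref{rem:finiteBranching} an instance of $\bosr$ has finitely many premises and an instance of $\dosr$ is unary with a finite-sequent premise, so each new step rewrites a finite fragment of a derivation into another finite fragment, as the framework requires. As in \Cref{rem:atomiK} I would keep $\Krule$ restricted to labels in $\labSet$, so that a box or diamond over a program $\pa\in\progSet$ is governed only by $\bosr$/$\dosr$, with no overlap with $\Krule$.

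There is essentially one new \emph{principal} step, $\bosr$-vs-$\dosr$. Writing $\set{(\tb_i,\pc_i)\mid i\in\intset1n}$ for the finite set $\set{(\tb,\pc)\mid\pa\sosto\tb\pc}$ and using $\cneg{\tbox[\tb]\tbox[\pc]\fA}=\tdia[\tb]\tdia[\pc]\nfA$, a $\cutr$ on $\tbox[\pa]\fA$ between an instance of $\bosr$ with premises $\vdash\fGam,\tbox[\tb_i]\tbox[\pc_i]\fA$ and an instance of $\dosr$ with premise $\vdash\fGam,\tdia[\tb_1]\tdia[\pc_1]\nfA,\ldots,\tdia[\tb_n]\tdia[\pc_n]\nfA$ rewrites to the cascade of $n$ cuts on the formulas $\tbox[\tb_i]\tbox[\pc_i]\fA$, obtained by cutting, for $i=1,\ldots,n$ in turn, the $i$-th premise of the $\bosr$ against the residue of the $\dosr$-premise, finally reaching $\vdash\fGam$; this is the verbatim analogue of the $\bplusr$-vs-$\dplusr$ and $\bstarr$-vs-$\dstarr$ steps of \Cref{fig:cutElim}. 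When $n=0$ the $\bosr$ instance is a $0$-premise rule, the side condition $\S$ forces $\fGam$ nonempty, and the step discards the cut, returning $\vdash\fGam$. For the \emph{commutative} steps, $\bosr$ joins the finitely-branching rules of \Cref{fig:cutElim} --- extending $\Set{\land,\bplusr,\bstarr}$, with the obvious $n$-ary generalisation that a cut permuted above an $n$-premise $\bosr$ produces $n$ cuts, one per premise --- and $\dosr$ joins the unary rules. All side conditions are preserved, since these steps leave the transition set $\set{(\tb_i,\pc_i)}$ untouched and only enlarge contexts.

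With the steps in place, the remainder of the proof of \Cref{thm:pSpdl:cutElim} goes through: approximations still form a Scott domain, a bottom-up \mces $\hbh$ exists, $f_\hbh$ is Scott-continuous, and for $\dD\in\pSdol$ one shows that each $f_\sigma(\dD)$ is a well-defined, cut-free, progressing derivation of the conclusion of $\dD$, which gives the claimed $\cutr$-admissibility. The one genuinely new point is that the added steps preserve progressing threads, and this is immediate: in the terminology of the proof of \Cref{thm:pSpdl:cutElim} they are all steps that do not interact with $\Krule$-rules, so --- exactly as for the other such steps --- along every branch they neither create nor destroy $\Krule$-progress points and only rearrange threads over finitely many nodes; hence every infinite branch of $f_\sigma(\dD)$ still carries a progressing thread and no branch ends in an open premise. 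I expect the actual effort --- just as for \Cref{thm:pSpdl:cutElim} --- to lie in the bookkeeping needed to check that the enlarged step set still admits a strategy satisfying the coherence condition over approximations; this requires no new idea and can be carried out by adapting the technique of \cite{acc:cur:gue:CSL24,acc:cur:gue:CSL24ext} exactly as in the proof of \Cref{thm:pSpdl:cutElim}.
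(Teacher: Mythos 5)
Your proposal is correct and follows essentially the same route as the paper: the paper's proof likewise keeps the whole $\pSpdl$ cut-elimination infrastructure and only adds the principal $\bosr$-vs-$\dosr$ step, rewritten as a cascade of cuts on the formulas $\tbox[\tb_i]\tbox[\pc_i]\fA$ (with the degenerate $n=0$ case collapsing to the $\dosr$-premise), observing exactly as you do that this step does not interact with the thread/progress reasoning. The only cosmetic difference is that the paper does not spell out the commutative cases or the coherence bookkeeping, which you flag explicitly; nothing in your argument diverges from or adds a gap relative to the paper's proof.
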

\begin{proof}
	The proof of cut-elimination for progressing derivations in $\Sdol\cup\set\cutr$ is similar to the one provided in \Cref{subse:PDLcutElim} and it only requires to consider the additional cut-elimination step in \Cref{fig:seqDOLcutElim}. Note that, in the degenerate case of $n=0$, the step simply rewrites the left-hand side into the premise of the application of rule $\tdia[\OS]$.
	Also, this step does not affect any of the reasoning on threads, which are crucial to guaranteeing the preservation of the progressing condition in \Cref{subse:PDLcutElim}.
\end{proof}

\begin{figure}[t]
	$$
	\vlderivation{
		\vliin{\land}{}{\vdash 
			\tbox[\pa]\fA 
			\fequiv
			\left(\bigwedge_{i=1}^n\tbox[\tb_i]\tbox[\pc]\fA \right)
		}{\vlin{\lor}{}{
				\vdash \tdia[\pa]\nfA \lor
				\left(\bigwedge_{i=1}^n\tbox[\tb_i]\tbox[\pc]\fA \right)
			}{
				\vlin{\land}{}{\vdash \tdia[\pa]\nfA ,
					\left(\bigwedge_{i=1}^n\tbox[\tb_i]\tbox[\pc]\fA \right)
				}{
					\vlhy{\left\{\vlderivation{
							\vlin{\tdia[\OS]}{}{
								\vdash \tdia[\pa]\nfA ,\tbox[\tb_i]\tbox[\pc]\fA
							}{
								\vliq{(n-1)\times\Wrule}{}{
									\vdash \tdia[\tb_1][\pc]\nfA, 
									\ldots ,
									\tdia[\tb_n][\pc]\nfA ,
									\tbox[\tb_i]\tbox[\pc]\fA
								}{
									\vlin{\AXrule}{}{\vdash \tdia[\tb_i]\tdia[\pc]\nfA ,\tbox[\tb_i]\tbox[\pc]\fA}{\vlhy{}}
								}
							}
						}\right\}_{\beta\in X}}
				}
			}
		}{
			\vlin{\lor}{}{
				\vdash \tbox[\pa]\fA \lor
				\left(\bigvee_{i=1}^n\tdia[\tb_i]\tdia[\pc]\nfA \right)
			}{
				\vlin{\tbox[\OS]}{}{\vdash \tbox[\pa]\fA ,
					\left(\bigvee_{i=1}^n\tdia[\tb_i]\tdia[\pc]\nfA \right)
				}{
					\vlhy{\left\{\vlderivation{
							\vliq{(n-1)\times\lor}{}{
								\vdash \tbox[\tb_i]\tbox[\pc]\fA,
								\left(\bigvee_{i=1}^n\tdia[\tb_i]\tdia[\pc]\nfA \right)
							}{
								\vliq{(n-1)\times\Wrule}{}{
									\vdash \tbox[\tb_i]\tbox[\pc]\fA,
									\tdia[\tb_1]\tdia[\pc]\nfA,\ldots,
									\tdia[\tb_n]\tdia[\pc]\nfA
								}{
									\vlin{\AXrule}{}{\vdash \tbox[\tb_i]\tbox[\pc]\fA,
										\tdia[\tb_i]\tdia[\pc]\nfA
									}{\vlhy{}}
								}
							}
						}\right\}_{\tb\in X}}
				}
			}			
		}
	}
	$$
	\caption{
		Derivation in $\pSdol$ of the axiom $\axO$, where we are assuming $\set{\tb_1,\ldots,\tb_n}=\set{\tb\in\labSet\mid \pa\sosto\tb\pc}$.
		The case $n=0$ is trivial and it is omitted.
	}
	\label{fig:proofAxO}
\end{figure}

\begin{restatable}{theorem}{thmDOLsc}\label{scPOL}
	Let $\fGam$ be a sequent.
	Then 
	$\proves[\pSdol]\fGam$ iff $\proves[\POL]\fGam$.
\end{restatable}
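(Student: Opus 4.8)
The plan is to follow the structure of the proof of \Cref{thm:PDL:SC}, treating the two directions separately and reusing the $\pSpdl$ development, since $\Sdol$ is $\Spdl$ together with only the two rules $\bosr$ and $\dosr$ (and the extra cut-elimination step of \Cref{thm:cutElimDOL}).

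\textbf{Completeness} ($\proves[\POL]\fGam$ implies $\proves[\pSdol]\fGam$). The plan is to show that every axiom and every rule of the $\POL$ axiomatisation is derivable in $\pSdol\cup\set\cutr$. The axioms and rules of $\PDL$ from \Cref{fig:axPDL} are derived exactly as in the proof of \Cref{lem:pSpdlcutSound} (see \Cref{fig:axDerivations,fig:loopInv}), which is legitimate because all the rules used there belong to $\Spdl\subseteq\Sdol$. The only new axiom, $\axA{\OS}$, is derived in $\pSdol$ by the progressing cut-free derivation of \Cref{fig:proofAxO}. Hence any Hilbert-style $\POL$-derivation of $\fGam$ translates into a derivation in $\pSdol\cup\set\cutr$, and \Cref{thm:cutElimDOL} removes the cuts, yielding $\proves[\pSdol]\fGam$.

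\textbf{Soundness} ($\proves[\pSdol]\fGam$ implies $\proves[\POL]\fGam$). By \Cref{thm:DOLax:sc} it is enough to prove the semantic statement that $\proves[\pSdol]\fGam$ implies $\vDash_{\POL}\fGam$. First, every rule of $\pSdol$ is locally sound for $\vDash_{\POL}$: for the rules of $\Spdl$ this is as in \Cref{thm:PDL:SC}, and for $\bosr$ and $\dosr$ it is immediate from the meaning clause~\eqref{eq:meaning} of \Cref{def:kripOS}, the side condition $\dagger$ (which forces the premises to range over \emph{all} transitions $\pa\sosto\tb\pc$) and the side condition $\S$. Thus a \emph{finite} $\pSdol$-derivation has a valid conclusion. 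For infinite derivations I would replay the argument of \Cref{thm:PDL:SC}: the analogue of \Cref{lem:pSpdl:decomp} for $\pSdol$ (whose proof is unchanged, as progressing threads still start at formulas $\tbox[\kstar\ta]\fA$) reduces the problem to infinite $\pSdol$-derivations with a conclusion of the form $\fGam,\tbox[\kstar\ta]\fA$; the analogue of \Cref{lem:PDL:starImpN} follows from \Cref{thm:DOLax:sc} together with the fact that the clause for $\meanof{\kstar\ta}$ is unchanged; and the analogue of \Cref{lem:pSpdl:starImpN} follows from \Cref{thm:cutElimDOL}, by adapting the derivations of \Cref{fig:starImpN} with extra $\bosr$ and $\dosr$ steps to unfold $\ta$ whenever $\ta\in\progSet$. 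Feeding these into a minimal-counterexample argument, as in \Cref{thm:PDL:SC}, together with the operational version of the Fisher-Ladner analyticity of \Cref{rem:FLanal} (which holds by inspection of the new rules), completes the proof.

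\textbf{The main obstacle} is precisely this last minimal-counterexample step. In \Cref{thm:PDL:SC} it rests on the well-foundedness of inclusion of Fisher-Ladner closures, which in $\PDL$ holds because those closures are finite; by \Cref{rem:finiteBranching} the \emph{operational} Fisher-Ladner closure may be infinite, so a different well-founded order is needed. The plan is to order the (counterexample) sequents by the multiset of Kleene-star subprograms occurring in their operational closures: the passage from $\tbox[\kstar\ta]\fA$ to $\tbox[\ta^n]\fA$ supplied by the two star-lemmas removes the occurrence of $\kstar\ta$ and introduces only proper star subprograms of $\ta$, and — crucially — the rules $\bosr$ and $\dosr$ rewrite $\tbox[\pa]$ into $\tbox[\tb]\tbox[\pc]$ with $\tb\in\labSet$ and $\pc\in\progSet$, hence never create Kleene stars. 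In the concurrency instances of interest (\CCS and choreographic programming), where the programs in $\progSet$ carry no iteration, only finitely many distinct Kleene-star programs occur, so this multiset order is well-founded and the argument closes; the point to verify carefully is that, as in $\pSpdl$ and as asserted in the footnote to \Cref{def:prog}, progressing threads in $\pSdol$ are still governed exactly by Kleene-star unfoldings, so that such a descending step is always available along every infinite branch.
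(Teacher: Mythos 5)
Your proposal follows essentially the same route as the paper: completeness is obtained exactly as in the paper, by deriving the $\PDL$ axioms as in \Cref{lem:pSpdlcutSound}, deriving $\axA{\OS}$ via the derivation of \Cref{fig:proofAxO}, and then discharging the cuts with \Cref{thm:cutElimDOL}; soundness is obtained, as in the paper, by replaying the proof of \Cref{thm:PDL:SC} over $\pSdol$. The only substantive divergence is at the final minimal-counterexample step. The paper simply asserts that, although the operational Fisher--Ladner closure may be infinite, the inclusion order of \Cref{rem:FLanal} remains well-founded; you instead switch to a multiset order on the Kleene-star subprograms occurring in the closures, but you only justify its well-foundedness ``in the concurrency instances of interest'' where $\progSet$ carries no iteration. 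As written, this hedge leaves the general statement (arbitrary finitely branching $\OS$) not fully covered, and it is also unnecessary: in $\Pf$ the elements of $\labSet\cup(\progSet\setminus\atomTests)$ are atomic programs of the formula grammar, and the operational closure conditions (and the rules $\bosr$, $\dosr$) only ever replace $\tbox[\pa]$ by $\tbox[\tb]\tbox[\pc]$ with $\tb$ and $\pc$ atomic, so no new formula-level $\kstarsymb$-programs are ever introduced; hence the star subprograms occurring in the closure of any sequent are among the (finitely many) syntactic star subterms of that sequent, and your multiset order is well-founded for every finitely branching $\OS$ --- which is, in substance, the reason behind the paper's bare assertion that its order stays well-founded. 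With that observation your argument closes in full generality and matches the paper's proof.
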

\begin{proof}
	We conclude by \Cref{thm:cutElimDOL}
	since the axiom $\axO$ is derivable in $\Sdol$
	(see \Cref{fig:proofAxO}).
	Note that contrary to what happens in $\PDL$,  
	the Fisher-Ladner closure of a formula in $\POL$ may be infinite.
	However, the partial order over sequents in \Cref{rem:FLanal} used in the proof of \Cref{thm:PDL:SC} is still well-founded.
\end{proof}

\begin{theorem}\label{thm:DOL:traceEq}
	Let $\pa,\pb\in\progSet$.
	Then 
	$
	\proves[\POL] \!\tbox[\pa]\fA\!\fequiv\!\tbox[\pb]\fA
	\mbox{ iff }
	\pa\treq\pb
	$.
\end{theorem}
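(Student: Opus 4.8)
The plan is to push both implications through the semantics of $\POL$ via \Cref{thm:DOLax:sc}, using as a pivot the following \emph{trace characterisation} of the meaning of a program:
\begin{equation*}
	\meanof\pa \;=\; \bigcup_{\ta\in\tracesof\pa}\meanof{\ta}\;,
\end{equation*}
holding in every $\POL$-model $\model=\tuple{W,\mean}$ and for every $\pa\in\progSet$, where a trace $\ta=\tb_1;\cdots;\tb_k$ is read as the program $\tb_1;\cdots;\tb_k$ (so that $\meanof{\ta}$ is given by the clauses of \Cref{fig:modPDL}, with $\meanof{\emptyseq}=\set{(v,v)\mid v\in W}$). I would prove this by unfolding the defining clause \eqref{eq:meaning}, read --- as is standard, and in analogy with the clause for $\kstar\ta$ --- as a least fixed point: its $n$-th approximant is exactly $\bigcup\set{\meanof{\ta}\mid \ta\in\tracesof\pa \text{ realisable by a run of height below }n}$, and the union over all $n$ exhausts $\tracesof\pa$; this uses that valid traces are complete, i.e.\ reach the terminated program $\pepsi$, so that incomplete and deadlocked runs contribute nothing to $\meanof\pa$.

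Granting the characterisation, the right-to-left direction is immediate: if $\pa\treq\pb$, i.e.\ $\tracesof\pa=\tracesof\pb$, then $\meanof\pa=\meanof\pb$ in every $\POL$-model, hence $\meanof{\tbox[\pa]\fA}=\meanof{\tbox[\pb]\fA}$ for every formula $\fA$, i.e.\ $\vDash_{\POL}\tbox[\pa]\fA\fequiv\tbox[\pb]\fA$, and \Cref{thm:DOLax:sc} gives $\proves[\POL]\tbox[\pa]\fA\fequiv\tbox[\pb]\fA$. For the left-to-right direction I would argue by contraposition. Assume $\pa\not\treq\pb$ and pick, without loss of generality, a trace $\ta=\tb_1;\cdots;\tb_k\in\tracesof\pa\setminus\tracesof\pb$. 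Build a \emph{path model} $\model$ on $W=\set{0,1,\ldots,k}$ by setting $\meanof{\tb}\coloneqq\set{(i-1,i)\mid 1\le i\le k,\ \tb_i=\tb}$ for each label $\tb\in\labSet$ and $\meanof{\fP}\coloneqq\set{0,\ldots,k-1}$ for a propositional atom $\fP$, all other meanings being then as \Cref{def:kripOS} dictates (in particular $\meanof\pa$ and $\meanof\pb$ via \eqref{eq:meaning}). By the trace characterisation, $(0,k)\in\meanof{\ta}\subseteq\meanof\pa$ since $\ta\in\tracesof\pa$. Conversely, if $(0,k)\in\meanof{\ta'}$ for a trace $\ta'=\tc_1;\cdots;\tc_m$, the witnessing relational chain must traverse the edges $(0,1),(1,2),\ldots,(k-1,k)$ in order (these are the only edges present, each going $i-1\mapsto i$), whence $m=k$ and $\tc_j=\tb_j$ for all $j$, i.e.\ $\ta'=\ta\notin\tracesof\pb$; so $(0,k)\notin\meanof\pb$. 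Therefore $\model,0\vDash\tbox[\pb]\fP$ while $\model,0\not\vDash\tbox[\pa]\fP$, hence $\not\vDash_{\POL}\tbox[\pa]\fP\fequiv\tbox[\pb]\fP$, and \Cref{thm:DOLax:sc} yields $\not\proves[\POL]\tbox[\pa]\fP\fequiv\tbox[\pb]\fP$.

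The only genuinely delicate point is the trace characterisation itself --- the least-fixed-point bookkeeping together with the verification that incomplete and deadlocked runs do not contribute to $\meanof\pa$; once that is in place, both directions are routine, the path model being the only (very mild) construction involved.
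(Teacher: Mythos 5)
Your argument is correct in its overall architecture, but it takes a genuinely different route from the paper. The paper stays inside the axiomatisation: assuming $\tracesof{\pa}=\tracesof{\pb}$, it unfolds $\tbox[\pa]\fA$ and $\tbox[\pb]\fA$ step by step with the axiom $\axA{\OS}$ (the same mechanism the rules $\bosr$ and $\dosr$ implement in the worked examples) and concludes by induction on the lengths of prefixes of the common traces, so the equivalence comes with an explicit derivation and no semantic detour. You instead push everything through \Cref{thm:DOLax:sc}: a least-fixed-point trace characterisation of $\meanof{\pa}$ extracted from \eqref{eq:meaning} gives trace equivalence $\Rightarrow$ validity $\Rightarrow$ provability (by completeness), and a path countermodel plus soundness gives the contrapositive of the converse. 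What your route buys is a fully spelled-out `only if' direction, which the paper's two-line sketch leaves implicit, and a clean isolation of the one delicate point (reading \eqref{eq:meaning} as a least fixed point with $\meanof{\pepsi}$ the identity, i.e.\ valid traces as completed runs) --- an ambiguity that is indeed present in the paper and that you resolve in the intended way. What it costs is dependence on the completeness half of \Cref{thm:DOLax:sc}, whereas the paper's intended argument is self-contained in the proof system.

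One caveat on your countermodel: it assumes every label in $\labSet$ may be interpreted by an arbitrary accessibility relation, as \Cref{def:kripOS} literally permits. This fails in instantiations where some labels carry a fixed meaning --- e.g.\ the \CCS instantiation identifies $\ttau$ with $\tepsi$, and the choreographic semantics has test labels --- since there distinct traces (say, differing only in $\tepsi$-steps) denote the same relation in every model, and your path model cannot realise the required edge relation for such labels; in those cases the `only if' direction holds only for trace equivalence up to the induced identifications. The paper's sketch is equally silent on this, so it is a shared wrinkle rather than a defect peculiar to your proof, but if you want your argument to support \Cref{thm:CCS:traceEq} and \Cref{thm:chor:traceEq} you should either restrict the construction to labels that are genuinely free or state the needed refinement of $\treq$ explicitly.
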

\begin{proof}
	By definition of $\treq$
	we have that
	$\pa\sosto{\ta_1}\pa'$
	iff
	$\pb\sosto{\ta_1}\pb'$.
	We conclude by induction on 
	size length of the (finite) prefixes of traces in 
	$\tracesof{\pa}=\tracesof{\pb}$.
\end{proof}

\begin{remark}
	As written, the rule $\tbox[\OS]$ introduces (bottom-up) a branching during proof search which corresponds to the branching in the label transition system of the program execution.
	However,
	it would be desirable to refine such a rule
	in order to distinguish 
	the branching due to interleaving concurrency from 
	the branching due to internal choices of the system.
	More precisely, using the terminology from \cite{andreoli1992logic,hemer2002don,liang:hal-03457379}, 
	interleaving concurrency is a form of `don't care' non-determinism, depending on inessential choices introduced by the syntax because of its limitations in handling concurrency,
	while internal choices cause a `don't know' non-determinism, requiring us to take into account all possible evolution of the system in order to overcome this lack of knowledge about the next state of a computation.
	In proof theory, 
	the `don't care' non-determinism is considered inessential in defining a notion of equivalence for proofs, and it is usually captured by simple independent rule permutations (see \Cref{fig:indRules}) 
	while the `don't know' non-determinism is the responsible of having different proofs.
	
	For this purpose, it would suffice to define a notion of \emph{concurrency} between two elements $(\pa,\tb_1,\pc_1)$ and $(\pa,\tb_2,\pc_2)$ in $\OS$
	by requiring the existence of a 
	program $\pc'$ such that 
	$$
		\begin{array}{c@{\qquad}c@{\qquad}c}
			&\vemod1{\pc_1}				\\
			\vemod0{\pa}&&\vemod3{\pc'}	\\
			&\vemod2{\pc_2}
		\end{array}
		\OSledges{emod0/emod1/{\tb_1}}
		\OSledges{emod0/emod2/{\tb_2}}
		\OSledges{emod1/emod3/{\tb_2}}
		\OSledges{emod2/emod3/{\tb_1}}
	\qquad
	\mbox{with}
	\qquad
		(\pc_1,\tb_2,\pc')	,
		(\pc_2,\tb_1,\pc')
		\in\OS
	$$
	and restrict the side condition $\dagger$ of the rules $\tbox[\OS]$ and $\tdia[\OS]$ to sets of pairs such that 
	$(\pa,\tb_1,\pc_1)$ and $(\pa,\tb_2,\pc_2)$ are concurrent for each 
	for each $(\tb_1,\pc_1)$ and $(\tb_2,\pc_2)$ in $\dagger$.
	The adequacy result for the calculus with such a restricted rule is proven by showing, modulo rule permutations, that the general and restricted version of the rules are inter-definable.
\end{remark}

\begin{figure}[t]
	$$\begin{array}{c}
	\vlderivation{
		\vlin{\rrule_1}{}{\vdash \fGam, \fDel,\fSig}{
			\vlin{\rrule_2}{}{\vdash \fGam, \fDel', \fSig}{\vlhy{\vdash \fGam,\fDel',\fSig'}}
		}
	}
	\equiv
		\vlderivation{
		\vlin{\rrule_2}{}{\vdash \fGam, \fDel,\fSig}{
			\vlin{\rrule_1}{}{\vdash \fGam, \fDel,\fSig'}{
				\vlhy{\vdash \fGam,\fDel',\fSig'}
			}
		}
	}
	\qquad
	\vlderivation{
		\vliin{\rrule_1}{}{\fGam, \fDel,\fSig}{
			\vlin{\rrule_2}{}{\fGam,\fA}{\vlhy{\fGam', \fA}}
		}{
				\vlhy{\fDel,\fB}
		}
	}
	\equiv
	\vlderivation{
		\vlin{\rrule_2}{}{\fGam, \fDel,\fSig}{
			\vliin{\rrule_1}{}{\fGam', \fDel,\fSig}{
				\vlhy{\fGam', \fA }
			}{
				\vlhy{\fDel,\fB}
			}
		}
	}
	\qquad
	\vlderivation{
		\vliin{\rrule_1}{}{\fGam_1,\fGam_2,\fGam_3, \fDel,\fSig}{
			\vliin{\rrule_2}{}{\fGam_1,\fGam_2,\fDel}{
				\vlhy{\fGam_1,\fA}
			}{
				\vlhy{\fGam_2,\fB}
			}
		}{
			\vlhy{\fGam_3, \fC}
		}
	}
	\equiv
	\vlderivation{
		\vliin{\rrule_2}{}{\fGam_1,\fGam_2,\fGam_3, \fDel,\fSig}{
			\vlhy{\fGam_1,\fA}
		}{
			\vliin{\rrule_1}{}{\fGam_2,\fGam_3,\fSig}{
				\vlhy{\fGam_2,\fB}
			}{
				\vlhy{\fGam_3, \fC}
			}
		}
	}
	\end{array}$$
	\caption{Independent rule permutations.}
	\label{fig:indRules}
\end{figure}


\begin{figure*}[t]
	\adjustbox{max width=\textwidth}{$\begin{array}{c}
		\vlderivation{
			\vliin{\cutr}{}{\vdash \fGam}{
				\vlin{\tbox[\OS]}{}{
					\vdash\fGam,\tbox[\ta]\nfA
				}{
					\vlhy{
						\vdash\fGam,\tbox[\tb_1]\tbox[\pc_1]\nfA 
						\quad\cdots\quad
						\vdash\fGam,\tbox[\tb_n]\tbox[\pc_n]\nfA 
					}
				}
			}{
				\vlin{\tdia[\OS]}{}{
					\vdash\fGam,\tdia[\pa]\fA
				}{\vlhy{
						\vdash\fGam,\tdia[\tb_1]\tdia[\pc_1]\fA,\ldots, \tdia[\tb_n]\tdia[\pc_n]\fA
				}}
			}
		}
	\\
	\rotatebox{-90}{$\rightsquigarrow$}
	\\\\
		\vlderivation{
			\vliin{\cutr}{}{\vdash \fGam}{
				\vlhy{\vdash\fGam,\tbox[\tb_1]\tbox[\pc_1]\nfA}
			}{
				\vliin{\cutr}{}{
					\vdash\fGam,\tdia[\tb_1]\tdia[\pc_1]\fA
				}{
					\vlhy{\vdash\fGam,\tdia[\tb_2]\tdia[\pc_2]\fA}
				}{
					\vliin{\cutr}{}{
						\quad\vdots\quad
					}{
						\vlhy{\vdash\fGam,\tbox[\tb_n]\tbox[\pc_n]\nfA}
					}{
						\vlhy{\vdash\fGam,\tdia[\tb_1]\tdia[\pc_1]\fA,\ldots,\tdia[\tb_n]\tdia[\pc_n]\fA} 
					}
				}
			}
		}
	\end{array}$}
	\caption{Additional cut-elimination step in $\Sdol$.}
	\label{fig:seqDOLcutElim}
\end{figure*}

\section{Concurrency Theory meets \PDL}\label{sec:conc}
\def\OSCCS{{\OS_{\CCS}}}

In this section we provide two case studies of languages for concurrent systems: Milner's \emph{Calculus of Communicating Systems} (\CCS) \cite{M80},
and a theory of \emph{Choreographic Programming} \cite{montesi:book}.
The first provides an archetypal case of concurrency via parallel composition of processes and the second an illustrative example of concurrency via out-of-order execution of non-interfering actions. 

\subsection{Concurrency via parallel composition}

\begin{figure}
	\adjustbox{max width=\textwidth}{$\begin{array}{c|c|c}
		\mbox{\defn{Processes}}
		&
		\mbox{\defn{Labels}}
		&
		\mbox{\defn{Reduction rules}}
	\\
		\begin{array}{rll}
			\pP,\pQ  
			\coloneqq & \pnil	&	\mbox{inactive process}
		\\
			\mid&\pl.\pP		&	\mbox{action prefix}
		\\
			\mid&\pP\ppar\pQ	&	\mbox{parallel composition}
		\\
			\mid&\pP + \pQ		&	\mbox{choice}
		\\
			\mid&\pres a \pP	&	\mbox{action restriction}
		\\
			\mid&\pX			&	\mbox{process name}
		\end{array} 
	&
		\begin{array}{rll}
			\lambda \coloneqq& \tat  & \mbox{actions } (\tat \in \Act) 
		\\
			\mid & \ntat & \mbox{co-actions }  (\tat \in \Act)
		\\
			\mid & \ttau & \mbox{silent}
		\end{array} 
	&
		\begin{array}{lrcll}
			\prer	&\pl.\pP		&\sosto{\tl}	& \pP 
			&
			\\
			\parr_1	&\pP\ppar\pQ	&\sosto{\tl}& \pP'\ppar \pQ &\mbox{ if } \pP\sosto\tl\pP'
			\\
			\parr_2	&\pP\ppar\pQ	&\sosto{\tl}& \pP\ppar \pQ' &\mbox{ if } \pQ\sosto\tl\pQ'
			\\
			\comr	&\pP\ppar\pQ	&\sosto{\ttau}	& \pP'\ppar\pQ'	
			&\mbox{ if } \pP\sosto\tat\pP' \mbox{ and } \pQ\sosto\ntat\pQ'
			\\
			\sumr_1	&\pP+\pQ		&\sosto{\tl}& \pP'			&\mbox{ if } \pP\sosto\tl\pP'
			\\
			\sumr_2	&\pP+\pQ		&\sosto{\tl}& \pQ'			&\mbox{ if } \pQ\sosto\tl\pQ'
			\\
			\resr	&\pres a \pP		&\sosto{\tl}& \pres a {\pP'}	
			&\mbox{ if } \pP\sosto\tl\pP' \mbox{ and } \tl\notin\set{\pat,\npat}
			\\
			\recr	&X				&\sosto{\tl}& \pP'
			&\mbox{ if } X \defeq \pP \mbox{ and } \pP\sosto\tl\pP'
		\end{array}	
	\end{array}$}
	\caption{Syntax and operational semantics of $\CCS$.}
	\label{fig:CCS}
\end{figure}

$\CCS$ is a process calculus where processes interact via synchronisations where two parties perform complementary actions (often thought of as sending and receiving).
Concurrency is achieved via explicit parallel composition of processes 
equipped with interleaving semantics.

\defn{Processes} in \CCS (with recursion) are described by the terms 
generated by the grammar in \Cref{fig:CCS}. 
The definition is parametrised in a countable set $\Act$ of symbols denoting the synchronisation \defn{actions} that processes can perform.
The set is equipped with an involution $(\cneg{\cdot})$ mapping each action its complementary action, or \defn{co-action} for short.
The definition is also parametrised in a set of \defn{process definitions} (objects of the form $\pX \defeq \pP$) which are used to express infinite behaviours via recursion.
The semantics of processes is given as the labelled transition system (or LTS) with processes as states and as transition relation the smallest relation closed under the derivation rules reported in \Cref{fig:CCS}.
Both syntax and semantics are standard and we briefly discuss them below. 

The term $\pnil$ denotes the \defn{terminated process} and has no transitions.
A term $\pl.\pP$ denotes a process ready to perform the action $\pl$ before continuing as $\pP$ as specified by rule $\prer$.
A term $\pP \ppar \pQ$ denotes the \defn{parallel} composition of processes $\pP$ and $\pQ$ which are executed by interleaving (rules $\parr_1$ and $\parr_2$) or synchronising their actions (rule $\comr$).
Rule $\parr_1$ allows $\pP \ppar \pQ$ to perform a transition where $\pP$ performs an action (evolving into $\pP'$) independently from $\pQ$ and symmetrically for rule $\parr_2$.
Rule $\comr$ describes transitions where $\pP$ and $\pQ$ synchronise by performing matching actions.
To model that synchronisations are binary, transitions derived with this rule are given the label $\ttau$ which is separate from $\Act$ and is traditionally used in process algebras to denote steps that do not interact with the context of a process (hence named \defn{silent} or \defn{internal}).
A term $\pP + \pQ$ denotes a \defn{choice} between actions performed by $\pP$ and $\pQ$ where performing an action from one process disregards the other as specified by rules $\sumr_1$ and $\sumr_2$.
A term $\pres{\pat}{\pP}$ denotes a process where synchronisations using the action $\pat$ are \defn{restricted} to remain within $\pP$ (cannot be used to interact with the context but only with other parts of $\pP$) as prescribed by rule $\resr$ which requires $\pl$ to be neither $\pat$ nor $\npat$.
A term $\pX$ denotes the process $\pP$ associated to the \defn{process name} $\pX$ by the process definition $\pX \defeq \pP$ and has the same semantics as $\pP$ (rule $\recr$). 
To ensure that the resulting LTS is finitely branching, we assume, 
as common practice  (see, e.g., \cite{GV92,AFV01}), that process definitions are \emph{guarded} meaning that every process name occurring in the body of a process definition occurs under an action prefix.

We denote by $\OSCCS$ the operational semantics over the set of 
\CCS processes (the set of tests is empty)
with
labels defined as in \Cref{fig:CCS} (by letting $\ttau=\tepsi$)
defined as in \Cref{fig:CCS}.
Then, we obtain as an instance of \Cref{thm:DOL:traceEq}, that logical equivalence in ${\POL[{\OSCCS}]}$ captures trace equivalence in \CCS.

\begin{corollary}
	\label{thm:CCS:traceEq}
	Let $\pP$ and $\pQ$ be process in \CCS.
	Then,
	$$
	\pP\treq\pQ
	\qquad\mbox{ iff }\qquad
	\proves[{\POL[{\OSCCS}]}] \tbox[\pP]\fA\fequiv\tbox[\pQ]\fA
	\mbox{ for any formula $\fA$}
	\mydot
	$$
\end{corollary}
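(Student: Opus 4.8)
The plan is to derive this as an immediate instance of \Cref{thm:DOL:traceEq}, whose statement reads: for $\pa,\pb\in\progSet$, we have $\proves[\POL] \tbox[\pa]\fA\fequiv\tbox[\pb]\fA$ for all $\fA$ iff $\pa\treq\pb$. The only thing to check is that the hypotheses of \Cref{def:OPDL} are met by $\OSCCS$, so that the instantiation is legitimate, and then to read off trace equivalence in the sense of the general definition as trace equivalence in the usual \CCS sense. Concretely, I would first recall that $\OSCCS$ is defined over the set $\progSet$ of \CCS processes with empty set of tests $\atomTests=\emptyset$, labels $\labSet=\Act\cup\cneg\Act\cup\set\ttau$ with the identification $\ttau=\tepsi$, and transition relation the LTS of \Cref{fig:CCS}. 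I then observe that this operational semantics is \emph{finitely branching}: this is exactly the point for which guardedness of process definitions is assumed (see the discussion after \Cref{fig:CCS} and the references \cite{GV92,AFV01}), so the set $\set{(\tl,\pP')\mid \pP\sosto\tl\pP'}$ is finite for every $\pP$. The terminated program of \Cref{def:OPDL} is $\pnil$, which has no transitions, as required. Since $\atomTests=\emptyset$, the side condition on propositional atoms in \Cref{def:OPDL} is vacuous. Hence $\POL[\OSCCS]$ is well-defined.

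Next I would note that the relation $\treq$ of \Cref{def:OPDL} coincides, on \CCS processes, with the standard notion of \CCS trace equivalence: $\tracesof\pP$ is the set of finite label-sequences $\tat_1;\cdots;\tat_k$ such that there are $\pP=\pP_0\sosto{\tat_1}\pP_1\sosto{\tat_2}\cdots\sosto{\tat_k}\pP_k$, and two processes are $\treq$-equivalent exactly when these sets agree. This is an unfolding of the recursive definition of ``valid trace'' given in \Cref{def:OPDL} and matches the textbook definition.

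Finally, applying \Cref{thm:DOL:traceEq} with $\progSet$ the \CCS processes, $\OS=\OSCCS$, $\pa=\pP$ and $\pb=\pQ$ yields directly
\[
	\proves[{\POL[{\OSCCS}]}] \tbox[\pP]\fA\fequiv\tbox[\pQ]\fA \text{ for all } \fA
	\quad\text{iff}\quad
	\pP\treq\pQ,
\]
which is the claim. There is essentially no obstacle here beyond the bookkeeping of checking the hypotheses of \Cref{def:OPDL}; the only point that deserves a word is the appeal to guardedness to obtain finite branching, since \Cref{thm:DOL:traceEq} (via \Cref{scPOL} and \Cref{thm:cutElimDOL}) is proved only for finitely-branching operational semantics.
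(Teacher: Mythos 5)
Your proposal is correct and follows the same route as the paper: the corollary is obtained as a direct instance of \Cref{thm:DOL:traceEq} applied to $\OSCCS$, with the paper likewise relying on guarded definitions to ensure the finitely-branching hypothesis. The additional bookkeeping you carry out (empty test set, $\pnil$ as the terminated program, unfolding the general notion of $\treq$ to the usual \CCS one) is exactly what the paper leaves implicit.
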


We now illustrate the expressiveness and interpretive power of ${\POL[{\OSCCS}]}$ through a series of representative examples.
These examples demonstrate how ${\POL[{\OSCCS}]}$ accommodates standard program constructs and equational reasoning patterns typical of process calculi, with a particular focus on their operational semantics.

The next example revisits a textbook example of the different discriminating power of bisimilarity and trace equivalence with a recursion twist to illustrate how circular proofs in ${\POL[{\OSCCS}]}$ allows to reason about recursive processes.
\begin{example}
\begin{figure*}[t]
	$$\adjustbox{max width=\textwidth}{$\begin{array}{c}
		\vlderivation{
			\vliin{\land}{}{
				\vdash \tbox[\ppi_2]\fA \fequiv \tbox[\ppi_1]\fA
			}{
				\vlin{\lor}{\star_1}{\vdash \tdia[\ppi_2]\nfA \lor \tbox[\ppi_1]\fA}{
					\vlin{\bosr}{}{\vdash \tdia[\ppi_2]\nfA, \tbox[\ppi_1]\fA}{
						\vlin{\dosr}{}{\vdash \tdia[\ppi_2]\nfA, \tbox[\ta]\tbox[\pb.\ppi_1 + \pc]\fA}{
							\vlin{\Krule}{}{
								\vdash \tdia[\ta]\tdia[\pb.\ppi_2]\nfA, \tdia[\ta]\tdia[\tc]\nfA,\tbox[\ta]\tbox[\pb.\ppi_1 + \pc]\fA
							}{
								\vliin{\bosr}{}{
									\vdash \tdia[\pb.\ppi_2]\nfA,\tdia[\tc]\nfA,\tbox[\pb.\ppi_1 + \pc]\fA
								}{
									\vlin{\Wrule}{}{
										\vdash \tdia[\pb\ppi_2]\nfA,\tdia[\tc]\nfA,\tbox[\tb]\tbox[\ppi_1]\fA
									}{\vlin{\dosr}{}{
											\vdash \tdia[\tb]\tdia[\ppi_2]\nfA,\tbox[\tb]\tbox[\ppi_1]\fA
										}{\vlin{\Krule}{\star_1}{
												\vdash \tdia[\tb]\tdia[\ppi_2]\nfA,\tbox[\tb]\tbox[\ppi_1]\fA
											}{
											\vlhy{}
											}
										}
									}
								}{
									\vlin{\Wrule}{}{
										\vdash \tdia[\pb.\ppi_2]\nfA,\tdia[\tc]\nfA,\tbox[\tc]\fA
									}{\vlin{\AXrule}{}{
											\vdash \tdia[\tc]\nfA,\tbox[\tc]\fA
										}{\vlhy{}}
									}
								}
							}
						}
					}
				}
			}{
				\vlin{\lor}{\star_2}{\vdash \tdia[\ppi_1]\nfA\lor \tbox[\ppi_2]\fA}{
					\vliin{\bosr}{}{\vdash \tdia[\ppi_1]\nfA, \tbox[\ppi_2]\fA}{
						\vlin{\dosr}{}{
							\vdash \tdia[\ppi_1]\nfA, \tbox[\ta]\tbox[\pb.\ppi_2]\fA
						}{
							\vlin{\Wrule}{}{
								\vdash \tdia[\ta]\tdia[\pb.\ppi_1]\nfA, \tdia[\ta]\tdia[\tc]\nfA,\tbox[\ta]\tbox[\pb.\ppi_2]\fA
							}{
								\vlin{\Krule}{}{
									\vdash \tdia[\ta]\tdia[\pb.\ppi_1]\nfA,\tbox[\ta]\tbox[\pb.\ppi_2]\fA
								}{\vliq{\bosr+\dosr}{}{
										\vdash \tdia[\pb.\ppi_1]\nfA,\tbox[\pb.\ppi_2]\fA
									}{\vlin{\Krule}{\star_2}{
										\vdash \tdia[\tb]\tdia[\ppi_1]\nfA,\tbox[\tb]\tbox[\ppi_2]\fA
										}{\vlhy{}}}
								}
							}
						}
					}{
						\vlin{\dosr}{}{\vdash \tdia[\ppi_1]\nfA, \tbox[\ta]\tbox[\tc]\fA}{
							\vlin{\Wrule}{}{
								\vdash \tdia[\ta]\tdia[\pb.\ppi_1]\nfA, \tdia[\ta]\tdia[\tc]\nfA,\tbox[\ta]\tbox[\tc]\fA
							}{
								\vlin{\AXrule}{}{
									\vdash \tdia[\ta]\tdia[\tc]\nfA,\tbox[\ta]\tbox[\tc]\fA
								}{
									\vlhy{}
								}
							}
						}
					}
				}
			}
		}
	\\\\
		\text{where}
	\qquad
		\begin{array}{c@{\qquand}c}
			\begin{array}{c}
				\scriptsize \ppi_1=(\pa.\pb.\ppi_1) + (\pa.\pc)
			\\
				\begin{array}{ccc}
					&\vbul1
					\\[15pt]
					\vbul2 && \vbul4
					\\[20pt]
					&& \vbul5
				\end{array}
				\tikz[overlay,remember picture,draw,fill,opacity=1] \draw[-stealth,draw,thick] (bul1) to [bend right = 20] node[midway,fill=white,inner sep=1pt]{\scriptsize$\ta$}  (bul2);
				\tikz[overlay,remember picture,draw,fill,opacity=1] \draw[-stealth,draw,thick] (bul2) to [bend right = 20] node[midway,fill=white,inner sep=1pt]{\scriptsize$\tb$}  (bul1);
				\tikz[overlay,remember picture,draw,fill,opacity=1] \draw[-stealth,draw,thick] (bul1) to node[midway,fill=white,inner sep=1pt]{\scriptsize$\ta$}  (bul4);
				\tikz[overlay,remember picture,draw,fill,opacity=1] \draw[-stealth,draw,thick] (bul4) to node[midway,fill=white,inner sep=1pt]{\scriptsize$\tc$}  (bul5);
			\end{array}
		&
			\begin{array}{c}
				\scriptsize \ppi_2=\pa.(\pb.\ppi_2 + \pc)
			\\
				\begin{array}{c}
					\vbul1
					\\[20pt]
					\vbul2 
					\\[15pt]
					\vbul5
				\end{array}
				\tikz[overlay,remember picture,draw,fill,opacity=1] \draw[-stealth,draw,thick] (bul1) to [bend right = 20] node[midway,fill=white,inner sep=1pt]{\scriptsize$\ta$}  (bul2);
				\tikz[overlay,remember picture,draw,fill,opacity=1] \draw[-stealth,draw,thick] (bul2) to [bend right = 20] node[midway,fill=white,inner sep=1pt]{\scriptsize$\tb$}  (bul1);
				\tikz[overlay,remember picture,draw,fill,opacity=1] \draw[-stealth,draw,thick] (bul2) to node[midway,fill=white,inner sep=1pt]{\scriptsize$\tc$}  (bul5);
			\end{array}
		\end{array}
	\end{array}$}$$
	\caption{The derivation proving $(\pa.\pb.\ppi_1) + (\pa.\pc)\treq\pa.(\pb.\ppi_2 + \pc)$ in $\OSCCS$.}
	\label{fig:CCSex}
\end{figure*}
Consider the recursive processes $\ppi_1=(\pa.\pb.\ppi_1) + (\pa.\pc)$ and $\ppi_2=\pa.(\pb.\ppi_2 + \pc)$ depicted in \Cref{fig:CCSex} on the left.
These processes are trace equivalent but not bisimilar.
The derivation on the right of the figure proves, by \Cref{thm:CCS:traceEq}, that $\ppi_1$ and $\ppi_2$ are indeed trace equivalent. 
\end{example}

\begin{example}
	If a process $\pP$ has no transitions then reasoning about its behaviour coincides with reasoning about the empty program. Formally, $\tbox[\pP]\fA \fequiv \tbox[\tempt]\fA$ holds whenever $\{ \pQ \mid \pP \sosto{\tl} \pQ \} = \emptyset$ (cf.~the application of rule $\bosr$ in the derivation below).
	\begin{equation*}
	\vlderivation
	{\vliin{\land}{}
		{\vdash \tbox[\pP]\fA \fequiv \tbox[\tempt]\fA}
		{\vlin{\lor}{}
			{\vdash \tdia[\pP]\nfA \lor \tbox[\tempt]\fA}
			{\vlin{\Wrule}{}
				{\vdash \tdia[\pP]\nfA,\tbox[\tempt]\fA}
				{\vlin{\tbox[\tempt]}{}
					{\vdash\tbox[\tempt]\fA}
					{\vlhy{}}}}}
		{\vlin{\lor}{}
			{\vdash \tdia[\tempt]\nfA \lor \tbox[\pP]\fA}
			{\vlin{\Wrule}{}
				{\vdash \tdia[\tempt]\nfA,\tbox[\pP]\fA}
				{\vlin{\bosr}{}
					{\vdash\tbox[\pP]\fA}
					{\vlhy{}}}}}}
	\end{equation*}
\end{example}

\begin{example}
Procedure names are equivalent to their definition i.e., $\tbox[X]\fA \fequiv \tbox[P]\fA$ where $X \defeq \pP$. Below we report the derivation for one direction of the derivation, the other is similar. 
	\begin{gather*}
	\vlderivation{\vlin{\dosr}{}
		{\vdash \tdia[X]\nfA,\tbox[\pP]\fA}
		{\vliiin{\bosr}{}
			{\vdash \tdia[\ta_1]\tdia[\pP_1]\nfA,\dots,\tdia[\ta_n]\tdia[\pP_n]\nfA,\tbox[\pP]\fA}
			{\vliq{\Wrule}{}
				{\vdash \tdia[\ta_1]\tdia[\pP_1]\nfA,\dots,\tdia[\ta_n]\tdia[\pP_n]\nfA,\tbox[\pa_1]\tbox[\pP_1]\fA}
				{\vlin{\AXrule}{}
					{\vdash \tdia[\ta_1]\tdia[\pP_1]\nfA,\tbox[\pa_1]\tbox[\pP_1]\fA}
					{\vlhy{}}}}
			{\vlhy{\cdots}}
			{\vliq{\Wrule}{}
					{\vdash \tdia[\ta_1]\tdia[\pP_1]\nfA,\dots,\tdia[\ta_n]\tdia[\pP_n]\nfA,\tbox[\pa_i]\tbox[\pP_i]\fA}
					{\vlin{\AXrule}{}
						{\vdash \tdia[\ta_n]\tdia[\pP_n]\nfA,\tbox[\pa_n]\tbox[\pP_n]\fA}
						{\vlhy{}}}}}}
	\end{gather*}
\end{example}

\begin{example}
	Nondeterministic choice ($+$) is often replaced with \emph{guarded choice}, a more restrictive form of nondeterminism ($\sum_{i = 0}^{n}\pa_i.\pP_i$) where each branch ($\pP_i$) is guarded by a prefixing action ($\pa_i$), its guard.
	This construct is recovered directly in terms of `$\oplus$' and `$;$' thanks to the equivalence 
	$\tbox[\sum_{i = 0}^{n}\pa_i.\pP_i]\fA \fequiv \tbox[\bigoplus_{i = 0}^{n}\pa_i;\pP_i]\fA$.
	$$
		\vlderivation{
			\vliin{\land}{}
			{\vdash \tbox[\sum_{i = 0}^{n}\pa_i.\pP_i]\fA \fequiv \tbox[\bigoplus_{i = 0}^{n}\pa_i;\pP_i]\fA}
			{\vlin{\lor}{}
				{\vdash \tdia[\sum_{i = 0}^{n}\pa_i.\pP_i]\nfA \lor \tbox[\bigoplus_{i = 0}^{n}\pa_i;\pP_i]\fA}
				{\vlin{\bplusr}{}
					{\vdash \tdia[\sum_{i = 0}^{n}\pa_i.\pP_i]\nfA, \tbox[\bigoplus_{i = 0}^{n}\pa_i;\pP_i]\fA}
					{\vlhy{\left\{\vlderivation{\vlpr{\dD_j}{}{\vdash \tdia[\sum_{i = 0}^{n}\pa_i.\pP_i]\nfA, \tbox[\pa_j;\pP_j]\fA}}\right\}_{j\in\{0,\dots,n\}}}}}}
			{\vlin{\lor}{}
				{\vdash \tdia[\bigoplus_{i = 0}^{n}\pa_i;\pP_i]\nfA \lor \tbox[\sum_{i = 0}^{n}\pa_i.\pP_i]\fA}
				{\vlin{\bosr}{}
					{\vdash \tdia[\bigoplus_{i = 0}^{n}\pa_i;\pP_i]\nfA, \tbox[\sum_{i = 0}^{n}\pa_i.\pP_i]\fA}
					{\vlhy{\left\{\vlderivation{\vlpr{\dD'_j}{}{\vdash \tdia[\bigoplus_{i = 0}^{n}\pa_i;\pP_i]\nfA, \tbox[\pa_j]\tbox[\pP_j]\fA}}\right\}_{j\in\{0,\dots,n\}}}}}}
		}
	$$
	where $\dD_j$ and $\dD'_j$ are as follows:
	$$
		\vlderivation{
			\vlin{\dosr}{}
			{\vdash \tdia[\sum_{i = 0}^{n}\pa_i.\pP_i]\nfA, \tbox[\pa_j;\pP_j]\fA}
			{\vliq{\Wrule}{}
				{\vdash \tdia[\pa_0]\tdia[\pP_0]\nfA,\dots,\tdia[\pa_n]\tdia[\pP_n]\nfA, \tbox[\pa_j;\pP_j]\fA}
				{\vlin{\bseqr}{}
					{\vdash \tdia[\pa_j]\tdia[\pP_j]\nfA,\tbox[\pa_j;\pP_j]\fA}
					{\vlin{\AXrule}{}
						{\vdash \tdia[\pa_j]\tdia[\pP_j]\nfA,\tbox[\pa_j]\tbox[\pP_j]\fA}
						{\vlhy{}}}}}
		}
	\qquad
		\vlderivation{
			\vlin{\dplusr}{}{
				\vdash \tdia[\oplus_{i = 0}^{n}\pa_i;\pP_i]\nfA, \tbox[\pa_j]\tbox[\pP_j]\fA
			}{
				\vliq{\Wrule}{}{
					\vdash \tdia[\pa_0;\pP_0]\nfA,\dots,\tdia[\pa_n;\pP_n]\nfA, \tbox[\pa_j]\tbox[\pP_j]\fA
				}{
					\vlin{\dseqr}{}{
						\vdash \tdia[\pa_j;\pP_j]\nfA,\tbox[\pa_j]\tbox[\pP_j]\fA
					}{\vlin{\AXrule}{}
						{\vdash \tdia[\pa_j]\tdia[\pP_j]\nfA,\tbox[\pa_j]\tbox[\pP_j]\fA}
						{\vlhy{}}
					}
				}
			}
		}
	$$
\end{example}

\begin{example}
	By composing the last two examples, we can establish a correspondence between `+' and `$\oplus$' allowing us to reason about processes with nondeterministic choice using the rules for `$\oplus$': $\tbox[\pP+\pQ]\fA \fequiv \tbox[\pP\oplus\pQ]\fA$. Below, we report the core part of the derivation for one direction of the equivalence, the other is similar. Note that when $\pP$ and $\pQ$ have no transitions $n$ and $m$ are $0$ and the corresponding sequence of $\pP_i$ and $\pQ_i$ is empty.
	\begin{gather*}
	\vlderivation{\vlin{\dosr}{}
		{\vdash \tdia[\pP+\pQ]\nfA,\tbox[\pP\oplus\pQ]\fA}
		{\vliin{\bplusr}{}
			{\vdash \tdia[\ta_1]\tdia[\pP_1]\nfA,\dots,\tdia[\ta_n]\tdia[\pP_n]\nfA,\tdia[\tb_1]\tdia[\pQ_1]\nfA,\dots,\tdia[\tb_m]\tdia[\pQ_m]\nfA,\tbox[\pP\oplus\pQ]\fA}
			{\vliq{\Wrule}{}
				{\vdash \tdia[\ta_1]\tdia[\pP_1]\nfA,\dots,\tdia[\tb_m]\tdia[\pQ_m]\nfA,\tbox[\pP]\fA}
				{\vlin{\bosr}{}
					{\vdash \tdia[\ta_1]\tdia[\pP_1]\nfA,\dots,\tdia[\ta_n]\tdia[\pP_n]\nfA,\tbox[\pP]\fA}
					{\vlhy{\left\{\vlderivation
						{\vliq{\Wrule}{}
							{\vdash \tdia[\ta_1]\tdia[\pP_1]\nfA,\dots,\tdia[\ta_n]\tdia[\pP_n]\nfA,\tbox[\pa_i]\tbox[\pP_i]\fA}
							{\vlin{\AXrule}{}
								{\vdash \tdia[\ta_i]\tdia[\pP_i]\nfA,\tbox[\pa_i]\tbox[\pP_i]\fA}
								{\vlhy{}}}}
						\right\}_{i\in\{1,\dots,n\}}}}}}
			{\vliq{\Wrule}{}
				{\vdash \tdia[\ta_1]\tdia[\pP_1]\nfA,\dots,\tdia[\tb_m]\tdia[\pQ_m]\nfA,\tbox[\pQ]\fA}
				{\vlin{\bosr}{}
					{\vdash \tdia[\tb_1]\tdia[\pQ_1]\nfA,\dots,\tdia[\tb_m]\tdia[\pQ_m]\nfA,\tbox[\pQ]\fA}
					{\vlhy{\left\{\vlderivation
						{\vliq{\Wrule}{}
							{\vdash \tdia[\tb_1]\tdia[\pQ_1]\nfA,\dots,\tdia[\tb_m]\tdia[\pQ_m]\nfA,\tbox[\tb_i]\tbox[\pQ_i]\fA}
							{\vlin{\AXrule}{}
								{\vdash \tdia[\tb_i]\tdia[\pQ_i]\nfA,\tbox[\pb_i]\tbox[\pQ_i]\fA}
								{\vlhy{}}}}
						\right\}_{i\in\{1,\dots,m\}}}}}}}}
	\end{gather*}
Therefore, the Abelian monoid laws available for $\oplus$ and $\tempt$ can be applied to $+$ and $\pnil$ -- as expected.
\end{example}

\begin{example}
	When reasoning in terms of trace equivalence, action prefixes distribute over nondeterministic choices $
	\ta.(\pP+\pQ) \treq \ta.\pP+\ta.\pQ$.
	One can prove this equivalence by invoking \cref{thm:CCS:traceEq} and showing that $\tbox[\ta.(\pP+\pQ)]\fA \fequiv \tbox[\ta.\pP+\ta.\pQ]\fA$ holds.
	This can be done directly with a short derivation or, even more concisely, using the correspondences established by the previous examples to invoke the distribution of `$;$' over  `$\oplus$' i.e.,~$\tbox[\ta;(\pP\oplus\pQ)]\fA \fequiv \tbox[\ta;\pP\oplus\ta;\pQ]\fA$.
\end{example}

\begin{example}
	Classical results for compositional reasoning like the fact that trace inclusion and equivalence are preserved by the constructors of the calculus can be recovered in ${\POL[{\OSCCS}]}$.
	For instance, from $\tbox[\pP]\fA \imp \tbox[\pP']\fA$ we can conclude that $\tbox[\pa.\pP]\fA \imp \tbox[\ta.\pP']\fA$ and $\tbox[\pP+\pQ]\fA \imp \tbox[\pP'+\pQ]\fA$.
	We can show that these hold directly
	\begin{equation*}
		\vlderivation{
			\vlin{\bosr+\dosr}{}
				{\vdash\tdia[\pa.\pP]\nfA,\tbox[\pa.\pP']\fA}
				{\vlin{\Krule}{}
					{\vdash \tdia[\pa]\tdia[\pP]\nfA,\tbox[\pa]\tbox[\pP']\fA}
					{\vlhy{\vdash \tdia[\pP]\nfA,\tbox[\pP']\fA}}}}
	\end{equation*}
	or build on the previous examples and derive the equivalent formulas $\tbox[\pa;\pP]\fA \imp \tbox[\ta;\pP']\fA$ and $\tbox[\pP\oplus\pQ]\fA \imp \tbox[\pP'\oplus\pQ]\fA$.
	\begin{equation*}
		\vlderivation{
			\vlin{\bseqr+\dseqr}{}
				{\vdash\tdia[\pa;\pP]\nfA,\tbox[\pa;\pP']\fA}
				{\vlin{\Krule}{}
					{\vdash \tdia[\pa]\tdia[\pP]\nfA,\tbox[\pa]\tbox[\pP']\fA}
					{\vlhy{\vdash \tdia[\pP]\nfA,\tbox[\pP']\fA}}}}
		\qquad
		\vlderivation{
			\vliin{\bplusr+\dplusr}{}
				{\vdash\tdia[\pP\oplus\pQ]\nfA,\tbox[\pP'\oplus\pQ]\fA}
				{\vlin{\Wrule}{}
					{\vdash \tdia[\pP]\nfA,\tdia[\pQ]\nfA,\tbox[\pP']\fA}
					{\vlhy{\vdash \tdia[\pP]\nfA,\tbox[\pP']\fA}}}
				{\vlin{\Wrule}{}
					{\vdash \tdia[\pP]\nfA,\tdia[\pQ]\nfA,\tbox[\pQ]\fA}
					{\vlin{\AXrule}{}
						{\vdash \tdia[\pQ]\nfA,\tbox[\pQ]\fA}
						{\vlhy{}}}}}
	\end{equation*}
\end{example}

\begin{example}
	The operator for parallel composition of \CCS does not have a direct counterpart in the logic and thus one has to rely on its operational semantics.
	$$
	\tbox[\pP\ppar\pQ]\fA \fequiv \left(
		\bigwedge_{(\pP\ppar\pQ)\sosto\tl\pR}
		\tbox[\tl]\tbox[\pR]\fA \right)
	$$
	Nonetheless, we can derive a number of useful implications to reason about it and recover known laws of trace inclusion and equivalence on concrete instances.
	For example, the derivations below show trace inclusion when action prefixes are distributed over parallel composition ($\tbox[(\pa.\pP)\ppar\pQ]\fA \imp \tbox[\pa.(\pP\ppar\pQ)]\fA$) and trace equivalence for synchronising actions under a restriction ($\tbox[\pres{\tat}{(\tat.\pP\ppar\ntat.\pQ)}]\fA \fequiv \tbox[\pres{\tat}{(\pP\ppar\pQ)}]\fA$).
	\begin{equation*}
	\vlderivation{
		\vlin{\bosr+\dosr+\Wrule}{}{
			\vdash \tdia[\pa.\pP\ppar\pQ]\nfA,\tbox[\pa.(\pP\ppar\pQ)]\fA
		}{
				\vlin{\AXrule}{}{
					\vdash \tdia[\pa]\tdia[\pP\ppar\pQ]\nfA,\tbox[\pa]\tbox[\pP\ppar\pQ]\fA
				}{
					\vlhy{}
				}
		}
	}
	\quad
	\vlderivation{
		\vlin{\bosr}{}
			{\vdash \tdia[\pres{\tat}{(\pP\ppar\pQ)}]\nfA, \tbox[\pres{\tat}{(\tat.\pP\ppar\ntat.\pQ)}]\fA}
			{\vlin{\tbox[\tepsi]}{}
				{\vdash \tdia[\pres{\tat}{(\pP\ppar\pQ)}]\nfA, \tbox[\tepsi]\tbox[\pres{\tat}{(\pP\ppar\pQ)}]\fA}
				{\vlin{\AXrule}{}
				{\vdash \tdia[\pres{\tat}{(\pP\ppar\pQ)}]\nfA, \tbox[\pres{\tat}{(\pP\ppar\pQ)}]\fA}
					{\vlhy{}}}}}
	\quad
	\vlderivation{
		\vlin{\dosr}{}
			{\vdash \tdia[\pres{\tat}{(\tat.\pP\ppar\ntat.\pQ)}]\nfA, \tbox[\pres{\tat}{(\pP\ppar\pQ)}]\fA}
			{\vlin{\tdia[\tepsi]}{}
				{\vdash \tdia[\tepsi]\tdia[\pres{\tat}{(\pP\ppar\pQ)}]\nfA, \tbox[\pres{\tat}{(\pP\ppar\pQ)}]\fA}
				{\vlin{\AXrule}{}
				{\vdash \tdia[\pres{\tat}{(\pP\ppar\pQ)}]\nfA, \tbox[\pres{\tat}{(\pP\ppar\pQ)}]\fA}
					{\vlhy{}}}}}
	\end{equation*}
Proving these implications did not require a deep analysis of the processes and could be fully derived invoking the operational semantics for one step only. 
This means, that implications like these can be added to the system and be derivable rules. 
On the other hand, an implication like $\tbox[\pP\ppar\pQ]\fA \imp \tbox[\pP]\fA$ requires deeper inspection. 
The procedure for building a proof is similar to how one would check trace inclusion: invoke the semantics of $\pP$ following a breadth-first strategy, mimic the step in $\pP\ppar\pQ$ without running $\pQ$. 
\begin{equation*}\adjustbox{max width=\textwidth}{$
	\vlderivation{
			\vliiin{\bosr+\dosr}{}{
				\vdash \tdia[\pP\ppar\pQ]\nfA, \tbox[\pP]\fA
			}{
				\vliq{\Wrule}{}{
					\vdash \tdia[\ta_1]\tdia[\pP_1\ppar\pQ]\nfA,\dots,\tdia[\ta_n]\tdia[\pP_n\ppar\pQ]\nfA, \tbox[\ta_1]\tbox[\pP_1]\fA
				}{
					\vlin{\Krulep{\ta_1}}{}{
						\vdash \tdia[\ta_1]\tdia[\pP_1\ppar\pQ]\nfA, \tbox[\ta_1]\tbox[\pP_1]\fA
					}{\vlpr{\dD_1}{}{
						\vdash \tdia[\pP_1\ppar\pQ]\nfA, \tbox[\pP_1]\fA
					}}
				}
			}{\vlhy{\qquad\cdots\qquad}}{
				\vliq{\Wrule}{}{
					\vdash \tdia[\ta_1]\tdia[\pP_1\ppar\pQ]\nfA,\dots,\tdia[\ta_n]\tdia[\pP_n\ppar\pQ]\nfA, \tbox[\ta_n]\tbox[\pP_n]\fA
				}{
					\vlin{\Krulep{\ta_n}}{}{
						\vdash \tdia[\ta_n]\tdia[\pP_n\ppar\pQ]\nfA, \tbox[\ta_n]\tbox[\pP_n]\fA
					}{
						\vlpr{\dD_n}{}{
							\vdash \tdia[\pP_n\ppar\pQ]\nfA, \tbox[\pP_n]\fA
						}
					}
				}
			}
	}$}
\end{equation*}
Although the resulting proof may have infinite depth, it is progressive thanks to guardedness of recursive process definitions and image-finiteness of the semantics.
The same approach can be used to verify that the parallel operator ($\ppar$) is commutative, it is associative, and it has the inactive process ($\pnil$) as a unit, that is, the following are tautologies: 
$$
\tbox[\pP\ppar\pQ]\fA \fequiv \tbox[\pQ\ppar\pP]\fA
\qquad,\qquad
\tbox[(\pP\ppar\pQ)\ppar\pR]\fA \fequiv \tbox[\pP\ppar(\pQ\ppar\pR)]\fA
\qquad,\qquad
\tbox[\pnil\ppar\pP]\fA \fequiv \tbox[\pP]\fA
\mydot
$$
\end{example}

\let\pseq\fatsemi
\def\pseqenc#1#2{ \xi(#1, {#2}) }

\CCS lacks a general operator for composing two processes $\pP$ and $\pQ$ in sequence (a counterpart to `$;$') but it can be implemented as a parallel composition once we inject synchronisations between every exit point ($\pnil$) of $\pP$ and $\pQ$ over a dedicated collection of auxiliary channels ($\tat_1,\dots,\tat_n$).
We denote this encoded sequential composition for \CCS processes as `$\pseq$' to distinguish it from `$;$' of \OPDL.
The operator is given on any $\pP$ and $\pQ$ as follows:
\begin{alignat*}{2}
	\pP\pseq\pQ =  {} & \pres{\tat_1,\dots,\tat_n}{(\pseqenc{\pP}{\tat_1,\dots,\tat_n}\ppar \ntat_1.\cdots\ntat_n.\pQ)} &\qquad& \text{if } \tat_1,\dots,\tat_n \notin \pP\ppar \pQ\\
	\pseqenc{\pnil}{A} = {} & \tat_1.\cdots\tat_n.\pnil && \text{if } A = \tat_1,\dots,\tat_n\\
	\pseqenc{\tat.\pP}{A} = {} & \tat.\pseqenc{\pP}{A} && \text{if } \tat \notin A\\
	\pseqenc{\pP_1\ppar\pP_2}{A} = {} & \pseqenc{\pP_1}{A_1} \ppar \pseqenc{\pP_2}{A_2} && \text{if } A = A_1 \uplus A_2 \\
	\pseqenc{\pP_1+\pP_2}{A} = {} & \pseqenc{\pP_1}{A} + \pseqenc{\pP_2}{A} && \\
	\pseqenc{\pres a \pP}{A} = {} & \pres{a}{\pseqenc{\pP}{A}} && \text{if } \tat \notin A\\
	\pseqenc{\pX}{A} = {} & \pX_{A} && \text{if } X \defeq \pP \text{ and } \pX_A \defeq \pseqenc{\pP}{A}
\end{alignat*}
In the case for $\pnil$, the encoding must synchronise on all channels in the sequence $A$ in the specified order.
In the case for $\pP_1+\pP_2$, the two branches may have different numbers of exit points (e.g., $b.\pnil + (b.\pnil \ppar c.\pnil)$ ) but since either may be selected, the encoding needs to inject synchronisations on the exact same sequence of channels $A$ (hence the need to for multiple channels in the case of $\pnil$).
In the case for $\pP_1\ppar\pP_2$, the overall exit points are the combination of the exit points of $\pP_1$ and $\pP_2$ and thus the sequence of channels $A$ is divided into sequences $A_1$ and $A_2$ whose relative ordering is consistent with $A$ (e.g., $A = a,b,c$ may be split into $A_1 = a,c$ and $A_2 = b$ but not in $A_1 = c,a$ and $A_2 = b$) which we denote as $A = A_1 \uplus A_2$. 
In the case of $X$, the process name is replaced with one where the original body of $X$ is instrumented with the necessary synchronisations at its exit points.
Finally, note that as long as $\pseqenc{P}{\tat_1,\dots,\tat_n}$ is defined, the choice of the specific sequence $\tat_1,\dots,\tat_n$ is immaterial since these channels are restricted and do not occur in $P$ or $Q$.



\begin{example}
	Having established a counterpart for `$;$', we can state the exchange law of sequential and parallel composition characteristic of Concurrent Kleene Algebras for CCS and use OPDL to prove it sound. 
	Specifically, the implication $\tbox[(\pP_1 \pseq \pQ_1) \ppar (\pP_2 \pseq \pQ_2)]\fA \imp \tbox[(\pP_1 \ppar \pP_2) ; (\pQ_1 \ppar \pQ_2)]\fA $ holds for any  $\pP_1$, $\pP_2$, $\pQ_1$ $\pQ_2$, and $\fA$.
	We need two key observations about $\pP \pseq \pQ$ from the semantics: first,
	$\pP\sosto\tl\pP'$ implies $\pP\pseq\pQ\sosto\tl\pP'\pseq\pQ$ and second, $\pP$ must execute completely before $\pQ$ may take a single step.
	Using this observation, the procedure for generating a (productive) proof is similar to ones used in the examples above. 
	For convenience of exposition, assume that neither $\pP_1$ nor $\pP_2$ is equivalent to $\pnil$.
	\begin{equation*}
	\vlderivation{\vlin{\bseqr}{}{
		\vdash \tdia[(\pP_1 \pseq \pQ_1) \ppar (\pP_2 \pseq \pQ_2)]\nfA,\tbox[(\pP_1 \ppar \pP_2) ; (\pQ_1 \ppar \pQ_2)]\fA
	}{
		\vlin{\bosr}{}{
			\vdash \tdia[(\pP_1 \pseq \pQ_1) \ppar (\pP_2 \pseq \pQ_2)]\nfA,\tbox[(\pP_1 \ppar \pP_2)][(\pQ_1 \ppar \pQ_2)]\fA
		}{
			\vlhy{\left\{
				\vlderivation{
					\vlin{\dosr+((n-1)\times \Wrule)}{}{
						\vdash \tdia[(\pP_1 \pseq \pQ_1) \ppar (\pP_2 \pseq \pQ_2)]\nfA,\tbox[\tat_i][(\pP_1^i \ppar \pP_2^i)][(\pQ_1 \ppar \pQ_2)]\fA
					}{
							\vlin{\Krule}{}{
								\vdash \tdia[\tat_i]\tdia[(\pP_1^i \pseq \pQ_1) \ppar (\pP_2^i \pseq \pQ_2)]\nfA,\tbox[\tat_i][(\pP_1^i \ppar \pP_2^i)][(\pQ_1 \ppar \pQ_2)]\fA
							}{
								\vlhy{{\vdash \tdia[(\pP_1^i \pseq \pQ_1) \ppar (\pP_2^i \pseq \pQ_2)]\nfA,[(\pP_1^i \ppar \pP_2^i)][(\pQ_1 \ppar \pQ_2)]\fA}}
							}
					}
				}
			\right\}_{i\in I}
			}
		}
	}}
	\end{equation*}
	where $I=\set{i \mid (\pP_1 \ppar \pP_2) \sosto{\tat_i} (\pP_{1}^i \ppar \pP_2^i)}$.
\end{example}

Although we considered a version of \CCS where infinite behaviours are achieved via recursion, instantiating our results to replication ($\CCS^!$) and iteration ($\CCS^\kstarsymb$) is straightforward.
In particular, the latter corresponds to the settings considered in \cite{PDL:interleaving,BENEVIDES201723}, as discussed in \Cref{ex:DOL:CCSstar}, which sits at the bottom of the expressiveness hierarchy formed by these three approaches \cite{busi:gab:zav:RRI}.
\Cref{thm:CCS:traceEq} subsumes results from \cite{PDL:interleaving,BENEVIDES201723} stating that structural congruence ($\peq$) and strong bisimilarity ($\sim$) are sound w.r.t.~logical equivalence.
Moreover, our treatment of \CCS is standard: parallel composition is a primitive of the calculus whereas in \cite{BENEVIDES201723} it is encoded using choices between sequential programs, an approach that is limited to $\CCS^\kstarsymb$ and results in exponentially larger formulas.

\subsection{Concurrency via out-of-order execution}
\label{sec:chor}

\begin{figure}
	\adjustbox{max width=\textwidth}{$\begin{array}{c|c}
		\mbox{\defn{choreographies}}
		&
		\mbox{\defn{instructions}}
	\\
		\begin{array}{r@{\;}l@{\;}l|l}
			\cC  \coloneqq &
			\cnil	&	\mbox{inactive choreography}	
			&
			\pn(\cC)=\emptyset								
			\\
		\mid&
			\iI ; \cC	&	\mbox{sequential composition}	
			&
			\pn(\cC)=\pn(\iI) \cup \pn(\cC)						
			\\
		\mid&
			\ccond pb {\cC_1}{\cC_2} & \mbox{conditional}	
			& 
			\pn(\cC)=\set{\pp} \cup \pn(\cC_1) \cup \pn(\cC_2) 	
			\\
		\mid&
			\gencall	&	\mbox{call}	
			& 
			\pn(\cC) \text{ where } \cX\defeq\cC
	\end{array}
	&
	\begin{array}{r@{\;}l@{\;}l|c}
	    \iI  \coloneqq &\tgenassign&	\mbox{local assignment}			& 
	 		\pn(\iI)=\set{\pp}								
	 		\\
	 		\mid&\tgencom &	\mbox{communication}			& 
	 		\pn(\iI)=\set{\pp,\pq}							
	 		\\
	 		\mid&\tgensel	&	\mbox{selection}				& 
	 		\pn(\iI)=\set{\pp,\pq}
	    \\
	    \mid&\cont X{\pp}			&	\mbox{(call continuation, runtime)}				& 
	      	\pn(\iI)=\set{\pp}
	 		\\
	 		\mid&\gentest	&	\mbox{test } (\atomTests)	& 
	 		\pn(\iI)=\set{\pp}
			\\
	 		\mid&\ngentest &	\mbox{(negative) test} & 
	 		\pn(\iI)=\set{\pp}
		\end{array} 
	\end{array}
	$}
	\caption{Syntax of choreographies.}
	\label{fig:chor:synt}
\end{figure}

\mathchardef\mhyphen="2D
\def\sloc{\Sigma\mhyphen{\rname{Asg}}}
\def\scom{\Sigma\mhyphen{\rname{Com}}}
\def\stest{\Sigma\mhyphen{\rname{PosTest}}}
\def\sntest{\Sigma\mhyphen{\rname{NegTest}}}
\def\snoop{\Sigma\mhyphen{\rname{SelCall}}}
\def\extrule{\rname{SC}}
\begin{figure*}
	\adjustbox{max width=\textwidth}{$\begin{array}{lrclll}
 		\ratm		&\iI					&\sosto{\tI}	& \pepsi	&
		\\
		\rthen	&\ccond pb {\cC_1}{\cC_2}	&\sosto{\gentest}& \cC_1	&
		\\
		\relse	&\ccond pb {\cC_1}{\cC_2}	&\sosto{\ngentest}& \cC_2	&
		\\
		\rcall	&\pX 							&\sosto{\tgencont} &\cont\cX{\pp_1};\dots;\cont\cX{\pp_n};\cC &
		\mbox{if } \pX\defeq\cC \mbox{ and } \pn(\cX)=\set{\pq,\pp_1,\dots,\pp_n}
		\\
	    \ri	&\iI ; \cC						&\sosto{\tm}	& \cC	&	
    		\mbox{if } \iI\sosto{\tm}\pepsi
		\\
		\rdeli	&\iI ; \cC						&\sosto{\tm}	& \iI;\cC'	&	
		\mbox{if } \cC\sosto{\tm}\cC' \mbox{ and } \pn(\iI) \cap \pn(\tm) = \emptyset
		\\
		\rdelc	&\ccond pb {\cC_1}{\cC_2};\cC	&\sosto{\tm}	& \ccond pb {\cC'_1}{\cC' _2};\cC& 
		\mbox{if } \cC_i\sosto{\tm}\cC'_i \mbox{ and } \pp \notin \pn(\tm) 
	\end{array}
	$}
	\caption{Operational semantics of symbolic choreographies $\OS[\cC]$.
	}
	\label{fig:chor:sem}
\end{figure*}

Choreographies, in general, are coordination plans that define the expected collective behaviour of concurrent and distributed systems \cite{wscdl,bpmn,montesi:book}.
In the programming paradigm of Choreographic Programming, choreographies are programs that describe the interaction and local computation of processes participating in the system and that can be compiled to executable implementations for each participant (a procedure called endpoint projection) \cite{montesi:book}.
The standard way of supporting concurrency in choreographic programming is to execute independent instructions out of order w.r.t. their syntactic position in the program.
This is an example of a technique found in many programming languages, compilers, and CPUs, to parallelise the execution of code written as sequential.

We consider a reference theory of choreographic programming from \cite{montesi:book} that includes out-of-order execution, recursion, and local computations. 
For conciseness, we do not include the details of the memory model in this presentation and instead treat local computations symbolically (hence the name `\emph{symbolic choreographies}').
To recover the full specification of the language from \emph{op.~cit.}, one can define an operational semantics for the memory model and then `synchronise' it with the one for symbolic choreographies similarly to \cite{GMG18}. 
Except for this presentational difference, our definitions are essentially as for the tail-recursive language given in \cite{montesi:book}.

Choreographic programs (or just choreographies for short) are described by the terms in \Cref{fig:chor:synt}.
Their semantics is given as the LTS induced by the derivation rules in \Cref{fig:chor:sem}.
Both definitions are parametrised in a shared language for expressions that are evaluated by processes locally (i.e., without accessing the state of other processes) and which are used to model local computation.
Both definitions are also parametrised in a shared set of \defn{choreography definitions} (objects of the form $\pX\defeq\cC$ where $\pn(C)$ is not empty) which are used to express infinite behaviours via recursion. 

Instructions are performed atomically and describe interactions among processes and, when included in the mode, with the memory.
An instruction $\gensel$ describes the communication of a constant value $\albl$ used to communicate a local \defn{selection} from process $\pp$ to process $\pq$. 
An instruction $\gencom$ describes the \defn{communication} of a value computed by $\pp$ evaluating the expression $e$ to $\pq$ which stores it into its local variable $x$.
An instruction $\genassign$ represents the \defn{local assignment} at $\pp$.
An instruction $\gentest$ denotes a \defn{test} where $\pp$ evaluates the condition $b$ proceeding only if successful.
Likewise, $\ngentest$ represents a negative test.
Test instructions are not part of the language \cite{montesi:book}; we decided to include them to illustrate the use of tests in $\POL$.

The term $\cnil$ denotes the \defn{inactive choreography} and has no transitions.
A term $\iI;\cC$ denotes the \defn{sequential composition} of the instruction $\iI$ (discussed below) and choreography $\cC$.
The resulting choreography can execute $\iI$ before continuing as $\cC$ via rule $\ratm$ (similarly action prefixes and rule $\prer$ in \CCS) or delay $\iI$ by executing a transition of $\cC$ that does not involve any of the processes occurring in $\iI$ via rule $\rdeli$.
(Labels are instructions and thus carry all the information required to determine, using the function $\pn$, which processes are involved in a transition.)
This relaxed semantics for sequential composition is an instance of out-of-order execution of instructions and introduces concurrency in the model by allowing programs to interleave the execution of instructions at distinct processes (while instructions within the same process remain sequential).
A term $\gencond$ denotes a \defn{conditional} where either $\cC_1$ or $\cC_2$ is chosen depending on whether the test $b$ performed by process $\pp$ is successful (rule $\rthen$) or not (rule $\relse$).
A term $\cX$ denotes a recursive \defn{call} to the choreography definition $\cX \defeq \cC$. 
Its semantics is rather more involved than recursive process calls in \CCS because in recursive choreography calls  multiple processes can join the call concurrently without coordination (this is to capture the decentralised nature of the underlying process model).
The standard device used to achieve this behaviour is the (\defn{runtime}) instruction $\cont X{\pp}$, a syntactic gadget introduced by the first unfolding of a call and used to track processes have yet to join (and prevent erroneous applications of $\rdeli$). 
The resulting semantics is finitely branching, we assume each choreography definition involves finitely many processes (i.e., $\pn(X)$ is finite for any $\cX \defeq \cC$).

The operational semantics $\OS[\cC]$ for this theory of choreographic programming 
has choreographies as programs, 
the instructions of the form $\gentest$ or $\ngentest$ as tests, 
and the set of instructions. 
By \Cref{thm:DOL:traceEq}, logical equivalence in ${\POL[{\OS[\cC]}]}$ captures trace equivalence for choreographies.

\begin{corollary}
	\label{thm:chor:traceEq}
$\proves[{\POL[{\OS[\cC]}]}] \tbox[\cC_1]\fA\fequiv\tbox[\cC_2]\fA
	\mbox{ iff }
	\cC_1\treq\cC_2
	\mydot$
\end{corollary}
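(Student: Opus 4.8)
The plan is to obtain \Cref{thm:chor:traceEq} as a direct instance of \Cref{thm:DOL:traceEq} for the particular operational semantics $\OS[\cC]$ of symbolic choreographies. First I would make the instantiation of \Cref{def:OPDL} explicit: the set of programs $\progSet$ is the set of choreographies (together with the instructions of \Cref{fig:chor:synt}, which occur as transition labels), the set of tests $\atomTests$ consists of the instructions of the form $\gentest$ and $\ngentest$, the labels $\labSet$ are the instructions, and the distinguished terminated program $\pepsi$ is taken as in \Cref{fig:chor:sem}. The set of propositional atoms $\atomSet$ is chosen to contain, for each process $\pp$ and each local boolean expression $b$, an atom recording that $b$ evaluates to true at $\pp$, so that the side condition $\atomTests\subseteq\set{\tfA\mid\fA\in\Pf}$ of \Cref{def:OPDL} holds and the guards appearing in conditionals and tests are internalised in the logic without an external evaluation language.

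Second, before invoking \Cref{thm:DOL:traceEq}, I would discharge the two standing hypotheses that \Cref{def:OPDL} places on $\OS[\cC]$. The terminated program must have no outgoing transitions: this is immediate from \Cref{fig:chor:sem}, where neither $\pepsi$ nor $\cnil$ ever appears as the source of a transition. The semantics must also be finitely branching, so that the axiom $\axA\OS$, and hence $\POL[{\OS[\cC]}]$, is well-defined (cf.~\Cref{rem:finiteBranching}); I would prove this by structural induction on the program. Rules $\ratm$, $\rthen$, $\relse$ and $\rcall$ each contribute at most one successor --- with $\rcall$ yielding a well-formed finite continuation precisely because $\pn(\cX)$ is assumed finite for every definition $\cX\defeq\cC$ --- while rules $\ri$, $\rdeli$ and $\rdelc$ either return the continuation directly or unfold a transition of a proper syntactic subterm, so finite branching follows from the induction hypothesis. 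This merely re-records the observation, already noted when the language was introduced, that the induced LTS is finitely branching.

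With these checks in place, \Cref{thm:DOL:traceEq} applies verbatim with $\pa\coloneqq\cC_1$ and $\pb\coloneqq\cC_2$, and yields that $\proves[{\POL[{\OS[\cC]}]}]\tbox[\cC_1]\fA\fequiv\tbox[\cC_2]\fA$ holds for every formula $\fA$ if and only if $\cC_1\treq\cC_2$, which is the claim. I do not expect a genuine obstacle: the entire content of the corollary is the reduction to \Cref{thm:DOL:traceEq}, and the only step that warrants a little care is the finite-branching induction, because of the out-of-order rules $\rdeli$ and $\rdelc$ and the runtime call-continuation gadget $\cont X\pp$ --- but these rules recurse only into proper subterms, so the induction goes through routinely.
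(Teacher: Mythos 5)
Your proposal is correct and follows exactly the paper's route: the corollary is obtained as a direct instance of \Cref{thm:DOL:traceEq} for the operational semantics $\OS[\cC]$, with choreographies as programs, instructions as labels, and $\gentest$/$\ngentest$ as tests. The additional checks you spell out (no transitions from the terminated program, finite branching given that $\pn(\cX)$ is finite for every definition, and the choice of atoms internalising the guards) are precisely the standing assumptions the paper records informally before invoking the general theorem, so nothing further is needed.
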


Likewise, we instantiate $\POL$ to the theory of choreographic programming with memory updates simply following the steps above while pairing choreographies and memory configurations.

\begin{example}
\def\comone{\iI_1}
\def\comtwo{\iI_2}
\def\tcomone{\tI_1}
\def\tcomtwo{\tI_2}
\begin{figure}
	$$\begin{array}{c}
		\vlderivation{
			\vliin{\land}{}{\vdash 
				\tbox[\comone;\comtwo;\cC]\fA 
				\fequiv 
				\tbox[\comtwo;\comone;\cC]\fA
			}{
				\vlin{\lor}{}{\vdash 
					\tdia[\comone;\comtwo;\cC]\nfA 
					\lor 
					\tbox[\comtwo;\comone;\cC]\fA
				}{
					\vliin{\bosr}{}{\vdash
						\tdia[\comone;\comtwo;\cC]\nfA 
						,
						\tbox[\comtwo;\comone;\cC]\fA
					}{
						\vlpr{\dD_1}{}{
							\tbox[\tcomone]\tbox[\comtwo;\cC]\fA
							,
							\tdia[\comtwo;\comone;\cC]\nfA 
						}
					}{
						\vlpr{\dD_2}{}{
							\tdia[\comone;\comtwo;\cC]\nfA 
							,
							\tbox[\tcomtwo]\tbox[\comone;\cC]\fA
						} 
					}
				}
			}{
				\vlin{\lor}{}{\vdash 
					\tbox[\comone;\comtwo;\cC]\fA 
					\lor 
					\tdia[\comtwo;\comone;\cC]\nfA
				}{
					\vliin{\bosr}{}{\vdash
						\tbox[\comone;\comtwo;\cC]\fA
						,
						\tdia[\comtwo;\comone;\cC]\nfA 
					}{
						\vlpr{\dD_1}{}{
							\tbox[\tcomone]\tbox[\comtwo;\cC]\fA
							,
							\tdia[\comtwo;\comone;\cC]\nfA 
						}
					}{
						\vlpr{\dD_2}{}{
							\tdia[\comone;\comtwo;\cC]\nfA 
							,
							\tbox[\tcomtwo]\tbox[\comone;\cC]\fA
						} 
					}
				}
			}
		}
	\\\\[10pt]
		\text{where }\qquad
		\begin{cases}
			\vlderivation{\vlpr{\dD_1}{}{\vdash\fGam_1}} = 
		&
			\vlderivation{
				\vliq{\dosr+\Wrule}{}{\vdash 
					\tdia[\comtwo;\comone;\cC]\nfA 
					,
					\tbox[\tcomone]\tbox[\comtwo;\cC]\fA
				}{
					\vlin{\AXrule}{}{\vdash
						\tdia[\tcomone]\tdia[\comtwo;\cC]\nfA 
						,
						\tbox[\tcomone]\tbox[\comtwo;\cC]\fA
					}{\vlhy{}}
				}
			}
		\\	
			\vlderivation{\vlpr{\dD_2}{}{\vdash\fGam_2}} = 
		&
			\vlderivation{
				\vliq{\dosr+\Wrule}{}{\vdash 
					\tdia[\comone;\comtwo;\cC]\nfA 
					,
					\tbox[\tcomtwo]\tbox[\comone;\cC]\fA
				}{
					\vlin{\AXrule}{}{\vdash
						\tdia[\tcomtwo]\tdia[\comone;\cC]\nfA 
						,
						\tbox[\tcomtwo]\tbox[\comone;\cC]\fA
					}{\vlhy{}}
				}
			}
		\end{cases}
	\end{array}$$

	\caption{
		The derivation proving $\comone;\comtwo;\cC\treq \comtwo;\comone;\cC$ whenever $\pn(\comtwo)\cap\left(\set{\pp}\cup\pn(\comone)\right)=\emptyset$ (because of the out-of-order execution).
	}
	\label{fig:ChorEx}
\end{figure}

Consider the choreographies $(\comone;\comtwo)$ and $(\comtwo;\comone)$.
The communications are the same save for their syntactic position and, since they involve distinct processes, out-of-order execution (rule $\rdeli$) ensures that these can fire concurrently. 
Indeed, these two choreographies are trace equivalent as shown, invoking \Cref{thm:chor:traceEq}, by the derivation reported in \Cref{fig:ChorEx}.
\begin{figure}
	\def\comone{\iI_1}
	\def\comtwo{\iI_2}
	\def\tcomone{\tI_1}
	\def\tcomtwo{\tI_2}
	$$\begin{array}{c}
		\vlderivation{
			\vlin{\lor}{}{\vdash 
				\tbox[\ccond pb {(\comone;\comtwo;\cC_1)}{(\comtwo;\cC_2)}]\fA 
				\imp
				\tbox[\comtwo;\ccond pb {\comone}{\cnil}]\fA
			}{
				\vlin{\bosr}{}{\vdash
					\tdia[\ccond pb {(\comone;\comtwo;\cC_1)}{(\comtwo;\cC_2)}]\nfA 
					, 
					\tbox[\comtwo;\ccond pb {\comone}{\cnil}]\fA
				}{
					\vliq{\dosr+2\times\Wrule}{}{\vdash 
						\tdia[\ccond pb {(\comone;\comtwo;\cC_1)}{(\comtwo;\cC_2)}]\nfA 
						, 
						\tbox[\tcomtwo]\tbox[\ccond pb {\comone}{\cnil}]\fA
					}{
						\vlin{\AXrule}{}{
							\tdia[\tcomtwo]\tdia[\ccond pb {(\comone;\cC_1)}{\cC_2}]\nfA 
							, 
							\tbox[\tcomtwo]\tbox[\ccond pb {\comone}{\cnil}]\fA
						}{\vlhy{}}
					}
				}
			}
		}
	\\\\[10pt]
		\vlderivation{
			\vlin{\lor}{}{\vdash 
				\tbox[\comtwo;\ccond pb {(\comone;\cC_1)}{\cC_2}]\fA
				\imp
				\tbox[\ccond pb {(\comtwo;\comone;\cC_1)}{(\comtwo;\cC_2)}]\fA 
			}{
				\vliiin{\bosr}{}{\vdash
					\tdia[\comtwo;\ccond pb {(\comone;\cC_1)}{\cC_2}]\nfA
					, 
					\tbox[\ccond pb {(\comtwo;\comone;\cC_1)}{(\comtwo;\cC_2)}]\fA 
				}{
					\vlpr{\dD_1}{}{\vdash\fGam_1}
				}{
					\vlpr{\dD_2}{}{\hskip8em\vdash\fGam_2\hskip8em}
				}{
					\vlpr{\dD_3}{}{\vdash\fGam_3}
				}
			}
		}
	\\\\[10pt]
		\text{where\qquad }
		\begin{cases}
			\vlderivation{\vlpr{\dD_1}{}{\vdash\fGam_1}} = 
		&
			\vlderivation{
				\vliq{\dagger}{}{\vdash 
					\tdia[\comtwo;\ccond pb {(\comone;\cC_1)}{\cC_2}]\nfA
					, 
					\tbox[\tcomtwo]\tbox[\ccond pb {(\comone;\cC_1)}{\cC_2}]\fA 
				}{
					\vlin{\AXrule}{}{
						\tdia[\tcomtwo]\tdia[\ccond pb {(\comone;\cC_1)}{\cC_2}]\nfA
						, 
						\tbox[\tcomtwo]\tbox[\ccond pb {(\comone;\cC_1)}{\cC_2}]\fA 
					}{\vlhy{}}
				}
			}
		\\
			\vlderivation{\vlpr{\dD_2}{}{\fGam_2}} = 
		&
			\vlderivation{
				\vliq{\dosr+2\times\Wrule}{}{\vdash 
					\tdia[\comtwo;\ccond pb {(\comone;\cC_1)}{\cC_2}]\nfA
					, 
					\tbox[\gentest]\tbox[\comtwo;\comone;\cC_1]\fA 
				}{
					\vlin{\AXrule}{}{
						\tdia[\gentest]\tdia[\comtwo;\comone;\cC_1]\fA 
						, 
						\tbox[\gentest]\tbox[\comtwo;\comone;\cC_1]\fA 
					}{
						\vlhy{}
					}
				}
			}
		\\
			\vlderivation{\vlpr{\dD_3}{}{\vdash\fGam_3}} = 
		&
			\vlderivation{
				\vliq{\dosr+2\times\Wrule}{}{\vdash 
					\tdia[\comtwo;\ccond pb {(\comone;\cC_1)}{\cC_2}]\nfA
					, 
					\tbox[\ngentest]\tbox[\comtwo;\cC_2]\fA 
				}{
					\vlin{\AXrule}{}{
						\tdia[\ngentest]\tdia[\comtwo;\cC_2]\fA 
						, 
						\tbox[\ngentest]\tbox[\comtwo;\cC_2]\fA 
					}{
						\vlhy{}
					}
				}
			}
		\end{cases}
	\end{array}
	$$
	\caption{
		The derivations proving $\ccond pb {(\comtwo;\comone;\cC_1)}{(\comtwo;\cC_2)}\treq\comtwo;\ccond pb {\comone;\cC_1}{\cC_2}$ whenever $\pn(\comtwo)\cap\left(\set{\pp}\cup\pn(\comone)\right)=\emptyset$ (because of the out-of-order execution).
	}
	\label{fig:ChorExIf}
\end{figure}
A similar case that illustrates concurrent conditionals and instructions is shown in \Cref{fig:ChorExIf}.
\end{example}

%
%

\section{Conclusion}\label{sec:perp}

We have extended \PDL by decoupling reasoning on programs from reasoning on traces, bridged by a new axiom that integrates the two aspects.
This decoupling allowed us to create an axiom scheme parameterised on the operational semantics of the programs under consideration.
The result, $\POL$, subsumes a number of previous extensions of \PDL by seeing them as particular instantiations of this schema.
Furthermore, $\POL$ can be instantiated for programming languages out of reach of previous approaches, because of problematic standard features such as recursion, interleaving, or out-of-order execution.
Thus, we are hopeful that $\POL$ can be a useful tool for the future study of dynamic logic and formal methods.
We mention next a few interesting perspectives.

$\POL$, like standard \PDL, captures trace equivalence. Trace equivalence can be used to capture finer equivalences by decorating traces with information about choices \cite{vG90,JCDN13}, which for example was used in the context of \PDL and a simpler iterative process calculus (\CCSs) in~\cite{BENEVIDES201723}.
We plan to investigate this in the more general setting of $\POL$.
	We can already observe the following fundamental difference between our approach and the one of \emph{Concurrent Kleene Algebra} (CKA) \cite{HSMSZ16}: in CKA, the stuck program behaves as the absorbing element for the parallel composition (i.e., $p\parallel\mathbf{0}=\mathbf{0}$), while in our approach we are able to rule out this equivalence which is not necessarely sound for the semantics of the language under consideration, as, e.g., in the semantics of \CCS, where the set of traces of $\pnil \ppar A$ is not equal to the set of traces of $\pnil$.

Having captured \CCS, a natural next step would be investigating how to capture even richer process calculi. 
The prime example would be the $\pi$-calculus~\cite{SW01}, which allows for dynamically creating and transmitting actions.
Work on \PDL for the $\pi$-calculus covers iteration \cite{PDL:pi}, but neither of the standard constructs for infinite behaviours, i.e., recursion and replication.
While $\POL$ can be directly instantiated with the standard $\pi$-calculus (retracing the steps for \CCS), the resulting notion of equivalence merits attention: the $\pi$-calculus has a richer behavioural theory than \CCS, which for example introduces the problem of equating traces up to action equivalence.

Likewise, there are numerous choreographic programming languages that would be interesting to study in $\POL$, because they pose additional challenges on top of out-of-order execution. Examples include dynamic process spawning~\cite{CM17}, parametric recursive procedures~\cite{CM17,montesi:book}, and higher-order composition~\cite{GMP24,CGLMP23,SKK23,HG22}.
As we mentioned, a key aspect of choreographic programming is endpoint projection: a mechanical mapping of choreographies into distributed implementations, usually given in terms of a process calculus.
Proving that endpoint projection is correct (an operational correspondence result) requires tedious work~\cite{CMP23}: $\POL$ could provide a unifying framework for these proofs, obtained by instantiating it with the union of the choreographic and target process languages.
Adopting this approach might make proofs more robust and reusable.

$\POL$ inherits the feature from \PDL that Hoare clauses $\set{\fA} \alpha \set{\fB}$ can be encoded as $\fA\imp\tbox[\alpha]\fB$.
All rules in Hoare logic are then derivable.
Thus, for example, our instantiation of $\POL$ with choreographic programming yields a direct generalisation of the previous development of a Hoare logic for choreographies~\cite{CGMP23}, providing a basis for its extension to more sophisticated languages.

Another line of future work is the study of the decision problem in (instantiations of) $\POL$.
In $\PDL$ the so-called \emph{small world model} is constructed using (the finiteness of) the Fisher-Ladner closure of a formula and provides a naive deterministic decidability procedure for the satisfability problem.
In $\POL$ the Fisher-Ladner closure is not guaranteed to be finite, an aspect that depends on the operational semantics under consideration.
In general, as shown in \cite{harel1996more}, any non-regular program add expressiveness power to \PDL, and the decision problem for a \PDL in which programs may have non-regular set of traces is known to be already $\Pi^1_1$-complete \cite{harel1983propositional}.
The validity problem for context-free \PDL is undecidable because so is the equivalence problem for general context-free languages \cite{hoperoft1979introduction,kozen2007automata}.
This is not surprising, since logical equivalence in \PDL captures trace equivalence.
In concurrency theory, there is an extensive literature on the relation between the design of concurrent languages and decidability of different program equivalences~\cite{Aceto_Ingolfsdottir_Srba_2011,busi:gab:zav:RRI}.
The methods studied therein might be useful for exploring decision problems in $\POL$, for example by establishing properties on specific operational semantics and how they are defined (rule formats, etc.).

Finally, it would be interesting to model a similar separation between trace reasoning and the operational semantics of programs in algebraic approaches for proving program equivalence.
For this we foresee the possibility of defining structures in which an operational semantics is `nested' inside a Kleene algebra.
Intuitively, such structures should be defined as Kleene algebras freely generated by a set of programs $\progSet$ and a set of atomic actions $\atomProg$ provided with a relation 
$\mathcal O \subseteq \progSet \times \atomProg \times \progSet$ (in general, a coalgebra $\mathcal O \subseteq \progSet \to \mathcal B(\atomProg \times \progSet$) representing the operational semantics of the set of programs.

The technique used in papers \cite{e-passport,e-passport:journal}, addressing the longstanding problem of unlinkability for e-passports by proving that the e-passport is not unlinkable, uses weak bisimilarity. For this purpose, the authors employ a technique based on a modal logic developed in \cite{hor:ahn:tiu:open}, which allows them to go beyond the expressiveness of process algebras -- currently insufficient to represent name passing. 
Thanks to the contribution in our paper, such modal logic has a straightforward interpretation: it corresponds to an intuitionistic version of $\POL$ instantiated over the internal $\pi$-calculus. By `intuitionistic', we mean the logic obtained by replacing the axiom $\mathbf{PL}$ in \Cref{fig:axPDL} with an axiomatization of intuitionistic logic. 
Note that the semantics of the box and diamond modalities in this logic are tuned on the standard definitions in the context of intuitionistic modal logic (see \cite{simpson:phd}, or \cite{das:mar:Imodal} for a more concise discussion on the different semantics between intuitionistic and constructive modal logics).

\begin{acks}
	Work co-funded by Villum Fonden (grant no. 50079) and the European Union (ERC, CHORDS, 101124225). Views and opinions expressed are however those of the authors only and do not necessarily reflect those of the European Union or the European Research Council. Neither the European Union nor the granting authority can be held responsible for them.
	
	We thank 
	Anupam Das,
	Sonia Marin, and
	Paaras Padhiar
	for the discussions on the proof of cut-elimination, and the remark on the need of restriction on the tests to make the cut-elimination proof work (see \Cref{rem:tests}).
	We would also thank Davide Catta for the fruitful discussion on the definition of cut-elimination strategy.
\end{acks}

\bibliographystyle{ACM-Reference-Format}
\bibliography{main}


\end{document}